\documentclass[sigconf]{acmart}
\settopmatter{printacmref=false} 
\renewcommand\footnotetextcopyrightpermission[1]{} 
\pagestyle{plain} 

\usepackage{booktabs} 
\usepackage{ifthen}
\newboolean{picturepage}
\setboolean{picturepage}{false}

\usepackage{amsfonts}
\usepackage{xspace}
\usepackage{amsmath, amsthm, amssymb}
\usepackage{subcaption}
\usepackage{graphicx}
\usepackage{url}
\usepackage{hyperref}
\usepackage{algorithm,algpseudocode}
\usepackage{graphicx}
\usepackage{subcaption}
\usepackage{multirow}

\newtheorem{lemma}{Lemma}

\newtheorem{theorem}{Theorem}
\newtheorem{observation}{Observation} 
\theoremstyle{definition}
\newtheorem{definition}{Definition}
\newtheorem{construction}{Construction}

\usepackage{comment}

\newcommand{\y}[1]{{\color{black} #1}\normalcolor}
\newcommand{\jj}[1]{{\color{black} #1}\normalcolor}
\newcommand{\dd}[1]{{\color{black} #1}\normalcolor}
\newcommand{\jjw}[1]{{\color{black} #1}\normalcolor}

\algdef{SE}[VARIABLES]{Variables}{EndVariables}
   {\algorithmicvariables}
   {\algorithmicend\ \algorithmicvariables}
\algnewcommand{\algorithmicvariables}{\textbf{local variables}}

\settopmatter{printfolios=true}

\begin{document}
\title{Distributed Transactional Systems Cannot Be Fast}

\author{Diego Didona$^1$, Panagiota Fatourou$^2$, Rachid Guerraoui$^1$,\\Jingjing Wang$^1$, Willy Zwaenepoel$^{1,3}$\\~\\
$^1$ EPFL; $^2$ FORTH, ICS $\&$ University of Crete, CSD; $^3$ University of Sydney}

\renewcommand{\shortauthors}{D. Didona et al.}

\begin{abstract}
We prove that no fully transactional system can provide fast read transactions
(including read-only ones that are considered the most frequent in practice).
Specifically, to achieve fast read transactions, the system has to give up support of 
transactions that write more than one object.
We prove this impossibility result for distributed storage systems that are causally consistent, 
i.e., they do not require to ensure any strong form of consistency. Therefore, our result 
holds also for any system that ensures a consistency level stronger than causal consistency, 
e.g., strict serializability. 
The impossibility result holds 
even for systems that store only two objects
(and support at least two servers and at least four clients).
It also holds for systems that are partially replicated. 
Our result justifies the design choices of state-of-the-art distributed transactional systems
and insists that system designers should not put more effort 
to design fully-functional systems that support both fast read transactions and
ensure causal or any stronger form of consistency.
\end{abstract}


\maketitle
\pagenumbering{arabic} 
\section{Introduction}
\label{sec:intro}
Transactions represent a fundamental abstraction of distributed storage systems, 
as they facilitate the task of building correct applications. For this reason, distributed transactional storage systems are widely adopted in production environments~\cite{cosmosdb,voltdb,sana,cockroachdb,coherence,mysql,qiao_brewing_2013} and actively researched in academia~\cite{Corbett:2013,Wei:2015,Mu:2014,calvin_thomson_2012,Zhang:2015}.
Because many  applications exhibit read-dominated workloads~\cite{Nishtala:2013,Atikoglu:2012,Bronson:2013,Noghabi:2016}, 
read-only transactions are a particularly important building block 
of such systems.
Hence, improving the performance of distributed read-only transactions
has become a key requirement for modern such systems 
and a much investigated research topic~\cite{Corbett:2013,lu_snow_2016,lloyd_stronger_2013}.
To this end, the notion of {\em fast} read-only transactions 
has been introduced in~\cite{lu_snow_2016} and studied in subsequent papers~\cite{Konwar:2018,tomsic_2018,Didona:2018}.
 A fast read-only transaction 
satisfies the following three desirable properties~\cite{lu_snow_2016}:
it completes in one round of communication ({\em one-round}), 
it \dd{does} not rely on blocking mechanisms ({\em nonblocking}),
and each server communicates to the client only one value for each object that it stores locally and is being read  ({\em one-value}). 
Unfortunately, despite the huge effort put on designing efficient distributed transactional systems,
read-only transactions in existing systems still suffer from performance limitations. 
For example, systems like Spanner~\cite{Corbett:2013}, DrTM~\cite{Wei:2015}, 
RoCoCo~\cite{Mu:2014} implement read-only transactions that may require multiple rounds 
of communication, or rely on blocking mechanisms.

In systems that implement strong consistency, e.g., serializability~\cite{bernstein_concurrency_1987}, 
transactions facilitate the task of writing distributed applications 
by giving the illusion that concurrent operations take place sequentially. 
However, strong consistency comes at such a high cost~\cite{Bailis:2013,osdi}
that, over the last years, a flurry of systems has abandoned strong consistency 
in favor of weaker consistency models~\cite{DeCandia:2007,Cooper_PNUTS_2008,lloyd_settle_2011,qiao_brewing_2013}.
Among them, causal consistency~\cite{ahamad_causal_1995} has garnered much attention, 
because it was expected to hit a sweet spot in the performance versus ease-of-programming trade-off.
Causal consistency has intuitive semantics and
eschews the synchronization that is needed to achieve strong consistency in the presence of replicas. 
It is also the strongest consistency level that can tolerate network partitions 
without blocking operations~\cite{mahajan_consistency_2011,Attiya:2015}.

As expected, existing causally consistent storage systems achieve higher performance in comparison to
strong consistency systems~\cite{lloyd_settle_2011,lloyd_stronger_2013,akkoorath_cure_2016,du_gentlerain_2014}.
However, 
causally consistent read-only transactions still suffer from latency overheads.
In fact, 
state-of-the-art causally consistent storage systems 
either do not support fast read-only transactions~\cite{du_gentlerain_2014,mehdi_occult_2017,akkoorath_cure_2016,Spirovska:2018}
(i.e., they do not exhibit all three desirable properties) or they are of restricted functionality 
by not providing multi-object write transactions~\cite{lu_snow_2016}.

\vspace{-9pt}
~\\\noindent{\bf Contributions.} In this paper, 
we present a result proving a fundamental limitation of transactional systems.
Specifically, our impossibility result states that 
no fully-functional, causally consistent distributed transactional system can provide fast read-only transactions
(and therefore also fast read transactions).  
Specifically, to achieve fast read transactions, the system has to give up support of multi-object
write transactions, i.e., it can only support transactions that write at most one object.
This result unveils an important trade-off between 
the latency attainable by read-only transactions and the functionality provided by a distributed storage system. 
It also shows that the inefficiency of the existing systems to achieve all the desirable properties (as described above)
is not a coincidence.

Most theoretical results considered so far serializable transactions 
instead of causally consistent that we consider here, and
this work includes a formalization of 
causally consistent transactional systems which is interesting in its own right.  
Our result holds for any system that ensures stronger consistency than causal consistency. 
Moreover, our result is relevant for the broad class of systems that use causal consistency as a building block 
to achieve their target consistency level~\cite{balegas_indigo_2015} or that implement hybrid consistency models 
which include causal consistency~\cite{osdi,Li:2014,Ardekani:2014,Terry:2013}. 
The impossibility result holds for any system that 
supports at least two servers and at least four clients.
It  holds even for systems that store only two objects
(each in a different server).  
We prove that the impossibility result also holds for systems that are partially replicated. 

To prove our impossibility result, we construct a troublesome infinite execution
in which a write-only transaction that is executed solo never manages 
to make the values it writes visible.
To do so, we 
 inductively construct an infinite number of non-empty prefixes 
of the troublesome execution and prove that the written values are
not yet visible after each prefix has been executed; specifically, some server
has to send at least one more message before the values become visible. 
We argue this using indistinguishability arguments~\cite{Attiya:2004:DCF:983102,Lynch:1996:DA:525656}. 
The fact that, on the one hand, the constructed execution 
is infinite, and, on the other, that we focus on causal consistency which is
a rather weak consistency  condition, introduces complications 
that we have to cope with to get the proof. 

In the case where 
the system has more than two servers, an extra challenge is 
to cope with the chains of messages through which information may be disseminated
between the servers that store the written objects. This complicates the
construction of the executions that we prove to be indistinguishable 
from the troublesome execution. 
To get the impossibility result for the case of a partially replicated system, an additional
complication is that we have to construct an infinite sequence of server ids.  
These are the servers that send the necessary messages in each step 
for the induction to work. We also have to 
capture the fact that more than one servers may now respond to the same
read request of a client. Due to lack of space, the general proof 
is provided in an attached appendix.

We study the limits of our impossibility result in all different premises. 
We show that  
if we relax any of the considered properties, 
then the impossibility result no longer holds.

Our impossibility result sheds light on some of the design decisions of recent systems
and provides useful information to system designers. Specifically, they should not put 
more effort to devise a fully-functional system that supports both fast read transactions and
ensures causal consistency (or any stronger consistency level). 

\noindent{\bf Structure of the paper.} Section 2 provides the model and useful definitions. 
Section 3 presents our impossibility result and discusses its limits in all different premises. 
Section 4 discusses related work. Section 5 provides some conclusions.

\section{Model and Definitions}
\label{sec:model-and-def}
\noindent{\bf Storage system.} We consider a distributed storage system which stores a finite number of objects. 
There are $m > 1$ servers in the system. Each server stores a non-empty set
of these objects.
For simplicity, we assume that the set of objects stored in servers 
are disjoint.
(Our result holds even if the system 
is {\em partially replicated}, i.e., if these sets are different but not disjoint,
and none of them contains all the objects.)

\noindent{\bf Transactions.} An arbitrarily large number of clients may read and/or write one or more objects 
by  executing \emph{transactions}. 
To prove our impossibility results, it is enough to focus on {\em static} transactions whose read-sets and write-sets are known 
from the beginning of the execution. (It follows that our impossibility result holds for systems of dynamic transactions as well.) 
A (static) transaction $T = (R_T, W_T)$ reads the objects in its {\em read-set}, $R_T$, 
and writes the objects in its {\em write-set}, $W_T$. 
If $W_T = \emptyset$, $T$ is called a {\em read-only} transaction, whereas
if $R_T = \emptyset$, $T$ is a {\em write-only} transaction.  
We denote by $r(X)x$ a read on object $X$ which returns {\em value} $x$
and by $w(X)x$ a write of {\em value} $x$ to object $X$. 
Also, we denote by $r(X)*$ a read on object $X$ when the return value is unknown (with symbol $*$ as a place-holder).
Reads and writes on objects
are called {\em object operations}.

In typical deployments, there are many more clients than servers. Hence, 
allowing server-to-client out-of-band communication  would result in nonnegligible overhead 
on the servers,  which would suffer from reduced  scalability and performance.
For this reason, we naturally assume that to execute a transaction, 
a client can communicate with the servers (but not with other clients)
and a server communicates with a client only to respond to a client's read or write {\em request}. 
We find evidence of the relevance of this assumption in large-scale production systems, 
such as Facebook's data platform~\cite{Nishtala:2013}, 
and in emerging systems~\cite{Lim:2014,Jin:2017,Didona:2019} and architectures~\cite{Li:2015} 
for fast query processing,  where 
no per-client states are maintained to avoid the corresponding overheads and to achieve the lowest latency.

\noindent{\bf System model.} The system is \emph{asynchronous}, i.e., the delay on message transmission 
can be arbitrarily large and the processes do not have access to a global clock. 
The system is modelled as an undirected graph in the standard way~\cite{Attiya:2004:DCF:983102,Lynch:1996:DA:525656}.
Each node of the graph represents a process (i.e., a client or a server)
whereas links connect every pair of processes. Each process is modelled as a state machine
with its state containing a set of income and outcome buffers~\cite[Ch. 2]{Attiya:2004:DCF:983102}\footnote{
There is one income and one outcome buffer for each link incident to each process. 
Income and outcome buffers store the messages that are sent or received through the corresponding link, respectively. }. 
Links do not lose, modify, inject, or duplicate messages.

\noindent{\bf Operation executions.} An {\em implementation} of a storage system provides algorithms, for each process, 
to execute reads and writes in the context of transactions. 
A {\em configuration} represents an instance of the system at some point in time. 
In an {\em initial} configuration 
$Q_{in}$,  
all processes are in initial states and all buffers are empty (i.e., 
no message is in transit). There are two kinds of {\em events} in the system: 
(1) a {\em computation step} taken by a process, in which the process
reads all messages residing in its income buffers, performs some local computation and may send 
(at most) one message to each of its neighboring processes, and (2) a {\em delivery} event, where a message
is removed from the outcome buffer of the source and is placed in the income buffer of the destination. 
An {\em execution} is a sequence of events (we assume that an execution also includes 
the invocations and responses of transactions, as well as the invocations and responses of object operations). 
An execution $\alpha$ is {\em legal} starting from a configuration $C$, 
if, for every process $p$, every computation step taken by $p$ in $\alpha$ 
is compatible to $p$'s state machine (given $p$'s state in $C$)
and all messages sent are eventually received.
Since the system is asynchronous,
the order in which the events appear in an execution 
is assumed to be controlled by an {\em adversary}.
A {\em reachable} configuration is a configuration that results from the application of a legal finite execution
starting from an initial configuration. 
Given a reachable configuration $C$, we say that a computation step $s$ by a process $p$ {\em eventually occurs}, 
if in every legal execution starting from $C$ (in which $p$ takes a sufficient number of steps), $p$ 
executes $s$. 
For every reachable configuration $C$ and every legal execution $\alpha$ from $C$,
we denote by $RC(C,\alpha)$ the configuration that results from the execution
of $\alpha$ starting from $C$.
Two executions are {\em indistinguishable} to a process $p$, 
if $p$ executes the same steps in each of them. 
Two configurations are {\em indistinguishable} to $p$, if $p$ is in the same state 
in both configurations.
Given two executions $\alpha_1$ and $\alpha_2$, we denote by $\alpha_1 \cdot \alpha_2$
the concatenation of $\alpha_1$ with $\alpha_2$, i.e., $\alpha_1 \cdot \alpha_2$ is an execution
consisting of all events of $\alpha_1$ followed by all events of $\alpha_2$ (in order).

Each client that {\em invokes} a transaction $T$ may eventually return a {\em response}. 
The response consists of a value for each object in $R_T$, 
and an $ack$ for each write $T$ performs. 
We say that $T$ has {\em completed} (or is {\em complete}), if the client $c$ that invoked $T$ 
has issued all read or write requests for $T$, and has received responses by the servers for all
these requests. 
A transaction $T$ is {\em active} in some configuration $C$
if it has been invoked before $C$ and has not yet completed. 
We say that a configuration $C$ is {\em quiescent} if no transaction is active at $C$. 
Let $C$ be any reachable configuration that is either quiescent or only a single transaction $T$, 
invoked by a client $c$, is active at $C$. 
We say that \emph{$T$ executes solo} starting from $C$, if only $c$ and the servers take steps
after $C$ and $c$ does not invoke any transaction other than $T$ after $C$.
We say that a value $x$ is {\em written} in an execution $\alpha$
if there exists some transaction $T$ in $\alpha$ that issues $w(X)x$ for some object $X$.
For convenience, every execution we consider, starts with 
the execution of two initial transactions, $T_0^{in} = (w(X_0)x_0^{in})$ (invoked by a client $c_0^{in}$) 
and $T_1^{in} = (w(X_1)x_1^{in})$
(invoked by a client $c_1^{in}$) that write the initial values $x_0^{in}$ and $x_1^{in}$ 
in objects $X_0$ and $X_1$, respectively.

\noindent
{\bf Causal Consistency.}
We consider an implementation of a transactional storage system 
that is \emph{causally consistent}~\cite{ahamad_causal_1995, raynal_from_1997}.  
Informally, causal consistency ensures that 
causally-related transactions should appear, to all processes, 
like if they have been executed in the same order.
The formal definitions below closely follow those presented for causally 
consistent transactional memory systems in~\cite{Dziuma2015}.

The {\em history} $H(\alpha)$ of an execution $\alpha$ is the subsequence of $\alpha$ 
which contains only the invocations and responses of object operations (we 
omit $\alpha$ whenever it is clear from the context).
A transaction $T$ {\em is} in $H$, if $H$ contains at least 
one invocation of an object operation by $T$.  
A transaction is complete in $H$ if it is complete in $\alpha$. 
For each client $c$, 
we denote by $H_c$ the subsequence of $H$ containing all invocations
and responses of object operations issued or received by $c$.
Given two executions $\alpha$ and $\alpha'$, 
the histories $H(\alpha)$ and $H(\alpha')$ are
{\em equivalent}, if for each client $c$, $H_c(\alpha) = H_c(\alpha')$. 
For each client $c$, the {\em program-order}, denoted by $<_{H|c}$, 
is a relation on transactions in $H_c$ 
such that for any two transactions $T_1, T_2$, 
$T_1 <_{H|c} T_2$ if and only if $T_1$ precedes $T_2$ in $H_c$.

We denote by $complete(H)$ the subsequence of all events in $H$ issued and received 
by complete transactions\footnote{We assume that in a system that 
supports causal or any weaker form of consistency, all transactions commit~\cite{Bailis:2014,lloyd_stronger_2013}.}. 
\jj{We denote by $comm(H)$ a history that extends $H$: (1) $comm(H) = H \cdot H''$, i.e., $H$ is a prefix of $comm(H)$, (2) $H''$ contains only the responses for those write operations 
in $H$ for which there is no response.
}

A transaction $T_1$ {\em precedes} another transaction $T_2$ 
in $\alpha$ (or in $H(\alpha)$), if $T_1$ completes 
before $T_2$ is invoked. 
Similarly, an object operation $op_1$ {\em precedes} another object operation $op_2$
in $\alpha$ (or in $H(\alpha)$), if the response of $op_1$ precedes the invocation
of $op_2$. 
An execution $\sigma$ is {\em sequential} if for every two  
transactions $T_1$, $T_2$ in $\sigma$, either $T_1$ precedes $T_2$ or vice versa. 
We define a {\em sequential history} in a similar way.

Consider a sequential execution $\sigma$ (legal from 
\jj{$Q_{in}$)}
and a transaction $T$ in $\sigma$. 
We say that $T$ is {\em legal} in $\sigma$,
if for every invocation $r(X)v$ of a read operation on any object $X$ 
that $T$ performs, the following hold:
(1) if there is an invocation of a write operation by $T$ 
that precedes $r(X)v$ in $\sigma$ then $v$ is the value argument of the last such invocation,
(2) otherwise, if there are no transactions preceding $T$ in
$\sigma$ which invoke write for object $X$, then $v$ is the initial value for $X$,
(3) otherwise, $v$ is the value argument of the last invocation of a write operation 
to $X$, by any transaction that precedes $T$ in $\sigma$.

\jj{Consider any sequential history $S$ that is equivalent to $H$. }
We define a binary relation with respect to $S$, called {\em reads-from} and denoted by $<^r_{S}$,
on {\em transactions} in $H$ such that, for any two distinct transactions $T_1, T_2$ in $H$, 
$T_1 <^r_{S} T_2$ if and only if: (1) $T_2$ executes a read operation $op$ 
that reads some object $X$ and returns a value $v$ for it, and
(2) $T_1$ is the transaction in $S$ which executes the last write operation that writes $v$ for $X$
\jj{and precedes $T_2$.}
%
Each sequential history $S$ that is equivalent to $H$, 
induces a {\em reads-from} relation for $H$. 
Let ${\bf R}_H$ be the set of all reads-from relations that can be induced for $H$
(by considering all equivalent to $H$ sequential histories).
 
We say that a sequential history $S$ respects some relation $<$ 
\jj{on the set of transactions in $H$}
if it holds that 
\jj{for any two transactions $T_1$, $T_2$ in $S$,}
\jj{if $T_1 < T_2$, then $T_1$ precedes $T_2$ in $S$.}

For each $<^r$ in ${\bf R}_H$, we define the {\em causal} relation for $<^r$ on transactions in $H$ 
to be the transitive closure of  
\jj{$\bigcup_{c} \left(<_{H|c}\right) \cup <^r$.}
%
We denote by ${\bf C}_H$ the set of all causal relations in $H$.
\vspace{-5pt}
\begin{definition}[Causal consistency]
An execution $\alpha$ is {\em causally consistent} 
if for history $H' = comm(H(\alpha))$, there exists
a causal relation $<^c$ in
 ${\bf C}_{complete(H')}$ such that, for each client $c_i$, there exists
a sequential execution $\sigma_i$ such that:
\begin{itemize}
  \item  $H(\sigma_i)$ is equivalent to $complete(H')$, 
  \item $H(\sigma_i)$ respects the causality order $<^c$, and
  \item every transaction executed by $c_i$ in $H(\sigma_i)$ is legal.
\end{itemize}
An implementation is {\em causally consistent} if each execution $\alpha$ it produces
is causally consistent.
\end{definition}
\vspace{-5pt}
Intuitively, a causally consistent distributed transactional system 
produces executions that respect the causality order. 
For simplicity, assume that all values written in an execution $\alpha$ 
are distinct. Then, if a client $c$ reads $x_0$  and $x_1$ for two objects $X_0$
and $X_1$, and $x_0 <^c x_1$, then there is no $x'_0$ such that 
$x_0 <^c x'_0 <^c x_1$.
The necessity to talk about sets of reads-from relations 
and sets of causal relations above comes from the fact
that the values written in an execution $\alpha$ are not necessarily distinct.

\noindent
{\bf Progress.}
To avoid trivial implementations in which every read-only transaction invoked by any client 
always returns $\bot$ or values written by the same client, we introduce the concept of value visibility. 
\vspace{-5pt}
\begin{definition}[Value visibility]
\label{def:vis-val}
Consider any object $X$ and let $C$ be any reachable configuration
which is either quiescent or just a write-only transaction
(by a client $c_w$)
	writing a value $x$ into $X$ (and possibly performing additional writes) is active in $C$. 

Value $x$ is {\em visible} in $C$, if and only if: in every legal execution 
starting from $C$ which contains just a read-only transaction $T_r$ 
(invoked by a client  $c \not\in \{c_w, c_0^{in}, c_1^{in} \}$)
that reads $X$, $x$ is returned as $X$'s value for $T_r$.
\end{definition}
\vspace{-5pt}
We focus on storage systems that ensure minimal progress for write-only transactions.
This is a weak progress property ensured even by systems 
in which write transactions are blocking.  So, our impossibility result  
holds also for systems that ensure any stronger progress properties for write-only transactions
(and without any restriction on progress for transactions that both read and write).

\vspace{-5pt}
\begin{definition}[Minimal Progress for write-only transactions]
\label{def:prog}
Let $T_w$ be any write-only transaction which writes a value $x$ into an object $X$ ($T_w$ may also write other objects). 
If $T_w$ executes solo, starting from any reachable quiescent configuration $C$,
then there exists a later configuration $C'$ such that $x$ is visible in $C'$.
\end{definition}
\vspace{-5pt}

We denote by \jj{$Q_0$} a reachable configuration in which both values
$x_0^{in}$ and $x_1^{in}$, written by the transactions $T_0^{in}$ and $T_1^{in}$ discussed earlier, 
are visible. Definition~\ref{def:prog} implies that $Q_0$ is well-defined.

We next present the definition of fast read-only transactions.
Definition~\ref{def:fast-op} expresses the exact same properties as in the original definition 
which was introduced in~\cite{lu_snow_2016} and used in~\cite{Didona:2018,tomsic_2018,Konwar:2018}. 
\vspace{-5pt}
\begin{definition}[Fast read-only transaction]
\label{def:fast-op}
We say that an implementation of a distributed storage system supports 
{\em fast read-only transactions}, if in each execution $\alpha$ it produces, 
the following hold for every read-only transaction $T$ executed in $\alpha$: 
\begin{enumerate}
\item
{\bf Non-blocking and One-Roundtrip Property.} 
The client $c$ which invoked $T$ 
sends a message to all servers storing items that it wants to read
and it does so in one computational step; moreover,
each server performs at most one computational step to serve the request
and respond to $c$.
\item
{\bf One-value messages.}
 Each message sent from a server to a client contains only one value 
that has been written by some write transaction in $\alpha$ into an object 
that is stored in the server and is read by the client\footnote{We remark that the message may also contain
some metadata (e.g., a timestamp), as long as these metadata 
do not reveal any information about other transactions and additional written values to the client.}.
\end{enumerate}
\end{definition}
\vspace{-5pt}

\section{The Impossibility Result}
\label{sec:imp}

In this section, we prove the impossibility result: 

\vspace{-2pt}
\begin{theorem}
\label{thm:imp}
No causally consistent implementation of a transactional storage system 
supports transactions that write to multiple objects and 
fast read-only transactions.
\end{theorem}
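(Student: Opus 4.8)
The plan is to derive a contradiction from assuming that a causally consistent implementation supports both multi-object write transactions and fast read-only transactions. The core idea, as hinted in the introduction, is to construct a ``troublesome'' infinite execution in which a write-only transaction $T_w = (w(X_0)x_0, w(X_1)x_1)$ runs solo starting from the quiescent configuration $Q_0$, yet the values $x_0$ and $x_1$ it writes never become simultaneously visible. This contradicts the Minimal Progress property (Definition~\ref{def:prog}), which guarantees that each written value eventually becomes visible. The tension we will exploit is that visibility of $x_0$ together with $x_1$ would let an external reader $c$ observe both new values, while causal consistency forbids $c$ from observing $x_1$ without $x_0$ (or vice versa), since $T_w$ writes them as a single transaction and hence they are causally inseparable.

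The construction proceeds by induction. At each stage I would maintain a finite prefix of the solo execution of $T_w$ and show, using an indistinguishability argument, that at least one more message from some server is still ``in flight'' or not yet processed before both values can be jointly visible. Concretely, suppose after some prefix the value $x_0$ at server $S_0$ has become visible but $x_1$ at server $S_1$ has not (the asymmetry is forced because a fast read is one-round and one-value, so a reader querying $S_0$ and $S_1$ in a single round gets at most one value from each, with no opportunity for the servers to coordinate within that round). I would then exhibit two executions that are indistinguishable to the reading client $c$: one in which $x_1$ is visible and one in which it is not. Because the read is nonblocking and one-round, $c$ must respond identically in both, which means the value returned for $X_1$ is the same. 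If that value were $x_1$, I can construct a companion execution---reusing the freedom the adversary has over message delays---in which $c$ reads the new $x_1$ but the old $x_0^{in}$, violating causal consistency (the read-from and causal relations would place a write $x_0^{in} <^c x_1$ with $x_1$ read but $x_0$ not, contradicting that $T_w$ writes both atomically). Hence the value returned for $X_1$ must still be the old one, so $x_1$ is not yet visible, and the adversary can delay the delivery of one more message to extend the prefix.

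The delicate part is the bookkeeping that makes the induction close. After extending the prefix by delivering and processing the ``extra'' message, I must re-establish the invariant that some value is still not jointly visible, so that the next inductive step applies; this is where the one-value and one-round restrictions do the heavy lifting, because they prevent a single round of reads from ever aggregating the information needed to make both values visible at once. I would carefully track the states of $S_0$, $S_1$, and the external reading client $c$ across the indistinguishable executions, ensuring that the swaps I perform (delivering a message in one execution but not the other) are invisible to $c$ within its single computation step. Since the construction never terminates---there is always one more message required---the value $x_1$ (say) is never visible in the infinite solo execution of $T_w$, directly contradicting Definition~\ref{def:prog}.

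The step I expect to be the main obstacle is proving that the constructed pair of executions is genuinely indistinguishable to the reading client \emph{and} that the ``bad'' companion execution (reader sees new $x_1$ but old $x_0$) is itself a legal execution admissible by the implementation. The subtlety is that causal consistency is a weak condition defined over the \emph{entire} history via the existence of \emph{some} sequential witness respecting \emph{some} causal relation in ${\bf C}_{complete(H')}$; to reach a contradiction I must show that \emph{no} such witness exists for the companion execution. This requires arguing that any sequential history equivalent to the companion history forces $x_1$ to be preceded by $x_0$ in the reads-from and hence causal relation (because they come from the same transaction $T_w$), yet the reader's observed values cannot be serialized consistently with program order and legality. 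Handling the case of more than two servers---where information about the writes can propagate through chains of inter-server messages---further complicates which single message must remain undelivered, and is the reason the full general argument is deferred to the appendix.
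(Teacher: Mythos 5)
Your overall strategy is the same as the paper's: assume such an implementation exists, build an infinite solo execution of $T_w=(w(X_0)x_0,w(X_1)x_1)$ by induction, use indistinguishability to splice together an execution in which a fast reader returns a mix of one new and one initial value, and conclude that visibility is never reached, contradicting minimal progress. However, there is a concrete gap at the heart of your contradiction. You assert that a reader returning $(x_0, x_1^{in})$ violates causal consistency ``since $T_w$ writes them as a single transaction and hence they are causally inseparable.'' Under the paper's definition this is not true by itself: consider the sequential history $T_0^{in},\, T_w,\, T_1^{in},\, T_r$. It is equivalent to the mixed history, every transaction is legal in it ($T_r$ reads the last write to $X_0$, which is $T_w$'s $x_0$, and the last write to $X_1$, which is $T_1^{in}$'s $x_1^{in}$), and it respects the causal order, because absent any other interaction there is no program-order or reads-from edge placing $T_1^{in}$ before $T_w$. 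So the mixed read is causally consistent and no contradiction arises. The paper closes this hole by prefixing the troublesome execution with a read-only transaction $T_r^{in}=(r(X_0)x_0^{in}, r(X_1)x_1^{in})$ issued by $c_w$ itself from $Q_0$: then $T_i^{in} <^r T_r^{in}$ and $T_r^{in} <_{H|c_w} T_w$, whose transitive closure forces $T_i^{in} <^c T_w$ for $i\in\{0,1\}$, and only then does the ``no mixed values'' claim (Lemma~\ref{lem:granularity}) hold. Your proof needs this causal anchoring, or an equivalent device, before the splicing argument yields a contradiction.

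A secondary issue is the inductive invariant. You propose maintaining ``$x_0$ visible but $x_1$ not,'' but once either value is visible the adversary can no longer obtain the all-initial-values response that one leg of the splice requires, and the induction would not close. The paper instead maintains that \emph{neither} value is visible at every $C_k$ (via Lemma~\ref{lem:not-visible}: if one value is not visible then some reader returns both initial values), and pairs this with the structural claim that each extension forces a fresh message from $p_{k\% 2}$ toward $p_{(k-1)\% 2}$, possibly relayed through $c_w$; this alternation is what makes the execution genuinely infinite and is also exactly what makes deleting $p_{k\% 2}$'s steps from the solo run legal in the spliced execution, a point your sketch leaves implicit.
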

\vspace{-2pt}

The impossibility result holds even for systems that store just two objects $X_0$ and $X_1$. 
For ease of presentation, we prove 
it for a system with two servers $p_0$ and $p_1$.
However, it can
easily be extended to hold for the general case, where the system
has any number of servers and is partially-replicated (see the appendix
for the general proof).
We assume that $p_0$ stores $X_0$ and $p_1$ stores $X_1$.

\subsection{Outline of the Proof}
We prove Theorem~\ref{thm:imp} by the way of contradiction. Table~\ref{table:symbols} in Appendix reports the list of the main symbols used throughout the proof.
Assume that there exists a causally consistent implementation $\Pi$ 
which supports fast read-only transactions and 
transactions that write multiple objects.
Assume also that $\Pi$
guarantees minimal progress for write-only transactions. 
We derive a contradiction by showing that there exists a {\em troublesome} execution which breaks minimal progress. 
Specifically, we construct an infinite execution $\alpha$ which contains 
just a write-only transaction $T_w = (w(X_0)x_0, w(X_1)x_1)$, 
invoked by some client $c_w\notin \{c_0^{in}, c_1^{in}\}$,
and we show that the values $x_0$ and $x_1$ written by $T_w$ never become visible. 
Intuitively, to do so, we inductively construct an infinite number of non-empty distinct prefixes 
of $\alpha$ and prove that the written values are
not yet visible after each prefix has been executed.
Specifically, we use indistinguishability arguments~\cite{Attiya:2004:DCF:983102,Lynch:1996:DA:525656}
to prove that after the execution of each such prefix, 
some server has to still send at least one message before the values become visible. 

\y{We now provide an informal, high-level outline of the proof.
We start with two simple lemmas.
The first shows that a transaction which reads $X_0$ and $X_1$
cannot return a subset of the values written by $T_w$ (i.e.,
it returns either the new values for both objects or the initial values for both objects). 
The second lemma shows that if one of the values written by $T_w$ is not visible,
then both values written by $T_w$ are not visible. 
We use these lemmas to determine the values read 
by read-only transactions in the executions that  we use. 
Specifically, we present two execution constructions that are useful 
in the proof of Theorem~\ref{thm:imp}. 
Construction~\ref{gold_construction} describes an execution in which a read-only transaction 
reads the initial values for $X_0$ and $X_1$, whereas
Construction~\ref{gnew_construction} describes 
an execution in which a read-only transaction reads the new values for $X_0$ and $X_1$.  
     
The two constructions are used to build an execution $\gamma$
which allows us to derive a contradiction: 
In $\gamma$, a read-only transaction reads 
a mix of old and new values for $X_0$ and $X_1$. 
We use $\gamma$ to prove, in the induction, that
the values written by $T_w$ are not yet visible. We also prove that
to make them visible, $p_0$ and $p_1$ have to exchange more messages.
Therefore, after any arbitrary but finite number of steps,
$T_w$ has not yet completed 
and the values written by it has not yet
become visible, which contradicts eventual visibility.}

\subsection{Useful Constructions and Lemmas}

To enforce a causal relation between the values written by $c_w$ in $T_w$
and $x_0^{in}$, $x_1^{in}$, 
\y{the troublesome execution} starts with 
the execution of a read-only transaction $T_r^{in} = (r(X_0)*, r(X_1)*)$ 
by $c_w$ (applied from $Q_0$). 
Since $T_r^{in}$ is a fast read-only transaction, 
$T_r^{in}$ completes; since $x_0^{in}$ and $x_1^{in}$ are visible in $Q_0$
\y{(by definition)}, 
$c_w$ returns $(x_0^{in}, x_1^{in})$ for $T_r^{in}$.
\y{Let $C_0$ be the configuration }
in which $T_r^{in}$ has completed and no message is in transit.
The configurations $Q_{in}$, $Q_0$, $C_0$ are shown in Figure~\ref{fig:init}.
\y{All executions we refer to below start from $C_0$.}
\ifthenelse{\boolean{picturepage}}{}{
\begin{figure}[t]
\vspace{-10pt}
    \centering
    \includegraphics[width=0.4\textwidth]{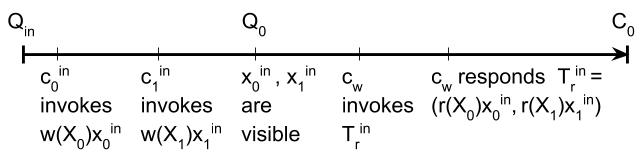}
\vspace{-10pt}
    \caption{Configurations $Q_{in}$, $Q_0$ and $C_0$.}
    \label{fig:init}
\vspace{-10pt}
\end{figure}
}
\y{We now present the two lemmas that will be useful for the constructions 
and the proof of Theorem~\ref{thm:imp}. 
The first states that in an execution in which a write transaction writes new values to a set of objects, }
a read transaction which reads these objects, cannot see only a subset of the new values. 
The proof comes as an immediate consequence of the fact that $\Pi$ ensures causal consistency. 
%
\vspace*{-5pt}
\begin{lemma}
\label{lem:granularity}
Let $\tau$ be any legal execution of $\Pi$ starting from $C_0$ which contains two transactions: 
client $c_w$ invokes a write-only transaction
$T_w = (w(X_0)x_0, w(X_1)x_1)$, and a client $c_r \neq c_w$ invokes a read-only transaction $T_r = (r(X_0)*, r(X_1)*)$ which completes in $\tau$.
Let $v_0$ and $v_1$ be the two values which $c_r$ returns for $T_r$, i.e., $T_r = (r(X_0)v_0, r(X_1)v_1)$.
Then, either $v_0 = x_0$ and $v_1= x_1$, or $v_0 = x_0^{in}$ and $v_1= x_1^{in}$.
\end{lemma}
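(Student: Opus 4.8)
The plan is to derive the result purely from causal consistency, with the atomicity of the single multi-object transaction $T_w$ doing the real work. Throughout I would regard $\tau$ as the tail of the full execution $\hat\tau$ that first reaches $C_0$ (running $T_0^{in}, T_1^{in}, T_r^{in}$) and then runs $\tau$, and apply the definition of causal consistency to $\hat\tau$. First I would pin down the candidate return values: the only writes to $X_0$ in $\hat\tau$ are $T_0^{in}$'s write of the initial value $x_0^{in}$ and $T_w$'s write of $x_0$, and symmetrically for $X_1$; hence, by the legality of any witnessing sequential history, $v_0 \in \{x_0^{in}, x_0\}$ and $v_1 \in \{x_1^{in}, x_1\}$. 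As the model permits, I assume the four written values are pairwise distinct, so that the writer of each read value is unambiguous. It then suffices to exclude the two mixed outcomes $(x_0, x_1^{in})$ and $(x_0^{in}, x_1)$.

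Next I would record the causal edges present no matter what $v_0, v_1$ are. Since $c_w$ ran $T_r^{in}$ returning $x_0^{in}$ and $x_1^{in}$, and these values are written only by $T_0^{in}$ and $T_1^{in}$, every sequential history equivalent to $H$ induces $T_0^{in} <^r T_r^{in}$ and $T_1^{in} <^r T_r^{in}$; together with the program-order edge $T_r^{in} <_{H|c_w} T_w$ and transitivity, this yields $T_0^{in} <^c T_w$ and $T_1^{in} <^c T_w$ for whatever causal relation $<^c$ causal consistency supplies. I then instantiate the definition for the reading client $c_r$ (noting that $comm$ completes $T_w$ even if its writes lack responses in $\tau$), obtaining a sequential execution $\sigma$ that is equivalent to $complete(comm(H))$, respects $<^c$, and in which $T_r$ is legal.

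The core step is the implication ``$v_0 = x_0 \Rightarrow v_1 = x_1$''. If $T_r$ returns $x_0$, then as $T_w$ is the unique writer of $x_0$ we get $T_w <^r T_r$, hence $T_w <^c T_r$, so $T_w$ precedes $T_r$ in $\sigma$. The only transactions writing $X_1$ are $T_1^{in}$ and $T_w$; because $T_1^{in} <^c T_w$, the transaction $T_1^{in}$ lies before $T_w$ in $\sigma$, so the last write to $X_1$ preceding $T_r$ is $T_w$'s write of $x_1$. Legality of $T_r$ then forces $v_1 = x_1$, excluding $(x_0, x_1^{in})$. The mirror-image argument, using $T_0^{in} <^c T_w$ instead, gives ``$v_1 = x_1 \Rightarrow v_0 = x_0$'' and excludes $(x_0^{in}, x_1)$. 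Combining the two implications with $v_0 \in \{x_0^{in}, x_0\}$ and $v_1 \in \{x_1^{in}, x_1\}$ leaves exactly $(x_0, x_1)$ and $(x_0^{in}, x_1^{in})$.

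The hard part will be handling the quantifier structure of causal consistency: it only guarantees the existence of some $<^c$ and some witnessing sequential histories, so every edge I rely on must hold for whatever $<^c$ is handed to me rather than one I construct. The plan copes with this by deriving the decisive edges ($T_w <^r T_r$ when $x_0$ is read, and $T_0^{in}, T_1^{in} <^c T_w$) as forced in every equivalent sequential history, which is precisely where the distinctness of written values and the causal anchor provided by $T_r^{in}$ are needed. The underlying intuition is that once $T_w$ is forced before $T_r$ in the legal order, both of its writes are the most recent writes to their respective objects, so $T_r$ cannot read one new value together with one stale value.
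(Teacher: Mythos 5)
Your proposal is correct and follows essentially the same route as the paper's proof: both pin down the candidate values via legality, use the reads-from edge from $T_r^{in}$ plus program order to force $T_{1-i}^{in} <^c T_w$, and then use legality of $T_r$ in the witnessing sequential history to rule out a mixed outcome (the paper phrases this as a contradiction from assuming $v_i = x_i$, $v_{1-i} = x_{1-i}^{in}$, whereas you phrase it as the two implications, which is the same argument). Your explicit attention to the quantifier structure of the causal-consistency definition is, if anything, slightly more careful than the paper's one-paragraph version.
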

\vspace*{-5pt}
\begin{proof}[Proof of Lemma \ref{lem:granularity}]
To derive a contradiction, assume that there exists some execution $\tau$ 
of $\Pi$ starting from $C_0$ and some $i\in\{0,1\}$ for which $c_r$ returns
the values
$v_i = x_i, v_{1-i} = x_{1-i}^{in}$  for $T_r$.

By causal consistency, $c_r$'s local history in $\tau$ respects causality. 
So, we can totally order 
$T_{1-i}^{in}$, $T_w$ and $T_r$ such that 
(1) $T_{1-i}^{in}$ is the last transaction which writes to $X_{1-i}$ and precedes $T_r$,
so $T_w$ (which also writes $X_{1-i}$) precedes $T_{1-i}^{in}$;
(2) $T_w$ is the last transaction which writes on $X_{i}$ and precedes $T_r$, so $T_w$ is ordered before $T_r$;
and (3) $T_{1-i}^{in}$ is ordered before $T_w$ because $c_w$ reads the value written in $T^{in}_{1-i}$ before it initiates $T_w$ (as shown in Figure~\ref{fig:init}).
A contradiction. 
\end{proof}
\vspace*{-5pt}

\y{The next lemma states that in an execution 
in which a write transaction $T_w$ writes new values in a set of objects,
if one of the values written by $T_w$ is not visible at some configuration,
then both values written by $T_w$ are not visible in that configuration. 
}
\vspace*{-5pt}
\begin{lemma}
\label{lem:not-visible}
Let $\tau$ be any legal execution starting from $C_0$ which contains just one transaction: 
client $c_w$ executes a write-only transaction $T_w = (w(X_0)x_0, w(X_1)x_1)$. 
Let $C$ be any reachable configuration when $\tau$ is applied from $C_0$.
If either $x_0$ or $x_1$ is not visible in $C$, then
there exists at least one client
$c_r \not\in \{c_w, c_0^{in}, c_1^{in} \}$
such that if, starting from $C$, $c_r$ executes  a fast read-only transaction 
$T_r = (r(X_0)*, r(X_1)*)$, then $c_r$ returns 
$(x_0^{in}, x_1^{in})$ 
for $T_r$.
\end{lemma}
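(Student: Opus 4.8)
The plan is to prove the statement directly: assuming one of the written values is not visible in $C$, I will exhibit a client $c_r\notin\{c_w,c_0^{in},c_1^{in}\}$ and a legal schedule of a fast read-only transaction $T_r=(r(X_0)*,r(X_1)*)$ from $C$ that returns $(x_0^{in},x_1^{in})$. By symmetry it suffices to treat the case in which $x_0$ is not visible in $C$ (the case of $x_1$ is identical with the roles of $X_0,X_1$ and of $p_0,p_1$ swapped). First I would observe that, since the only writes to $X_0$ in any extension of $\tau$ by read-only transactions are those of $T_0^{in}$ and $T_w$, the one-value property (Definition~\ref{def:fast-op}) forces every read of $X_0$ to return a value in $\{x_0^{in},x_0\}$. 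Unfolding Definition~\ref{def:vis-val}, non-visibility of $x_0$ yields a legal execution $\rho$ from $C$ containing a single (hence fast) read-only transaction that reads $X_0$, invoked by some client $c\notin\{c_w,c_0^{in},c_1^{in}\}$, whose returned value for $X_0$ differs from $x_0$ and is therefore $x_0^{in}$. I set $c_r:=c$.

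Next I would build the desired execution $\rho'$ from $C$ in which $c_r$ invokes $T_r=(r(X_0)*,r(X_1)*)$. In $\rho'$, $c_r$ takes a single step sending a read request to $p_0$ and a read request to $p_1$; I then deliver to $p_0$ exactly the messages that were delivered to it in $\rho$ before its computation step (the request together with the same messages that were in transit at $C$, in the same order) and let $p_0$ take its single step; afterwards I let $p_1$ take its single step (responding with some value $v_1$), deliver both responses to $c_r$, and finally deliver every remaining in-transit message so that $\rho'$ is legal. Because $T_r$ is fast it is non-blocking and completes; since $T_r$ has already returned before the trailing deliveries, those deliveries do not alter its response. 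The core claim is that $p_0$ returns $x_0^{in}$ in $\rho'$: $p_0$ is in the same state at $C$ in $\rho$ and $\rho'$, and it receives the same messages before its single step, so by an indistinguishability argument it executes the same computation step and returns the same value $x_0^{in}$.

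With $p_0$'s response fixed to $x_0^{in}$, I would close the argument using Lemma~\ref{lem:granularity}. Concatenating the prefix of $\tau$ that reaches $C$ with $\rho'$ gives a legal execution from $C_0$ that contains exactly the two transactions $T_w$ (in the prefix) and $T_r$ (by $c_r\neq c_w$), with $T_r$ completing. Hence the pair $(v_0,v_1)$ returned by $c_r$ equals either $(x_0,x_1)$ or $(x_0^{in},x_1^{in})$; since $v_0=x_0^{in}\neq x_0$, it must be $(x_0^{in},x_1^{in})$, which is exactly the conclusion (and forces $v_1=x_1^{in}$ for free).

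I expect the indistinguishability step for $p_0$ to be the main obstacle, for two reasons. First, one must argue that the request $c_r$ sends to $p_0$ while reading both objects is interchangeable, from $p_0$'s viewpoint, with the request it sends while reading $X_0$ alone: this is where I rely on the fast-transaction structure (Definition~\ref{def:fast-op})---$p_0$ stores only $X_0$, responds in a single step, and cannot hear from $p_1$ within $T_r$, so the value it produces for $X_0$ is a function of its local state and of the delivered messages, independent of whether $X_1$ is concurrently read at $p_1$. Second, one must verify that replaying $\rho$'s delivery schedule to $p_0$ is consistent in $\rho'$, i.e.\ that the messages in transit at $C$ targeted at $p_0$ are the same in both executions (they are, since $C$ is fixed) and that the remaining messages can be flushed afterwards without disturbing the already-returned value of $T_r$. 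Once these scheduling details are in place, the reduction to Lemma~\ref{lem:granularity} is immediate.
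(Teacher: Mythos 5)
Your overall strategy is genuinely different from the paper's, and unfortunately it is also the harder and more fragile one. The paper proves the lemma by contradiction in a few lines: it assumes that no client outside $\{c_w, c_0^{in}, c_1^{in}\}$ returns $(x_0^{in}, x_1^{in})$ for $T_r$, invokes Lemma~\ref{lem:granularity} to force every such client to return $(x_0, x_1)$ instead, and then reads off from Definition~\ref{def:vis-val} that both $x_0$ and $x_1$ are visible in $C$, contradicting the hypothesis. No execution is constructed and no indistinguishability argument is needed. Your direct route instead extracts a witness execution $\rho$ from the failure of visibility and tries to transplant $p_0$'s response from $\rho$ into a fresh execution of the two-object transaction $T_r$.

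That transplant is where your proof has a genuine gap. The witness supplied by negating Definition~\ref{def:vis-val} is only guaranteed to be a read-only transaction whose read-set \emph{contains} $X_0$; if it reads $X_0$ alone, the request message that $c_r$ sends to $p_0$ in $\rho$ need not coincide with the request it sends to $p_0$ when executing $T_r=(r(X_0)*, r(X_1)*)$. Definition~\ref{def:fast-op} constrains the round structure and the content of \emph{server-to-client} messages (one value plus metadata), but places no constraint on the content of client-to-server requests; an implementation may legitimately encode the whole read-set in the request, and $p_0$ may answer differently depending on it. Hence ``$p_0$ receives the same messages before its single step'' is not something the model lets you assume, and the claim that $p_0$ again returns $x_0^{in}$ does not follow. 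A second, smaller issue: you exhibit one schedule $\rho'$ in which $c_r$ returns $(x_0^{in}, x_1^{in})$, but the lemma, as it is consumed by Construction~\ref{gold_construction} and Observation~\ref{obs:gold}, is used to assert that $c_r$ returns $(x_0^{in}, x_1^{in})$ in \emph{every} legal execution of $T_r$ from $C$, under every delivery schedule. Both difficulties disappear if you argue as the paper does, entirely at the level of Lemma~\ref{lem:granularity} and the definition of visibility, rather than by constructing and comparing executions.
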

\vspace*{-5pt}
\begin{proof}[Proof of Lemma \ref{lem:not-visible}]
\y{To derive a contradiction, 
consider some configuration $C$, reachable when $\tau$ is applied from $C_0$, 
in which at least one of the values $x_0$ and $x_1$ is not visible, 
and assume that for every client}
$c_r \not\in \{c_w, c_0^{in}, c_1^{in} \}$,
if $c_r$ invokes $T_r$ starting from $C$, then $c_r$ does not return 
$(x_0^{in}, x_1^{in})$ for $T_r$. 
Notice that since $\Pi$ ensures that read-only transactions are fast, 
$T_r$ completes.
Then by Lemma \ref{lem:granularity}, $c_r$ returns $(x_0, x_1)$. 
Since this holds for every client $c_r$ not in $\{c_w, c_0^{in}, c_1^{in} \}$,
Definition \ref{def:vis-val} 
implies that at $C$ both $x_0$ and $x_1$ are visible. This contradicts the hypothesis that 
at least one of the two values is not visible in $C$.
\end{proof}

\vspace*{-5pt}
Notice that Lemma~\ref{lem:not-visible} holds for $C = C_0$, 
i.e., when $\tau$ is empty. 

\y{We now present Constructions~\ref{gold_construction} and~\ref{gnew_construction}
which are illustrated in Figure \ref{fig:cons}. 
Roughly speaking, the constructions illustrate two executions
in which a write-only transaction $T_w$ writes values $x_0$ and $x_1$ to objects $X_0$ and $X_1$, 
respectively, and a read-only transaction $T_r$ reads $X_0$ and $X_1$. 
The executions are constructed so that $T_r$ returns $(x_0^{in}, x_1^{in})$ in the first execution,
whereas it returns $(x_0, x_1)$ in the second.
Based on these constructions, 
we construct, in the proof of Theorem~\ref{thm:imp}, an execution 
where $T_r$ returns a mix of initial and new values, allowing us to derive a contradiction.
The constructions are based on 
a fixed $i \in \{ 0, 1\}$ and a client $c_r$ that executes transaction $T_r$. Each time the
construction is employed in the proof of Theorem~\ref{thm:imp} these parameters can be different. 
Although these executions are similar, we present both of them for ease of presentation.}

\ifthenelse{\boolean{picturepage}}{}{
\begin{figure*}
    \centering
  \begin{subfigure}[b]{0.45\textwidth}
    \includegraphics[width=\textwidth]{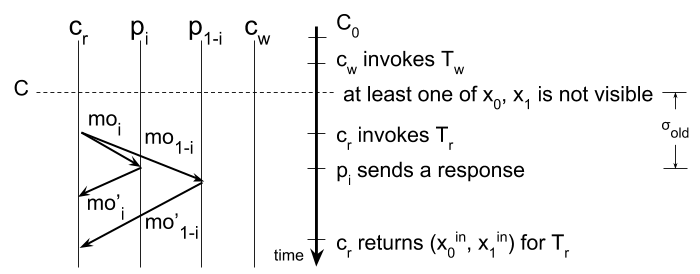}
    \caption{Construction \ref{gold_construction}}
    \label{fig:gold}
    \end{subfigure}\hfill
  \begin{subfigure}[b]{0.45\textwidth}
    \includegraphics[width=\textwidth]{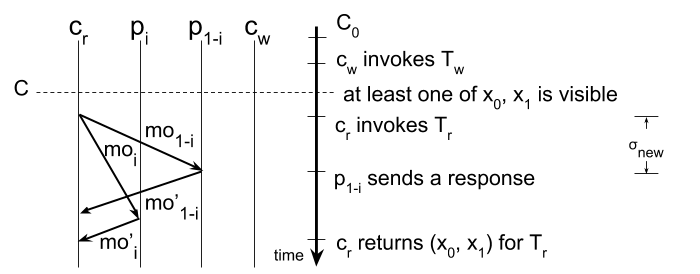}
    \caption{Construction \ref{gnew_construction}}
    \label{fig:gnew}
    \end{subfigure}
\vspace*{-.2cm}
    \caption{Illustration of Constructions~\ref{gold_construction} and~\ref{gnew_construction}.} 
    \label{fig:cons}
\vspace*{-.4cm}
\end{figure*}
}

We start with an intuitive description of Construction~\ref{gold_construction} which is depicted in Figure 2(a). 
Assume, without loss of generality, that $i  = 0$.
(The construction when $i = 1$ is symmetric.)
\y{The construction produces an execution in which $T_w$
starts executing from $C_0$ and runs solo up to any configuration $C$ in which $x_0$ 
is not yet visible. Next, $T_r$ is initiated from $C$ and takes steps 
until it sends a message to both servers. The adversary 
schedules the receipt of these messages so that $p_0$ receives the message first
and sends a response. 
Then, $p_1$ receives the message sent by $c_r$ 
and sends back a response. Finally, $c_r$ takes steps to collect these responses
and return. We call $\sigma_{old}$ the part of the execution starting from $C$
until the point that $p_0$ sends a response, and $\gamma_{old}$ the suffix of the execution 
starting from $C$. \y{Lemma \ref{lem:not-visible} allows us to argue that} 
$c_r$ returns $(x_0^{in}, x_1^{in})$ in $\gamma_{old}$.
We next present the formalism of the construction. }
\vspace*{-.1cm}
\begin{construction}[Construction of execution $\gamma_{old}(C, p_i, c_r)$ and execution $\sigma_{old}(C, p_i, c_r)$]
\label{gold_construction}
Let $\tau$ be any arbitrary legal execution starting from $C_0$ which contains just one transaction: 
client $c_w$ executes a write-only transaction $T_w = (w(X_0)x_0, w(X_1)x_1)$. 
Fix any $i \in \{ 0, 1\}$. 
\y{For every configuration $C$ that is reached when $\tau$ is applied from $C_0$
in which {\em the value $x_i$ is not visible},
Lemma~\ref{lem:not-visible} implies that 
there exists at least one client
$c_r \not\in \{c_w, c_0^{in}, c_1^{in} \}$
such that if, starting from $C$, $c_r$ executes  a fast read-only transaction 
$T_r = (r(X_0)*, r(X_1)*)$, then $c_r$ returns 
$(x_0^{in}, x_1^{in})$ 
for $T_r$.}
For every  such client $c_r$, 
we define $\gamma_{old}(C, p_i, c_r)$ to be the execution containing all of the events described below. 
In $\gamma_{old}(C, p_i, c_r)$, first $c_r$ invokes $T_r$ starting from $C$. 
So, $c_r$ takes steps and since it reads $X_0$ and $X_1$,  
$c_r$ sends a message $mo_0(C, p_i, c_r)$ to $p_0$ and a message $mo_1(C, p_i, c_r)$ to $p_1$.
Next, the adversary schedules the delivery of $mo_i(C, p_i, c_r)$ 
and let $p_i$ take a step and receive $mo_i(C, p_i, c_r)$.
Since $T_r$ is a fast transaction, once $p_i$ receives $mo_i(C, p_i, c_r)$, 
$p_i$ sends a response $mo_i'(C, p_i, c_r)$ to $c_r$. 
\y{Denote by $\sigma_{old}(C, p_i, c_r)$ the prefix of $\gamma_{old}(C, p_i, c_r)$ 
up until the step in which $p_i$ sends the response.}
After $\sigma_{old}(C, p_i, c_r)$ has been applied from $C$,
the adversary schedules the delivery of $mo_{1-i}(C, p_i, c_r)$, and lets 
$p_{1-i}$ take  steps to receive $mo_{1-i}(C, p_i, c_r)$
and send a response to $c_r$. 
Finally, $c_r$ take steps to
receive the responses  from $p_0$ and $p_1$
and return a response for $T_r$. 
\end{construction}
\vspace*{-.1cm}
By the way $\gamma_{old}(C, p_i, c_r)$ is constructed and by 
Lemma \ref{lem:not-visible}, we get the following.
\vspace*{-.1cm}
\begin{observation}
\label{obs:gold}
The following claims hold:
\begin{enumerate}
\item \label{gold_legal} Execution $\gamma_{old}(C, p_i, c_r)$ is legal from $C$. 
\y{\item \label{gold_ind} Only processes $p_i$ and $c_r$ take steps in $\sigma_{old}(C, p_i, c_r)$,
so configurations $C$ and $RC(C, \sigma_{old}(C, p_i, c_r))$ are indistinguishable 
to $c_w$ and $p_{1-i}$}.
\item \label{gold_return} The return value for $T_r$ in $\gamma_{old}(C, p_i, c_r)$ is  
($x_0^{in}, x_1^{in}$).
\end{enumerate}
\end{observation}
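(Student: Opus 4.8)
The plan is to verify the three claims separately, each following almost directly from Construction~\ref{gold_construction} together with the fast read-only property of Definition~\ref{def:fast-op} and Lemma~\ref{lem:not-visible}. For claim~(\ref{gold_legal}) (legality), I would check that every computation step prescribed by the construction is compatible with the corresponding process's state machine given its state in $C$, and that every message sent inside $\gamma_{old}(C,p_i,c_r)$ is eventually delivered. The steps are exactly those dictated by $\Pi$: $c_r$ invokes $T_r$ from $C$ and, by the non-blocking one-roundtrip property, sends $mo_0$ and $mo_1$ in a single step; $p_i$ receives $mo_i$ and, again by that property, replies with $mo_i'$ in a single step; then $p_{1-i}$ receives $mo_{1-i}$ and replies; finally $c_r$ receives both responses and returns. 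The adversary explicitly schedules the delivery of $mo_0$, $mo_1$ and of both responses, so every message sent during $\gamma_{old}(C,p_i,c_r)$ is received. Messages left in transit at $C$ from $T_w$ play no role here: when $p_i$ and $p_{1-i}$ take their single serving step they simply read their income buffers, and by the one-round property no further step is needed.

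For claim~(\ref{gold_ind}), I would observe that in $\sigma_{old}(C,p_i,c_r)$ the only computation steps are $c_r$'s invocation of $T_r$ and $p_i$'s serving step, and the only delivery event is that of $mo_i$ to $p_i$, which touches only $c_r$'s outcome buffer and $p_i$'s income buffer. The message $mo_{1-i}$ is deposited in $c_r$'s outcome buffer but, by construction, its delivery to $p_{1-i}$ is \emph{not} scheduled within $\sigma_{old}(C,p_i,c_r)$; hence $p_{1-i}$ takes no computation step and receives no message, so its entire state (including its income and outcome buffers) is identical in $C$ and in $RC(C,\sigma_{old}(C,p_i,c_r))$. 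Likewise $c_w$ takes no step and is sent no message in $\sigma_{old}(C,p_i,c_r)$, since $c_r$ writes only to the two servers and $p_i$ replies only to $c_r$; thus $c_w$'s state is also unchanged. This yields indistinguishability of the two configurations to both $c_w$ and $p_{1-i}$.

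Claim~(\ref{gold_return}) is where I would lean directly on Lemma~\ref{lem:not-visible}, and it is also the step I expect to be the main obstacle. The construction is applied at a configuration $C$ at which $x_i$ is not visible, and $c_r$ is precisely a client supplied by Lemma~\ref{lem:not-visible}, for which a solo run of $T_r$ from $C$ returns $(x_0^{in},x_1^{in})$; moreover $\gamma_{old}(C,p_i,c_r)$ is such a solo run, as only $c_r$ and the servers take steps. The subtle point is that Lemma~\ref{lem:not-visible} guarantees the return value for \emph{some} legal solo execution of $T_r$, whereas $\gamma_{old}(C,p_i,c_r)$ fixes one particular adversarial schedule (deliver $mo_i$, let $p_i$ reply, then deliver $mo_{1-i}$, then let $c_r$ collect). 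I would close this gap by noting that for a fast read-only transaction the value returned for each object is the one-value response of the server storing it, and by the one-round property each server produces that response solely from its own state at $C$ and the single request it receives, with no inter-server communication; consequently the pair returned by $c_r$ is independent of the order in which $mo_0$ and $mo_1$ are delivered. Hence the return value in $\gamma_{old}(C,p_i,c_r)$ coincides with the one guaranteed by Lemma~\ref{lem:not-visible}, namely $(x_0^{in},x_1^{in})$.
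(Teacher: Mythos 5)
Your proposal is correct and follows essentially the same route as the paper, which justifies this observation in a single line as a direct consequence of Construction~1 and Lemma~2. The extra care you take on claim~3 --- noting that Lemma~2 only supplies \emph{some} solo execution of $T_r$ from $C$ returning $(x_0^{in},x_1^{in})$, and closing the gap via the schedule-independence of the servers' single-step, one-value responses computed from their states at $C$ --- is more detail than the paper's one-sentence justification, but it matches the intended reading.
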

\vspace*{-.1cm}
\y{Construction~\ref{gnew_construction} is depicted in Figure 2(b). 
Again, assume, without loss of generality, that $i  = 0$.
(The construction for the case where $i = 1$ is symmetric.)
The construction produces an execution in which $T_w$
starts its execution from $C_0$ and runs solo up to any configuration $C$ in which $x_1$ 
is  visible. Then, $T_r$ is initiated from $C$ and $c_r$ takes steps 
until it sends a message to each of the servers. 
The adversary schedules the delivery of these messages 
so that $p_1$ receives the message first and sends back a response. 
Then, $p_0$ receives the message sent by $c_r$ 
and sends back a response. Finally, $c_r$ takes steps to collect these responses
and return. We call $\sigma_{new}$ the part of this construction starting from $C$
until the point that $p_1$ sends a response, and $\gamma_{new}$ the suffix of this execution 
starting from $C$.  We later argue (in Observation~\ref{obs:gnew}) 
that $c_r$ returns $(x_0, x_1)$ in $\gamma_{old}$.
We next present the formalism of the construction.  
\vspace*{-.1cm}
\begin{construction}[Construction of execution $\gamma_{new}(C, p_i, c_r)$ and execution $\sigma_{new}(C, p_i, c_r)$]
\label{gnew_construction}
Let $\tau$ be any legal execution starting from $C_0$ which contains just one transaction: 
client $c_w$ executes a write-only transaction $T_w = (w(X_0)x_0, w(X_1)x_1)$.
Fix any $i \in \{ 0, 1\}$ and let $c_r$ be any client 
not in $\{c_w, c_0^{in}, c_1^{in}\}$.
For every reachable configuration $C$ when $\tau$ is applied from $C_0$
in which 
{\em $x_i$ is visible},
we define $\gamma_{new}(C, p_i, c_r)$ to be the execution containing all of the events described below.
In $\gamma_{new}$, $c_r$ invokes $T_r$ starting from $C$ and takes steps until  
it sends a message $mn_0(C, p_i, c_r)$ to $p_0$ and a message $mn_1(C, p_i, c_r)$ to $p_1$.
Let $C_{new}(C, p_i, c_r)$ be the resulting configuration.
Next, the adversary schedules the delivery of $mn_{1-i}(C, p_i, c_r)$ and
let $p_{1-i}$ take a step to receive $mn_{1-i}(C, p_i, c_r)$.
Since $T_r$ is a fast transaction, once 
$p_{1-i}$ receives $mn_{1-i}(C, p_i, c_r)$, 
it sends a response $mn_{1-i}'(C, p_i, c_r)$ to $c_r$. 
This sequence of steps starting from $C_{new}(C, p_i, c_r)$ to the step in which $p_{1-i}$ sends the response is denoted by $\sigma_{new}(C, p_i, c_r)$. 
Next, the adversary schedules the delivery of $mn_{i}(C, p_i, c_r)$ and  lets
$p_i$ take steps until it receives $mn_i(C, p_i, c_r)$
and sends a response to $c_r$. 
Finally, $c_r$ takes steps to receive the responses from $p_0$ and $p_1$
and return a response for $T_r$.
\end{construction}
}

By the way $\gamma_{new}(C, p_i, c_r)$ is constructed, by 
Definition \ref{def:vis-val} and by Lemma~\ref{lem:granularity}, 
we get the following.
\vspace*{-.1cm}
\begin{observation}
\label{obs:gnew}
The following claims hold for $\gamma_{new}(C, p_i, c_r)$:
\begin{enumerate}
\item \label{gnew_legal} Execution $\gamma_{new}(C, p_i, c_r)$ is legal from $C$.
\y{\item \label{gnew_ind} Configurations $C$ and $RC(C, \gamma_{new}(C, p_i, c_r))$ are indistinguishable 
to $c_w$ and $p_i$}.
\item \label{gnew_return} The return value for $T_r$ in $\gamma_{new}(C, p_i, c_r)$ is  $(x_0, x_1)$. 
\end{enumerate}
\end{observation}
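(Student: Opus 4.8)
The plan is to establish the three claims of Observation~\ref{obs:gnew} in turn, leaning on the definition of a fast read-only transaction (Definition~\ref{def:fast-op}), the definition of value visibility (Definition~\ref{def:vis-val}), and Lemma~\ref{lem:granularity}, exactly as the cited justification suggests. Claims~\ref{gnew_legal} and~\ref{gnew_return} are the routine ones; the indistinguishability claim~\ref{gnew_ind} carries the real content, and I expect it to be the main obstacle.

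For claim~\ref{gnew_legal}, I would simply verify that every event of $\gamma_{new}(C,p_i,c_r)$ is a legal step of the corresponding state machine starting from $C$. Since $T_r$ is a fast read-only transaction, Definition~\ref{def:fast-op} guarantees that $c_r$ emits its two read requests $mn_0$ and $mn_1$ in a single computational step, and that each of $p_0,p_1$ answers with a single response after at most one computational step; the two intervening delivery events are legal because the corresponding messages reside in the senders' outcome buffers when they are scheduled, and $c_r$'s final step legally collects the two responses and returns. Every message sent inside $\gamma_{new}$ (the two requests and the two responses) is delivered inside $\gamma_{new}$, so the execution is legal from $C$.

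For claim~\ref{gnew_return}, I would use that $x_i$ is visible at $C$. Consider the composite execution obtained by appending $\gamma_{new}$ to the prefix of $\tau$ that reaches $C$: this is a legal execution from $C_0$ that contains $T_w$ (by $c_w$) and the read-only transaction $T_r$ (by $c_r \notin \{c_w,c_0^{in},c_1^{in}\}$), and $T_r$ completes and reads $X_i$. By claim~\ref{gnew_legal}, $\gamma_{new}$ is legal from $C$ and contains just $T_r$ after $C$, so it is one of the executions quantified over in Definition~\ref{def:vis-val}; since $x_i$ is visible at $C$, $c_r$ must return $x_i$ for $X_i$. Because $x_i \neq x_i^{in}$, Lemma~\ref{lem:granularity} applied to the composite execution forces the other returned value to be new as well, yielding the return value $(x_0,x_1)$.

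The hard part is claim~\ref{gnew_ind}. For $c_w$ it is immediate: $c_w$ takes no step in $\gamma_{new}$ and is the destination of no message, so its state (internal state together with all buffers) is identical in $C$ and in $RC(C,\gamma_{new})$. The delicate case is $p_i$, which \emph{does} take a step, receiving $mn_i$ and emitting its response. I would argue that this single step restores $p_i$ to its state at $C$: the request $mn_i$ enters $p_i$'s income buffer through a delivery event internal to $\gamma_{new}$ and is consumed by the same step, while the response is placed in $p_i$'s outcome buffer and then delivered to $c_r$ inside $\gamma_{new}$, so both buffers return to their values at $C$; and since serving a read modifies neither the value stored by $p_i$ nor its local state, $p_i$'s internal state is left intact. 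The two points that must be nailed down are precisely these: that at the configurations $C$ where the construction is invoked $p_i$'s income buffer holds no other pending message that this step would additionally consume, and that a read is non-mutating on the server state. Once both are in place, $p_i$ is in the same state in $C$ and in $RC(C,\gamma_{new})$, which completes the claim.
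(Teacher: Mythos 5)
Your arguments for claims~\ref{gnew_legal} and~\ref{gnew_return} are fine and match the paper's (terse) justification: legality follows from the fast-transaction properties and the scheduling in Construction~\ref{gnew_construction}, and the return value follows by applying Definition~\ref{def:vis-val} to the visible value $x_i$ and then Lemma~\ref{lem:granularity} to rule out a mixed response. The part of claim~\ref{gnew_ind} concerning $c_w$ is also correct for the reason you give.

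The gap is exactly where you suspect it, and it is not a detail that can be ``nailed down'': your argument for $p_i$ rests on the premise that serving a read leaves the server's local state unchanged. Nothing in the model licenses this. A process is an arbitrary state machine, and the computation step in which $p_i$ consumes $mn_i$ and emits its response may transition $p_i$ to any new state the implementation $\Pi$ prescribes; indeed, the paper's own description of COPS-SNOW has servers record the identifiers of read-only transactions that read stale values, i.e., real implementations \emph{do} mutate server state on reads. So the literal statement ``$C$ and $RC(C,\gamma_{new})$ are indistinguishable to $p_i$'' cannot be derived the way you propose. What the paper actually relies on downstream is weaker and does not require this: in the proof of Lemma~\ref{lma:induction}, the indistinguishability that is invoked concerns only the prefix of $\gamma_{new}$ ending with $\sigma_{new}$ (equivalently, the fact that only $c_r$ and $p_{1-i}$ take steps before $p_i$ is ever scheduled), together with the legality of $\sigma_{new}$ from $C_{new}$. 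For that prefix, $p_i$ takes no computation step and receives no delivery, so its state is untouched for the standard reason that steps of other processes do not affect it --- no assumption about reads being non-mutating is needed. You should prove that restricted statement (which is what the induction consumes) rather than the full-$\gamma_{new}$ version for $p_i$; as written, your argument for claim~\ref{gnew_ind} does not go through.
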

\vspace*{-.2cm}
\subsection{The Infinite Execution}
%
\y{Recall that we prove Theorem~\ref{thm:imp} by 
constructing an infinite execution $\alpha$ 
which contains just one write-only transaction $T_w$
and by proving that in $\alpha$, 
the values written by $T_w$ never becomes visible. 
We construct $\alpha$ in Lemma \ref{lma:induction} using induction . 
Specifically, the lemma shows that there is an infinite sequence of executions 
$\alpha_1,\alpha_2, \ldots$ such that, all of them are distinct prefixes of $\alpha$ 
(we let $\alpha_0$ be the empty execution).
Lemma \ref{lma:induction} is comprised of two claims which hold for every integer $k \geq 0$. 
The first, shows that $\alpha_k$ contains the transmission of at least one message 
which is sent after the execution of $\alpha_{k-1}$ from $C_0$ (and thus, 
$\alpha_{k-1}$ is a prefix of $\alpha_k$
and $\alpha_{k-1} \neq \alpha_k$).
The second shows that after $\alpha_k$ has been performed from $C_0$,
the two values written by $T_w$ have not yet become visible. 
Although the proofs of the two claims exhibit many similarities, 
for clarity of presentation, we have decided not to merge them into one proof.
}

\vspace*{-.1cm}
\begin{lemma}
\label{lma:induction}
For any integer $k\geq 1$, there exists an execution $\alpha_k$, legal from $C_0$, 
in which only one transaction $T_w=(w(X_0)x_0, w(X_1)x_1)$ is executed by client $c_w$.
Let $C_k$ be the configuration that results when $\alpha_k$ is applied from $C_0$.
Then, the following hold:
\begin{enumerate}
\item\label{claim:msg} $\alpha_k = \alpha_{k-1}\cdot\alpha_k'$, where in $\alpha_k'$ \dd{at least} one of the following occurs: 
\begin{itemize}
\item a message is sent by $p_{k\% 2}$ to $p_{(k-1)\% 2}$, or
\item a message is sent by $p_{k\% 2}$ to $c_w$ and it holds that after $c_w$ receives this message, $c_w$ sends a message to $p_{(k-1)\% 2}$.
\end{itemize}
\item\label{claim:vis} 
In $C_k$, $x_0$ and $x_1$ are not visible \y{and $T_w$ is still active}.
\end{enumerate}
\end{lemma}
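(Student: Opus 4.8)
The plan is to prove Lemma~\ref{lma:induction} by induction on $k$, building $\alpha_k=\alpha_{k-1}\cdot\alpha_k'$ one short extension at a time. Throughout, write $i=k\%2$ and $j=(k-1)\%2=1-i$, so the server we must force to speak is $p_i$ and its intended target is $p_j$. The inductive hypothesis is exactly the two claims for $k-1$: after $\alpha_{k-1}$ is applied from $C_0$, neither $x_0$ nor $x_1$ is visible at $C_{k-1}$ and $T_w$ is still active. I would handle the base case $k=1$ by the same argument with $C_0$ in place of $C_{k-1}$: before $T_w$ issues any write the values $x_0,x_1$ have not even been written, hence are trivially not visible, and the argument below forces the first message $p_1\to p_0$.

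For Claim~\ref{claim:msg} I would argue by contradiction with a \emph{freezing} (indistinguishability) argument. By Definition~\ref{def:prog}, some solo continuation $\beta$ of $T_w$ from $C_{k-1}$ reaches a configuration in which both values are visible. Suppose $\beta$ contains no event as required by Claim~\ref{claim:msg}, i.e. $p_i$ never sends a message to $p_j$, and whenever $p_i$ sends a message to $c_w$, $c_w$ does not afterwards send a message to $p_j$. Then the projection of $\beta$ onto $\{c_w,p_j\}$ does not depend on $p_i$'s outputs, so I can replay the identical steps of $c_w$ and $p_j$ while withholding every message out of $p_i$ and letting $p_i$ take no step after $C_{k-1}$; this execution is indistinguishable from $\beta$ to both $c_w$ and $p_j$, so $p_j$ reaches the same state and $x_j$ becomes visible. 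But with $p_i$ frozen its state equals that at $C_{k-1}$, where $x_i$ is not visible; by Lemma~\ref{lem:not-visible}, Construction~\ref{gold_construction} and Observation~\ref{obs:gold} there is a schedule in which $p_i$ answers a fresh read (responding first) with the initial value $x_i^{in}$. Scheduling such a read so that $p_i$ returns $x_i^{in}$ while $p_j$ returns $x_j$ produces a read-only transaction returning a mixed pair, contradicting Lemma~\ref{lem:granularity}. Hence $\beta$ must contain a Claim~\ref{claim:msg} event, and I take $\alpha_k'$ to be the shortest prefix of $\beta$ beyond $\alpha_{k-1}$ ending with the first such message (and, in the second case, with the induced $c_w\to p_j$ send); this makes $\alpha_k'$ nonempty, so $\alpha_{k-1}\neq\alpha_k$.

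Claim~\ref{claim:vis} I would then establish by a symmetric freezing argument at $C_k$. Because $\alpha_k'$ stops immediately after the first one-directional message from $p_i$ toward $p_j$, no confirmation has yet returned to $p_i$, so freezing $p_i$ from $C_k$ again pins its read-response to $x_i^{in}$ exactly as above; Construction~\ref{gold_construction} and Observation~\ref{obs:gold} then yield a fresh read returning $x_i^{in}$ for $X_i$, and by Lemma~\ref{lem:granularity} the read returns $(x_0^{in},x_1^{in})$. By Definition~\ref{def:vis-val} neither $x_0$ nor $x_1$ is visible at $C_k$, which is Claim~\ref{claim:vis}. I keep $T_w$ active by noting that $\alpha_k'$ is finite and letting the adversary withhold at least one of $T_w$'s write-acknowledgements, so $c_w$ has not completed $T_w$.

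The step I expect to be hardest is making the two freezing surgeries rigorous, namely justifying that cutting off $p_i$'s outgoing messages leaves $c_w$ and $p_j$ in identical states. This is immediate for a direct $p_i\to p_j$ message but needs the message-chain bookkeeping in the second case, where information could leak through $c_w$; the hypothesis of Claim~\ref{claim:msg} is phrased precisely so that any $c_w\to p_j$ message incorporating $p_i$'s information is itself a Claim~\ref{claim:msg} event, closing the gap. One must also show that $p_j$ cannot become ready to serve $x_j$ from $c_w$'s own write request alone (else a mix would already arise near $C_0$, contradicting Lemma~\ref{lem:granularity}), which is what forces the enabling information to originate at $p_i$ and drives the alternation of $i$ across steps.
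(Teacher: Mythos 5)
Your argument for claim~\ref{claim:msg} is essentially the paper's: run $T_w$ solo from $C_{k-1}$ until visibility (configuration $C_v$), surgically remove $p_{k\%2}$'s steps, and splice a read whose request is served by $p_{k\%2}$ before the surgery (yielding $x_{k\%2}^{in}$) and by $p_{(k-1)\%2}$ after it (yielding $x_{(k-1)\%2}$), contradicting Lemma~\ref{lem:granularity}. One imprecision you yourself flag: you claim you can ``replay the identical steps of $c_w$'' after freezing $p_{k\%2}$. That is not justified, because the contradiction hypothesis does \emph{not} forbid $p_{k\%2}$ from sending messages to $c_w$ --- it only forbids those after which $c_w$ writes to $p_{(k-1)\%2}$ --- so $c_w$'s later steps may genuinely depend on $p_{k\%2}$'s outputs. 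The paper resolves this by splitting $\beta$ at the last $c_w\!\to\!p_{(k-1)\%2}$ send: in the prefix $\beta'_p$ any $p_{k\%2}\!\to\!c_w$ message must predate $C_{k-1}$ (else the hypothesis is violated), so $c_w$'s steps there survive the surgery, and in the suffix $c_w$'s steps are simply \emph{dropped} (only $p_{(k-1)\%2}$'s are kept), which is harmless because $p_{(k-1)\%2}$ receives nothing further from $c_w$. You need this prefix/suffix decomposition, not identical replay.

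For claim~\ref{claim:vis} there is a genuine gap. You try to prove non-visibility at $C_k$ \emph{directly}, by ``freezing $p_{k\%2}$ from $C_k$'' and invoking Construction~\ref{gold_construction}/Observation~\ref{obs:gold} to get a read from $C_k$ returning $x_{k\%2}^{in}$. This is circular: the premise of Construction~\ref{gold_construction} (via Lemma~\ref{lem:not-visible}) is precisely that the value is not visible at the configuration where the read starts, which is what you are trying to prove at $C_k$. Nor does freezing help: $p_{k\%2}$ has already taken steps inside $\alpha_k'$ (it sends $ms_k$ there), so its state at $C_k$ differs from its state at $C_{k-1}$ and nothing pins its response to a request delivered at $C_k$ to the initial value; and Definition~\ref{def:vis-val} requires the witnessing execution to start \emph{at $C_k$} and contain just the read, so the spliced execution whose read request is delivered to $p_{k\%2}$ back at $C_{k-1}$ cannot serve as that witness. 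The paper instead argues by contradiction: \emph{assume} some $x_i$ is visible at $C_k$; then Observation~\ref{obs:gnew} forces $p_{(k-1)\%2}$'s response in $\sigma_{new}(C_k,\cdot,\cdot)$ to carry the new value $x_{(k-1)\%2}$, and combining this with $\sigma_{old}(C_{k-1},\cdot,\cdot)$ and the surgered $\rho_{new}$ (the analogue of $\beta_{new}$ built from $\alpha_k'$) produces a mixed read contradicting Lemma~\ref{lem:granularity}. You need that contrapositive structure; the direct construction you describe does not go through.
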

\vspace*{-.3cm}
\begin{proof}[Proof \jjw{ of Lemma \ref{lma:induction}}]
By induction on $k$.  
%
We prove the base case \y{together with} the induction step, 
yet we clearly state the difference when the proof diverges.
To prove the induction step, fix an integer $k > 1$ 
and assume that the claim holds for any $j$, 
$1 \leq j < k$. 

\ifthenelse{\boolean{picturepage}}{}{
\begin{figure*}
    \centering
    \hfill
  \begin{subfigure}[b]{0.3\textwidth}
    \includegraphics[width=\textwidth]{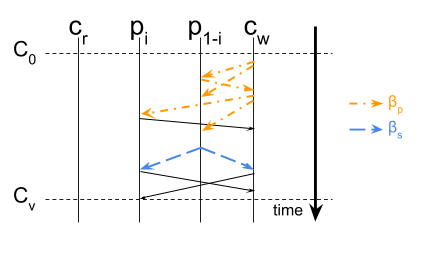}
    \caption{Execution $\beta$ and its subsequences}
    \label{fig:beta}
    \end{subfigure}\hfill
  \begin{subfigure}[b]{0.5\textwidth}
    \includegraphics[width=\textwidth]{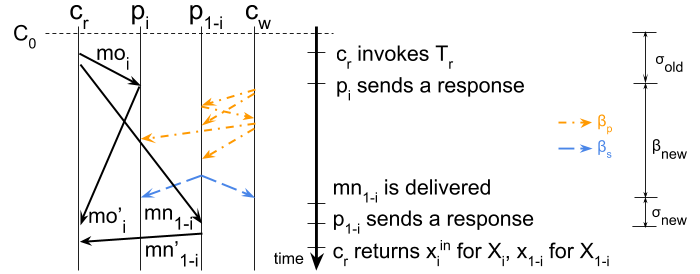}
    \caption{Execution $\gamma$}
    \label{fig:gamma}
    \end{subfigure}
    \hfill
\vspace*{-.2cm}
    \caption{Executions $\beta$ and $\gamma$. 
Execution $\beta$ consists of the steps taken by $c_w$, $p_i$ and $p_{1-i}$ from configuration $C_0$, where $x_0$ and $x_1$ are not visible yet, to $C_v$, where these two values become visible. }
    \label{fig:lma1-cons}
\vspace*{-.3cm}
\end{figure*}
}

{\bf Proof of claim~\ref{claim:msg}.}
\y{We start with claim \ref{claim:msg}. We start with a high-level description of the claim's proof.
The proof is by contradiction. }
We come up with two executions that have the following properties.
In the first execution, a read-only transaction $T_r = (r(X_0)*, r(X_1)*)$ 
(initiated by a client $c^k_r \not\in\{ c_0^{in}, c_1^{in} , c_w \}$)
is executed starting from $C_{k-1}$. \y{Since $x_0$ and $x_1$ are not visible
neither in $C_0$ (since $T_w$ has not yet started its execution in $C_0$), 
nor in $C_{k-1}$ if $k > 1$ (by induction hypothesis), 
the response for $T_r$ in this execution is $(x_0^{in}, x_1^{in})$}.
This execution is constructed based on Construction~\ref{gold_construction}.
In the second execution, $c^k_r$ invokes $T_r$ after $T_w$ has executed solo long enough
so that both values $x_0$ and $x_1$ are visible \y{(minimal progress for write-only
transactions implies that visibility of $x_0$ and $x_1$ will eventually happen)}. So, $c^k_r$ 
returns $(x_0, x_1)$ for $T_r$ in this execution. The second execution 
is constructed based on Construction~\ref{gnew_construction}.
We then combine parts of these two executions to get a third execution, $\gamma$.
Execution $\gamma$ is constructed so that we can prove that in it,
$c^k_r$ will return the same value for $X_{k \%2}$ as $c^k_r$ does in the first execution,
and the same value for $X_{(k-1) \%2}$ as $c^k_r$ does in the second execution. 
\y{Therefore, in $\gamma$, $c^k_r$ returns $(x_{k\% 2}^{in}, x_{(k-1)\%2})$ for $T_r$. 
This contradicts Lemma~\ref{lem:granularity}.}

We continue with the details of the proof of claim~\ref{claim:msg}. 
\y{To derive a contradiction, assume that the claim does not hold, 
i.e., we let $c_w, p_0, p_1$  take steps starting from $C_{k-1}$ and 
assume the following:}
\begin{itemize}
\item $p_{k\% 2}$ sends no message to $p_{(k-1)\% 2}$;
\item $p_{k\% 2}$ sends no message to $c_w$ for which it holds that after $c_w$ receives it, 
$p_{k\% 2}$ sends a message to $p_{(k-1)\% 2}$.
\end{itemize}

\y{
Since $x_0$ and $x_1$ are not visible neither in $C_0$ 
nor in $C_{k-1}$,
Lemma~\ref{lem:not-visible} implies that 
there exists at least one client
$c_r^k \not\in \{c_w, c_0^{in}, c_1^{in} \}$
such that if, starting from $C_{k-1}$, 
$c_r^k$ executes a read-only transaction 
$T_r = (r(X_0)*, r(X_1)*)$, then $c_r^k$ returns 
$(x_0^{in}, x_1^{in})$ for $T_r$.}
We derive the contradiction by constructing the execution $\gamma$, 
in which, in addition to $T_w$, $c_r^{k}$ 
executes such a read-only transaction $T_r$,
and by showing that $\gamma$ contradicts Lemma \ref{lem:granularity}.

\y{
To construct $\gamma$, we need to define an execution $\beta$ and a subsequence
$\beta_{new}$ of it. 
Specifically, $\beta_{new}$ is utilized as part of execution $\gamma$. Roughly speaking,
$\gamma$ starts with $\sigma_{old}(C_{k-1}, p_{k \%2}, c_r^k)$ (see Construction~\ref{gold_construction})
up to the point that 
$p_{k\%2}$ reports $x_{k\%2}^{in}$ for $X_{k\%2}$ to $c_r^k$ (see Observation~\ref{obs:gold}). 
Recall that only processes $c_r^k$ and $p_{k\%2}$ take steps in $\sigma_{old}(C_{k-1}, p_{k \%2}, c_r^k)$
(by Observation~\ref{obs:gold}). 
Then, the events of $\beta_{new}$ are executed to take the system in a configuration
where $x_0$ and $x_1$ are visible. 
The assumption we made above to derive a contradiction,
allows us to design $\beta_{new}$ so that, $c_r^k$ and $p_{k \% 2}$ do not take any steps
in it, and the values written by $T_w$ are visible after $\beta_{new}$ is applied
starting from $C_{k-1}$ (as well as from $RC(C_{k-1}, \sigma_{old}(C_{k-1}, p_{k \%2}, c_r^k))$).
Afterwards, an execution $\gamma_{new}$, which is derived
based on Construction~\ref{gnew_construction}, is applied  
from the resulting configuration. In $\gamma_{new}$ only process
$p_{(k-1)\%2}$ take steps (see Observation~\ref{obs:gnew}).
Execution $\beta_{new}$ has been designed so that 
$p_{(k-1)\%2}$ is "unaware" of $p_{k\% 2}$'s decision
on what to report  to $c_r^k$ as the current value of the object that $p_{k\% 2}$ stores.
So, $p_{(k-1)\%2}$ reports $x_{(k-1)\%2}$ for $X_{(k-1)\%2}$ as it does in Construction~\ref{gnew_construction}
(Observation~\ref{obs:gnew}). 
The construction of $\gamma$ concludes with $c_r^k$ taking steps until $T_r$ responds. 
We argue (below) 
that $c_r^k$ receives in $\gamma$
the same message regarding the value of $X_{k\%2}$ as in $\gamma_{old}(C_{k-1}, p_{k \%2}, c_r^k)$, 
so it returns $x_{k\%2}^{in}$ for $X_{k\%2}$ (by Observation~\ref{obs:gold}).
We also argue that  $c_r^k$ receives in $\gamma$
the same message regarding the value of $X_{(k-1)\%2}$ as in 
$\gamma_{new}$, 
so it returns $x_{(k-1)\%2}$ for $X_{(k-1)\%2}$ (by Observation~\ref{obs:gnew}). 
This contradicts Lemma \ref{lem:granularity}.
}


We first define $\beta$.
\y{For the base case (where $k = 1$),} $c_w$ invokes $T_w$ starting from $C_0$ and executes solo
(i.e., only $c_w, p_0, p_1$ take steps) until $x_0$ and $x_1$ are visible; 
since $T_w$ has not yet been invoked in $C_0$, $x_0$ and $x_1$ are not visible in $C_0$.
\y{For the induction step, the induction hypothesis (claim~\ref{claim:vis}) implies that
in $C_{k-1}$, $x_0$ and $x_1$ are not visible. 
Again, we let $c_w$  execute solo, starting from $C_{k-1}$,
until $x_0$ and $x_1$ are visible (minimal progress implies that this will eventually happen).}
In either case, let $C_v$ be the first configuration after $C_{k-1}$ in which $x_0$ and $x_1$ are visible. 
Let $\beta$ be the sequence of steps taken from $C_{k-1}$ to $C_v$
\y{(all of them are by $c_w$ and the servers). }

\y{In this and the next two paragraphs, we define $\beta_{new}$ and show that it is legal from $C_{k-1}$.
Let $\beta'_p$ be the shortest prefix of $\beta$ which contains all messages sent by $c_w$ to $p_{(k-1)\% 2}$,
and let $\beta'_s$ be the remaining suffix of $\beta$.
Let $\beta_p$ be the subsequence of $\beta'_p$ in which all steps taken by $p_{k\% 2}$ have been removed.
Let $\beta_s$ be the subsequence of $\beta'_s$ containing only steps by $p_{(k-1)\% 2}$.
Let $\beta_{new}$ be $\beta_p\cdot \beta_s$.}
\y{Note that $\beta_{new}$ does not contain any step by $p_{k \% 2}$.}
\jjw{Executions $\beta$, $\beta_p$ and $\beta_s$ are illustrated in Figure \ref{fig:beta}. In the figure, symbols $i, 1-i\in\{0,1\}$ refer to $k\%2$ and $(k-1)\%2$ respectively.}

To show that $\beta_{new}$ is legal from $C_{k-1}$, we first argue that $\beta_p$ is legal from $C_{k-1}$.
Because $\beta'_p$ is a prefix of $\beta$, $\beta'_p$ is legal from $C_{k-1}$.
By assumption, \y{$p_{k\% 2}$ sends no message to $p_{(k-1)\% 2}$,
so if $p_{(k-1)\% 2}$ receives any message 
from $p_{k\% 2}$ in $\beta'_p$, then the message must have been sent before $C_{k-1}$ 
(and must have been received after it), 
i.e., the message is not sent in $\beta'_p$.} 
\y{(For the base case, since no message is in transit in $C_0$, $p_{(k-1)\% 2}$ receives no message 
from $p_{k\% 2}$ in $\beta'_p$.)}
Moreover,  
by assumption, after $\alpha_{k-1}$,
$p_{k\% 2}$ sends no message to $c_w$ for which it holds that after $c_w$ receives it, $c_w$ sends a message to $p_{(k-1)\% 2}$.
Since, by definition, $\beta'_p$ ends with a message sent by $c_w$ to $p_{(k-1)\% 2}$ (if $\beta'_p$ is not empty),
it follows that:
\jjw{for the base case, $c_w$ receives no message from $p_{k\% 2}$ in $\beta'_p$; for the induction step, }
if $c_w$ receives any message from $p_{k\% 2}$ in $\beta'_p$, 
then the message has been sent \y{before $C_{k-1}$}.
\y{Thus, $\beta_p$, which results from
the removal of all steps taken by $p_{k\% 2}$ from $\beta'_p$, is legal from $C_{k-1}$.}
Moreover,
$RC(C_{k-1}, \beta'_p)$ and $RC(C_{k-1}, \beta_p)$ \y{(i.e., the configurations that result
when $\beta'_p$ and $\beta_p$, respectively, are applied from $C_{k-1}$)}
are indistinguishable to $p_{(k-1)\% 2}$ and $c_w$.

To complete the argument that $\beta_{new}$ is legal from $C_{k-1}$,
it remains to prove that $\beta_s$ is legal from $RC(C_{k-1}, \beta_p)$. 
\y{By definition,} only $p_{(k-1)\% 2}$ takes steps in $\beta_s$.
\y{Note that, because $RC(C_{k-1}, \beta'_p)$ and $RC(C_{k-1}, \beta_p)$ are indistinguishable to $p_{(k-1)\% 2}$,  
by proving that $\beta_s$ is legal from $RC(C_{k-1}, \beta_p')$,
it follows that} $\beta_s$ is legal from $RC(C_{k-1}, \beta_p)$.
We next argue that $\beta_s$ is indeed legal from configuration $RC(C_{k-1}, \beta_p')$. 
By assumption, 
if $p_{(k-1)\% 2}$ receives any message from $p_{k\% 2}$ in $\beta_s$,
then the message has been sent before $C_{k-1}$ \y{(i.e., the message has not been sent in $\beta$)}.
\y{For the base case, since no message is in transit in $C_0$, 
$p_{(k-1)\%2}$ receives no message from $p_{k\%2}$ in $\beta_s$.}
Recall that, by definition of $\beta'_p$, all messages from $c_w$ to $p_{(k-1)\% 2}$ are sent by the end of $\beta_p'$.
Therefore, $c_w$ does not send any message to $p_{(k-1)\% 2}$ in $\beta'_s$. 
So, any message that $p_{(k-1)\% 2}$ receives from $c_w$ in $\beta_s$ has been sent by the end of $\beta_p'$. 
Thus, $\beta_s$ is legal from $RC(C_{k-1}, \beta_p')$,
\y{and since $RC(C_{k-1}, \beta'_p)$ and $RC(C_{k-1}, \beta_p)$ are indistinguishable to $p_{(k-1)\% 2}$},
$\beta_s$ is also legal from $RC(C_{k-1}, \beta_p)$. 
\y{Therefore, $\beta_{new} = \beta_p \cdot \beta_s$ is legal from $C_{k-1}$.}

\y{From the arguments above, it also follows that} 
\y{$RC(C_{k-1}, \beta_p\cdot\beta_s)$ and $C_v = RC(C_{k-1}, \beta_p'\cdot\beta_s')$ 
are indistinguishable to $p_{(k-1)\% 2}$. 
Therefore,
$RC(C_{k-1}, \beta_{new})$ and $C_v$ are indistinguishable to $p_{(k-1)\% 2}$. 

\y{We continue to construct $\gamma$. To do so,} 
we use $\sigma_{old}(C_{k-1}, p_{k\% 2}, c_r^{k})$ (Construction~\ref{gold_construction}),
$\beta_{new}$, 
and $\sigma_{new}(C_v, p_{k\% 2}, c_r^{k})$   (Construction~\ref{gnew_construction}).
We also refer to configuration $C_{new}(C_v, p_{k\% 2}, c_r^{k})$  (Construction~\ref{gnew_construction}).
Figure \ref{fig:gamma} illustrates the construction of $\gamma$, 
where symbols $i, 1-i\in\{0,1\}$ refer to $k\%2$ and $(k-1)\%2$ respectively.}
(For simplicity, we omit $(C_{k-1}, p_{k\% 2}, c_r^{k})$
and $(C_v, p_{k\% 2}, c_r^{k})$ from the notations below.)

Recall that in $\gamma$, a client starts a read-only transaction \y{from $C_{k-1}$. 
One server responds to the client first (i.e., $\sigma_{old}$ is applied from $C_{k-1}$). 
Then the write-only transaction makes progress and the values written 
turn to be visible (specifically, $\beta_{new}$ is applied after $\sigma_{old}$). 
Then the other server receives the request of the read-only transaction and responds to the client (specifically, 
$\sigma_{new}$ is applied).}
Recall that (as we argue below) to one server, 
$\gamma$ is indistinguishable from $\gamma_{old}$ (i.e., the execution illustrated 
in Figure \ref{fig:gamma} is indistinguishable from that in Figure \ref{fig:gold}) 
and thus the server returns an old value,
while to the other server, $\gamma$ is indistinguishable from $\gamma_{new}$ 
(i.e., the execution illustrated in Figure \ref{fig:gamma} is indistinguishable 
from that in Figure \ref{fig:gnew}) and thus the other server returns a new value. 
This then leads to the contradiction.

\y{We are now ready to formally define $\gamma$.
Starting from $C_{k-1}$, the adversary applies 
$\sigma_{old}(C_{k-1}, p_{k\% 2}, c_r^{k}) \cdot
 \beta_{new} \cdot 
 \sigma_{new}(C_v, p_{k\% 2}, c_r^{k}))$
(we later prove that the application of these steps from $C_{k-1}$ is legal).}
\y{By Construction~\ref{gold_construction}, 
in the last step of $\sigma_{old}$, $p_{k\% 2}$ sends a message $mo_{k\% 2}'$ to $c_r^{k}$. 
Similarly, by Construction~\ref{gnew_construction}, 
in the last step of $\sigma_{new}$, $p_{(k-1)\% 2}$ sends a message $mn_{(k-1)\% 2}'$ to $c_r^{k}$.}
The adversary next schedules the delivery of $mo_{k\% 2}'$ and $mn_{(k-1)\% 2}'$, and lets
$c_r^{k}$ take steps until $T_r$ completes (this will happen because $T_r$ is a fast transaction). 
This concludes the construction of $\gamma$.

\y{We now argue that $\gamma$ is legal.}
\y{By construction, only processes $c_r^k$ and $p_{k\% 2}$ take steps in $\sigma_{old}$.
By Observation~\ref{obs:gold} (claim~\ref{gold_ind}),
$C_{k-1}$ and $RC(C_{k-1}, \sigma_{old})$ are indistinguishable to $c_w$ and $p_{(k-1)\% 2}$.
Since only $c_w$ and $p_{(k-1)\% 2}$ 
take steps in $\beta_{new}$, it follows that $\beta_{new}$ is legal from $RC(C_{k-1}, \sigma_{old})$. } 
Since the processes that take steps in $\sigma_{old}$ and $\beta_{new}$ are disjoint,
then $RC(C_{k-1}, \sigma_{old} \cdot \beta_{new})$ and $RC(C_{k-1},  \beta_{new}\cdot \sigma_{old} )$ are indistinguishable
\y{to all processes.
Recall that  $RC(C_{k-1}, \beta_{new})$ and $C_v$ are indistinguishable to $p_{(k-1)\% 2}$.}
\y{Since $\sigma_{old}$ is composed of a sequence of steps in which only $c_r^k$ 
and $p_{k\%2}$ take steps, $RC(C_v, \sigma_{old})$ 
and $C_{new}(C_v, p_{k\% 2}, c_r^{k})$
are indistinguishable to $p_{(k-1)\% 2}$. It follows that }
$RC(C_{k-1},  \beta_{new}\cdot \sigma_{old} )$ and $C_{new}(C_v, p_{k\% 2}, c_r^{k})$
are indistinguishable to $p_{(k-1)\% 2}$.
Because only $p_{(k-1)\% 2}$ takes steps in $\sigma_{new}$
\y{ and $\sigma_{new}$  is legal from $C_{new}$ (by definition),}
it follows that $\sigma_{new}$ is legal 
from $RC(C_{k-1},  \sigma_{old}\cdot\beta_{new})$.
Therefore, $\gamma$ is legal.

We now focus on the values returned by $c_r^k$ for $T_r$ in $\gamma$.
In $\gamma$, $c_r^{k}$ executes only transaction $T_r$.
Thus, $c_r^{k}$ decides the response for $T_r$ based solely on the values included in $mo_{k\% 2}'$ 
\y{(sent by $p_{k \%2}$ in $\sigma_{old}$)} and $mn_{(k-1)\% 2}'$ \y{(sent by $p_{(k-1)\%2}$ in $\sigma_{new}$)}.
For the base case, $mo_{k\% 2}'$ is sent before $c_w$ takes any step, 
so $mo_{k\% 2}'$ contains neither $x_0$ nor $x_1$. 
For the induction step, since $mo_{k\% 2}'$ is sent in $\gamma_{old}$, 
by Observation \ref{obs:gold} and the one-value messages property, 
$mo_{k\% 2}'$ contains neither $x_0$ nor $x_1$. 
Recall that $mn_{(k-1)\% 2}'$ is sent in $\gamma_{new}$.
By Observation \ref{obs:gnew} and the one-value messages property, 
$mn_{(k-1)\% 2}'$ contains the value $x_{(k-1)\% 2}$. 
\y{Recall that (by assumption) $c_r^k$ does not receive any other messages
from $p_0$ and $p_1$.}
\y{Thus, $c_r^k$ receives for $X_{k\%2}$ just one value, 
namely, the same value it receives for it in $\gamma_{old}$.
Similarly, it receives for $X_{(k-1)\%2}$ just one value,
namely, the same value it receives for it in $\gamma_{new}$.}
It follows that in $\gamma$, the values that $c_r^k$
returns are 
$x_{k\% 2}^{in}$ for $X_{k\% 2}$ 
and $x_{(k-1)\% 2}$ for $X_{(k-1)\% 2}$. 
This contradicts Lemma \ref{lem:granularity}. 
\y{(Note that  $\gamma$ is an execution utilized just for proving claim~\ref{claim:msg};
for every $k > 1$, we build it from scratch to prove the induction step for $k$.)}

\noindent
{\bf Definition of $\alpha_k$ and $C_k$.}
\y{We now define $\alpha_k$ and $C_k$.}
By claim~\ref{claim:msg},  it follows that 
in any legal execution starting from $C_{k-1}$
in which $c_w$ executes solo, 
at least one of the following two statements hold: 
(1) $p_{k\%2}$ sends a message to $p_{(k-1)\%2}$;
(2) $p_{k\%2}$ sends a message to $c_w$ so that after $c_w$ receives this message, $c_w$ sends a message to $p_{(k-1)\%2}$.
Let $ms_{k}$ be the first message that satisfies any of the two statements above.
We construct execution $\alpha_k'$ as follows. 
In $\alpha_k'$, $T_w$ executes solo starting from $C_{k-1}$ until $ms_k$ 
is sent\footnote{Clearly, in $\alpha_k'$, $p_{k\%2}$ must take 
at least one step. For the case where $k\geq 2$, we also require that 
in $\alpha_k'$, in the first step which $p_{k\% 2}$ takes, 
message $ms_{k-1}$ is delivered at $p_{k\% 2}$ (in order 
to comply with our model of finite message delay).}.
\y{Let $\alpha_k = \alpha_{k-1} \cdot \alpha_k'$, and let $C_k$ be the configuration 
that results when $\alpha_k$ is applied from $C_0$.}

\noindent
{\bf Proof of claim~\ref{claim:vis}.}
\jjw{To prove claim~\ref{claim:vis}, 
we use similar arguments as those in the proof of claim~\ref{claim:msg}. 
We assume that claim~\ref{claim:vis} does not hold, 
i.e., we assume that in $C_{k}$, $x_i$ is visible for some $i\in\{0,1\}$. 
To derive a contradiction, we construct an execution $\delta$ \y{(in a way similar to that 
we construct $\gamma$)} 
and show that $\delta$ contradicts Lemma~\ref{lem:granularity}. 
We first define executions $\rho$ and $\rho_{new}$ 
in a way similar to $\beta$ and $\beta_{new}$  defined in the proof of 
claim~\ref{claim:msg}.
We finally define $\delta$ based on $\sigma_{old}(C_{k-1}, p_{k\%2},c_r^{k})$, $\rho_{new}$, 
and $\sigma_{new}(C_k,p_{k\%2},c_r^{k})$, 
and we argue that in $\delta$, $c_r^{k}$ returns a response for $T_r$
that contradicts Lemma~\ref{lem:granularity}.    }

We now present the details of the proof \jjw{of claim~\ref{claim:vis}}. 
Assume that in $C_k$, $x_i$ is visible for some $i\in\{0,1\}$.
We construct $\delta$ by utilizing executions $\sigma_{old}(C_{k-1}, p_{k\%2}, c_r^{k})$ 
and $\gamma_{old}(C_{k-1}, p_{k\%2}, c_r^{k})$
(Construction~\ref{gold_construction}),
as well as $\sigma_{new}(C_k, p_{k\%2}, c_r^{k})$
and $\gamma_{new}(C_k, p_{k\%2}, c_r^{k})$  (Construction~\ref{gnew_construction}). 
By Observation~\ref{obs:gold}, $c_r^{k}$ returns
$x_{(k-1)\%2}^{in}$ for $X_{(k-1)\%2}$
and $x_{k\%2}^{in}$ for $X_{k\%2}$
for $T_r$ in $\gamma_{old}(C_{k-1}, p_{k\%2}, c_r^{k})$.
Because by assumption, $x_i$ is visible at $C_k$, Observation~\ref{obs:gnew} implies that 
$c_r^{k}$ returns $x_{(k-1)\%2}$  for $X_{(k-1)\%2}$ and $x_{k\%2}$ for $X_{k\%2}$ for $T_r$ in $\gamma_{new}(C_k, p_{k\%2}, c_r^{k})$.

To construct $\delta$, we first define an execution $\rho$ and some subsequences of it,
and we study their properties. \y{(The construction of $\rho$ and its subsequences 
is similar to that of $\beta$ and its subsequences. 
Yet, the reasoning of why the construction is legal is different, and thus worth-presenting.)}
Let $\rho$ be the sequence of steps which are taken from $C_{k-1}$ to $C_k$,
i.e., $\rho = \alpha_k'$.
Let $\rho'_p$ be the shortest prefix of $\rho$ which contains all messages sent by $c_w$ to $p_{(k-1)\%2}$,
and let $\rho'_s$ be the remaining suffix of $\rho$.
Let $\rho_p$ be the subsequence of $\rho'_p$ in which all steps taken by $p_{k\%2}$ have been removed.
Let $\rho_s$ be the subsequence of $\rho'_s$ containing only steps by $p_{(k-1)\%2}$.
Let $\rho_{new}$ be $\rho_p\cdot \rho_s$.
We utilize $\rho_{new}$ as part of our construction of $\delta$ below (\y{as we did with $\beta_{new}$ and $\gamma$). 
In the next two paragraphs, we} argue that $\rho_{new}$ is legal from $C_{k-1}$.

We first argue that $\rho_{p}$ is legal from $C_{k-1}$.
Because $\rho'_p$ is a prefix of $\rho$, $\rho'_p$ is legal from $C_{k-1}$.
\y{For the base case ($k = 1$), since no message is in transit in $C_0$, 
the definition of $\alpha_k$ implies that
$p_{(k-1)\% 2}$  receives no message from $p_{k\% 2}$ in $\rho'_p$.
For the induction step,}
the definition of $\alpha_k$ implies that if $p_{(k-1)\% 2}$ receives any message 
from $p_{k\% 2}$ in \y{$\rho'_p$}, then the message has been sent before $C_{k-1}$, 
i.e., the message is not sent in $\rho'_p$.
Moreover,  
by definition of $\alpha_k'$, 
$p_{k\%2}$  sends no message to $c_w$ for which it holds that after the receipt of this message, 
$c_w$ sends a message to $p_{(k-1)\%2}$.
Since, by definition, $\rho'_p$ ends with a message sent by 
$c_w$ to $p_{(k-1)\%2}$ (if $\rho'_p$ is not empty), it follows that:
\y{for the base case, $c_w$ receives no message from $p_{k\% 2}$ in $\rho'_p$;
for the induction step, }
if $c_w$ receives any message from $p_{k\% 2}$ in $\rho'_p$, 
then the message has been sent before $C_{k-1}$. 
Thus, \y{$\rho_p$, which results from the removal of all steps taken by $p_{k \% 2}$
from $\rho'_p$,} is legal from $C_{k-1}$. Moreover,
$RC(C_{k-1}, \rho'_p)$ and $RC(C_{k-1}, \rho_p)$ are indistinguishable to $p_{(k-1)\%2}$ and $c_w$.

\y{To  complete the argument that $\rho_{new}$ is legal from $C_{k-1}$,
it remains to argue that $\rho_s$ is legal from $RC(C_{k-1}, \rho_p)$. }
\y{By definition, only $p_{(k-1)\%2}$ takes steps in $\rho_s$.
note that, because $RC(C_{k-1}, \rho'_p)$ and $RC(C_{k-1}, \rho_p)$ 
are indistinguishable to $p_{(k-1)\%2}$,  
by proving that $\rho_s$ is legal from $RC(C_{k-1}, \rho_p')$,
it follows that $\rho_s$ is legal from $RC(C_{k-1}, \rho_p)$.
We next argue that $\rho_s$ is indeed legal from configuration 
$RC(C_{k-1}, \rho_p')$. 
By the definition of $\alpha_k$, 
if $p_{(k-1)\% 2}$ receives any message from $p_{k\% 2}$ in $\rho_s$,
then the message has been sent before $C_{k-1}$ (i.e., the message
has not been sent in $\rho$).
For the base case, since no message is in transit in $C_0$,
the definition of $\alpha_k$ implies that 
$p_{(k-1)\% 2}$ receives no message from $p_{k\% 2}$ in $\rho_s$.}
Recall that by definition of $\rho_s'$, 
all messages from $c_w$ to $p_{(k-1)\%2}$ are sent by the end of $\rho_p'$).
Thus, $\rho_s$ is legal from $RC(C_{k-1}, \rho_p')$,
and since $RC(C_{k-1}, \rho_p')$ and $RC(C_{k-1}, \rho_p)$ are indistinguishable to $p_{(k-1)\%2}$,
$\rho_s$ is also legal from $RC(C_{k-1}, \rho_p)$. 
Therefore, $\rho_{new} = \rho_p \cdot \rho_s$ is legal from $C_{k-1}$.

From the arguments above, it also follows that 
$RC(C_{k-1}, \rho_p\cdot\rho_s)$ and $C_k = RC(C_{k-1}, \rho_p'\cdot\rho_s')$ are indistinguishable to $p_{(k-1)\%2}$. 
Because $RC(C_{k-1}, \rho'_p)$ and $RC(C_{k-1}, \rho_p)$ are indistinguishable to $p_{(k-1)\%2}$,
it follows that 
$RC(C_{k-1}, \rho_{new})$ and $C_k$ are indistinguishable to $p_{(k-1)\%2}$.

\y{
We are now ready to formally define $\delta$.
To do so, we use executions 
$\sigma_{old}(C_{k-1}, p_{k\%2}, c_r^{k})$
(Construction~\ref{gold_construction}),
$\sigma_{new}(C_k, p_{k\%2}, c_r^{k})$  (Construction~\ref{gnew_construction}) 
and $\rho_{new}$.}
We also refer to $C_{new}(C_k, p_{k\%2}, c_r^{k})$  (Construction~\ref{gnew_construction}). 
\y{
Starting from $C_{k-1}$, the adversary applies the step sequence
$\sigma_{old}(C_{k-1}, p_{k\% 2}, c_r^{k}) \cdot
 \rho_{new} \cdot 
 \sigma_{new}(C_v, p_{k\% 2}, c_r^{k}))$
(we later prove that the application of these steps from $C_{k-1}$ is legal).}
For simplicity, we omit $(C_{k-1}, p_{k\% 2}, c_r^{k})$
and $(C_v, p_{k\% 2}, c_r^{k})$ from the notations below
(thus abusing notations 
$\sigma_{new}$ and $C_{new}$ which were also used in the proof of claim~\ref{claim:msg}.)
\y{By Construction~\ref{gold_construction}, 
in the last step of $\sigma_{old}$, $p_{k\% 2}$ sends a message $mo_{k\% 2}'$ to $c_r^{k}$. 
Similarly, by Construction~\ref{gnew_construction}, 
in the last step of $\sigma_{new}$, $p_{(k-1)\% 2}$ sends a message $mn_{(k-1)\% 2}'$ to $c_r^{k}$.}
The adversary next schedules the delivery of $mo_{k\% 2}'$ and $mn_{(k-1)\% 2}'$, and lets
$c_r^{k}$ take steps until $T_r$ completes (this will happen because $T_r$ is a fast transaction). 
This concludes the construction of $\delta$.

\jjw{We now argue that $\delta$ is legal.}
\y{By construction, only processes $c_r^k$ and $p_{k\% 2}$ take steps in $\sigma_{old}$.
By Observation~\ref{obs:gold} (claim~\ref{gold_ind}),
$C_{k-1}$ and $RC(C_{k-1}, \sigma_{old})$ are indistinguishable to $c_w$ and $p_{(k-1)\% 2}$.
Since only $c_w$ and $p_{(k-1)\% 2}$ 
take steps in $\rho_{new}$, it follows that $\rho_{new}$ is legal from $RC(C_{k-1}, \sigma_{old})$. }
Since the processes that take steps in $\sigma_{old}$ and $\rho_{new}$ are disjoint,
then $RC(C_{k-1}, \sigma_{old} \cdot \rho_{new})$ and $RC(C_{k-1},  \rho_{new}\cdot \sigma_{old} )$ are indistinguishable
\y{to all processes.
Recall that  $RC(C_{k-1}, \rho_{new})$ and $C_k$ are indistinguishable to $p_{(k-1)\% 2}$.}
\y{Since $\sigma_{old}$ is a composed of sequence of steps in which only $c_r^k$ 
and $p_{k\%2}$ take steps, $RC(C_k, \sigma_{old})$ 
and $C_{new}(C_k, p_{k\% 2}, c_r^{k})$
are indistinguishable to $p_{(k-1)\% 2}$. It follows that }
$RC(C_{k-1},  \rho_{new}\cdot \sigma_{old} )$ and $C_{new}(C_v, p_{k\% 2}, c_r^{k})$
are indistinguishable to $p_{(k-1)\% 2}$.
Because only $p_{(k-1)\% 2}$ takes steps in $\sigma_{new}$
\y{ and $\sigma_{new}$  is legal from $C_{new}$ (by definition),}
it follows that $\sigma_{new}$ is legal 
from $RC(C_{k-1},  \sigma_{old}\cdot\rho_{new})$.
Therefore, $\delta$ is legal.

We now focus on the values returned by $c_r^k$ for $T_r$ in $\delta$.
In $\delta$, $c_r^{k}$ executes only transaction $T_r$.
Thus, $c_r^{k}$ decides the response for $T_r$ based solely on the values included in $mo_{k\% 2}'$ 
\y{(sent by $p_{k \%2}$ in $\sigma_{old}$)} and $mn_{(k-1)\% 2}'$ \y{(sent by $p_{(k-1)\%2}$ in $\sigma_{new}$)}.
For the base case, $mo_{k\% 2}'$ is sent before $c_w$ takes any step, 
so $mo_{k\% 2}'$ contains neither $x_0$ nor $x_1$. 
For the induction step, since $mo_{k\% 2}'$ is sent in $\sigma_{old}$, 
by Observation \ref{obs:gold} and the one-value messages property, 
$mo_{k\% 2}'$ contains neither $x_0$ nor $x_1$. 
Recall that $mn_{(k-1)\% 2}'$ is sent in $\sigma_{new}$.
By Observation \ref{obs:gnew} and the one-value messages property, 
$mn_{(k-1)\% 2}'$ contains the value $x_{(k-1)\% 2}$. 
\y{Recall that (by assumption) $c_r^k$ does not receive any other messages
from $p_0$ and $p_1$.}
\y{Thus, $c_r^k$ receives for $X_{k\%2}$ just one value, 
namely, the same value it receives for it in $\gamma_{old}$ (Construction~\ref{gold_construction}).
Similarly, it receives for $X_{(k-1)\%2}$ just one value,
namely, the same value it receives for it in $\gamma_{new}$ (Construction~\ref{gnew_construction}).}
It follows that in $\delta$, the values that $c_r^k$
returns are 
$x_{k\% 2}^{in}$ for $X_{k\% 2}$ 
and $x_{(k-1)\% 2}$ for $X_{(k-1)\% 2}$. 
This contradicts Lemma \ref{lem:granularity}.
\end{proof}
\vspace{-.4cm}
\subsection{The Limits of the Impossibility Result}
\label{sec:limits}
Theorem~\ref{thm:imp} shows that multi-object write transactions (W)  are incompatible with nonblocking (N), one-roundtrip (O) and one-value (V) read-only transactions. 
In this section, we investigate the limits of our impossibility result. 
We show that it is sufficient to relax any of these properties to obtain a 
distributed storage system that satisfies the rest. 
To this end, we describe  possible designs  that achieve combinations
 of three out of the four properties.

\vspace{-9pt}
~\\\noindent{\bf  N + R + V.} This combination supports fast 
read-only transactions and is implemented by COPS-SNOW~\cite{lu_snow_2016}. 
When a client $c$ writes a new value $x_1$ of object $X_1$, 
$c$ piggybacks the information about its causal dependencies.
Before making $x_1$ visible, the server $p_1$ storing $X_1$ contacts all servers 
that store objects listed in such dependency list. For each such object $X$, $p_1$ collects the identifiers of the 
read-only transactions that have read a value of $X$ that is not the last written. 
Then $p_1$ enforces that $x_1$ is invisible to these
read-only transactions. This prevents a read-only transaction from reading
$x_1$ and then $x_0$, if in the meanwhile $x'_0$ has been created such that $x_0 <^c x'_0 <^c x_1$.
Recall that COPS-SNOW does not ensure the W property, i.e., it does not 
support multi-object write transactions.

\vspace{-9pt}
~\\\noindent{\bf  N + V + W.} This design is implemented by Wren~\cite{Spirovska:2018}.
In this system, the servers periodically exchange information about the minimum timestamp
among those of complete transactions. This {\em cutoff} timestamp is such that 
there does not exist any non-complete (or future) transaction 
with a lower timestamp. The cutoff timestamp is used to identify a snapshot of the data
storage system from which 
a read-only transaction can read without blocking. 
 A new object written by a client is assigned a timestamp
 higher than the cutoff timestamp, so as to reflect the causal dependencies of the object. 
 Therefore, each client caches locally the values of the objects
  it  writes, as long as their timestamps are smaller than the cutoff. 
This mechanism allows a client  to read its own writes that are not included yet in the snapshot 
identified by the cutoff timestamp. Thus, each read-only transaction undergoes a first
 round of communication to get informed
about the cutoff timestamp (we remark that this timestamp can be provided by any server) 
and then executes a second round of communication to actually read the objects.  

\vspace{-9pt}
~\\\noindent{\bf  N + R + W.} Although we are not aware of any system that implements this design,
we briefly discuss a modification of the COPS system~\cite{lloyd_settle_2011} to achieve 
this combination of properties. 
COPS does not implement multi-object write transactions and implements 
read-only transactions
that are nonblocking but may require two rounds of communication, each communicating
just one value of the object to be read. We can augment COPS to achieve R and W as follows. 
Each write operation within a transaction must carry a) the values of the other objects written
in the same transaction and b) information about {\em all} objects on which the transaction 
causally depends (including their values).
 This additional meta-data is stored with each written object. 
     Hence, $c$ executes a read-only transaction as follows. First, for each
    object $o$ to read, $c$ retrieves  the value of $o$ and the additional 
meta-data from the corresponding server.
     Then, once $c$ has received a reply from each involved partition, 
 $c$ identifies, for each object, the last written value, which is
      returned to the application. This protocol is not efficient, 
as it requires to store and communicate a prohibitively big amount of data. 
It is an open problem whether a more efficient N+R+W protocol exists. 

\vspace{-9pt}
~\\\noindent{\bf  R + V + W.} This design is implemented by RoCoCo-SNOW~\cite{lu_snow_2016} and Spanner~\cite{Corbett:2013}, 
which achieve strict serializability, and hence satisfy causal consistency. 
RoCoCo-SNOW implements a mechanism similar to COPS-SNOW, but assumes the {\em a priori} knowledge of the data accessed by transactions, which are executed as {\em stored procedures}. 
Spanner assumes tightly synchronized physical clocks and leverages  the known bound on clock drift
 to order transactions. 
It is an open problem whether a R + V + W implementation exists that does not rely on
 such assumptions.

\vspace*{-.2cm}
\section{Related Work}
\label{rw_short}
%
\noindent{\bf Existing systems.}
Table~\ref{tab:rw} characterizes existing systems from the point of view of the sub-properties of fast
read-only transactions that they achieve, their support for multi-object write transactions, and their target consistency
level. 
Consistently with our theorem, none among the systems that target
the system model described in Section~\ref{sec:model-and-def}  
implements multi-object write transactions
and fast read-only transactions.  
Several systems achieve three out of the four properties we consider, 
and COPS-SNOW is the only one that implements fast read-only-transactions while complying with our system 
model. 
 Our theorem implies that the design of these systems
cannot be improved with respect to the properties we consider. 

SwiftCloud and Eiger-PS
 implement fast read-only transactions and support multi-object write transactions,  
  but assume a system model that differs from the one we target.
Although they 
 eventually complete all writes, 
the values they write may be  invisible to some clients for an indefinitely long time.
Hence, read-only transactions may see very old values of some objects, even the initial ones.
To improve the freshness of the data seen by the clients, 
servers can communicate with clients out of the scope of transactional operations.  
This 
 requires that the servers 
maintain a view of the connected clients.
\y{Typically,
there are far more clients than servers, so this design choice results in
reduced performance and scalability and is avoided by state-of-the-art data platforms. }
Furthermore, SwiftCloud assumes only a single partition that stores the whole data set (potentially fully replicated across multiple sites).

\ifthenelse{\boolean{picturepage}}{}{
\begin{table}[]
\centering
\scriptsize
\begin{tabular}{lccccc}
\hline
\multicolumn{1}{|c|}{\multirow{2}{*}{System}} & \multicolumn{3}{c|}{Fast ROT}                                                                  & \multicolumn{1}{c|}{\multirow{2}{*}{WTX}} & \multicolumn{1}{c|}{\multirow{2}{*}{Consistency}} \\ \cline{2-4}
\multicolumn{1}{|l|}{}                        & \multicolumn{1}{c|}{R} & \multicolumn{1}{c|}{V} & \multicolumn{1}{c|}{N} & \multicolumn{1}{l|}{}                          & \multicolumn{1}{l|}{}                             \\ \hline
RAMP~\cite{Bailis:2014}                    & $\leq 2$ & $\leq 2$ & yes         & yes                       & Read Atomicity~\cite{Bailis:2014}                           \\
COPS~\cite{lloyd_settle_2011}                    & $\leq 2$ & $\leq 2$ & yes         & no                        & Causal Consistency~\cite{ahamad_causal_1995}                           \\
Orbe~\cite{du_orbe_2013}                    & 2        & 1        & no          & no                        & Causal Consistency                           \\
GentleRain~\cite{du_gentlerain_2014}              & 2        & 1        & no          & no                        & Causal Consistency                           \\
ChainReaction~\cite{almeida_chainreaction_2013}           & $\geq 1$ & $\geq 1$ & no          & no                        & Causal Consistency                           \\
POCC~\cite{Spirovska:2017}                   & 2        & 1        & no          & no                        & Causal Consistency                           \\
Contrarian~\cite{Didona:2018}              & 2        & 1        & yes         & no                        & Causal Consistency                           \\
COPS-SNOW~\cite{lu_snow_2016}               & 1        & 1        & yes         & no                        & Causal Consistency                           \\
Eiger~\cite{lloyd_stronger_2013}                   & $\leq 3$ & $\leq 2$ & yes         & yes                       & Causal Consistency                           \\
Wren~\cite{Spirovska:2018}                    & 2        & 1        & yes         & yes                       & Causal Consistency                           \\
SwiftCloud$^\dagger$~\cite{zawirski_write_2015}             & 1        & 1        & yes         & yes                       & Causal Consistency                           \\
Cure ~\cite{akkoorath_cure_2016}                   & 2        & 1        & no          & yes                       & Causal Consistency                           \\
Yesquel~\cite{Aguilera:2015}                 & 1        & 1        & no          & yes                       & Snapshot Isolation~\cite{Berenson:1995}                           \\
Occult~\cite{mehdi_occult_2017}                  & $\geq 1$ & $\geq 1$ & yes         & yes                       & Per Client Parallel SI~\cite{mehdi_occult_2017}                         \\
Granola~\cite{Cowling:2012}                 & 2        & 1        & yes          & yes                       & Serializability~\cite{bernstein_concurrency_1987}                         \\
TAPIR~\cite{Zhang:2015}                                         & $\leq 2$                    & 1                             & yes                              & yes                                            & Serializability \\
Eiger-PS$^\dagger$~\cite{lu_snow_2016}                & 1        & 1        & yes         & yes                       & PO-Serializability~\cite{lu_snow_2016}                        \\
Spanner$^\dagger$~\cite{Corbett:2013}              & 1        & 1        & no          & yes                       & Strict Serializability~\cite{papadimitriou_serializability_1979}                         \\
DrTM~\cite{Wei:2015}                    & $\geq 1$ & $\geq 1$  & no          & yes                       & Strict Serializability                         \\
RoCoCo~\cite{Mu:2014}                  & $\geq 1$  & $\geq 1$  & no          & yes                       & Strict Serializability                         \\
RoCoCo-SNOW~\cite{lu_snow_2016}             & 1        & 1        & no          & yes                       & Strict Serializability                         \\
Calvin~\cite{calvin_thomson_2012}                                        & 2                           & 1                             & no                               & yes                                            & Strict Serializability \\
\hline
\hline
\end{tabular}
\caption{Characterization of existing systems. 
Systems with a $^\dagger$ rely on a different system model from the one we target.}\label{tab:rw}
\vspace*{-.8cm}
\end{table}
}

\vspace{-9pt}
~\\\noindent{\bf Impossibility results.}
Existing impossibility results on storage systems typically rely on stronger consistency or progress properties. 
Brewer \cite{brewer_conjecture_2000} conjectured the CAP theorem, according to which no implementation 
guarantees \emph{consistency}, \emph{availability}, and \emph{network partition tolerance}. 
Gilbert and Lynch \cite{gilbert_brewer_2002} formalized and proved this conjecture. 
Specifically, they formalized consistency by using the notion of \emph{atomic}
objects \cite{lamport_interprocess_1986}  (i.e., by assuming \emph{linearizability}
\cite{herlihy_linearizability_1990}, which is stronger than causal consistency). 
Roohitavaf {\em et al.} \cite{roohitavaf_causalspartan_2017} considered 
a replicated storage system implemented using \emph{data centers} (i.e., clusters of servers), 
and a model in which any value written is \emph{immediately} visible to the reads initiated in the same data center. 
They proved that it is impossible to 
ensure causal consistency, availability and network partition tolerance across data centers.
Their proof (as well as the proof of the CAP Theorem) rely on message losses,
whereas in our model no message can be lost.
        
Mahajan {\em et al.}~\cite{mahajan_consistency_2011}
proved that no implementation guarantees 
one-way convergence (a progress condition stating that if processes
communicate appropriately, then they eventually converge
on the values they read for objects), availability, and any consistency stronger than real time causal
consistency~\cite{mahajan_consistency_2011} assuming that messages may be lost. 
In their model, communication may occur among any pair of processes. 
On the contrary, in our model, communication cannot occur directly between clients,
the progress property we assume is simpler (and decoupled from the underlying communication),
and no message may be lost. 

Variants of causal consistency motivated by replicated systems 
have been presented in~\cite{attiya_limitations_2017,xiang_lower_2017}. 
Their definitions are based on the events 
that are executed at the servers (and not on the histories of operations executed in the transactions 
issued by the clients). 
Attiya {\em et al.} \cite{attiya_limitations_2017} proved that a 
(non-transactional) replicated storage system implementing 
{\em multi-valued registers} (i.e., registers for which a read 
returns the set of values written by conflicting writes) 
cannot satisfy any consistency strictly stronger than observable causal consistency. 
Xiang and Vaidya \cite{xiang_lower_2017} defined the notion of {\em replica-centric causal 
consistency}, and they proved that (non-transactional) 
replicated distributed storage systems ensuring this consistency property 
have to track writes.
These works are in different avenues than our work
and focus on other models than that in our paper.
        
Lu {\em et al.} \cite{lu_snow_2016} proved the SNOW theorem, 
which shows that no fully-functional distributed transactional system can support fast 
\emph{strictly serializable}  read-only transactions. 
Lu {\em et al.} also 
showed that any fully-functional distributed transactional system that achieves a consistency level 
weaker than or equal to process-ordered serializability~\cite{lu_snow_2016} 
 (and hence causal consistency) can support fast 
read-only transactions. \y{Tomsic et al.~\cite{tomsic_2018} further showed 
that implementing fast read-only transactions with an order-preserving consistency level 
(as is the case for causal consistency) is possible only by allowing read-only transactions 
to read possibly stale values of the objects being accessed.} These results may seem at odds with our impossibility result. 
However, these results rest on very weak assumptions on the progress guarantees of write operations. 
Although they assume that all writes eventually complete, 
the values they write may be invisible to clients for an indefinitely 
long time. 
 Such a weak assumption allows the design of trivial algorithms 
in which read-only transactions can return arbitrarily old values --even the initial ones-- for the objects they read.

Recently, Didona {\em et. al}~\cite{Didona:2018} showed a lower bound
on the number of bits that must be communicated in order to support
fast causally consistent read-only transactions in distributed storage systems.
On the contrary, this paper focuses on the {\em design} implications of fast read transactions
in distributed transactional such systems, 
showing that they are incompatible with multi-object write transactions.

Since the introduction of causal consistency by Ahamad {\em et al.} \cite{ahamad_causal_1995} 
for a shared memory system,  
other versions of causal consistency has been studied~\cite{raynal_from_1997,lloyd_settle_2011}. 
Our result holds if we replace our definition of causal consistency
with those provided in these papers.

\section{Conclusion}
We present an impossibility result that establishes a fundamental trade-off in the design of
 distributed transactional storage systems: fast read transactions cannot be achieved by
  fully-functional transactional storage systems. The design of such systems must either
   sacrifice fast read transactions, or must settle for reduced functionality, i.e., support
    only single-object write transactions.

Unlike most previous work on distributed transactional systems, which target strong consistency, 
our result assumes only causal consistency. This broadens the scope of our result 
which applies also to systems that implement any consistency level stronger
 than causal consistency, or a hybrid consistency level that includes causal consistency. 
 Proving our result under such weak consistency model is nontrivial and required us to devise 
a complex proof.

Our result sheds light on the design choices of state-of-the-art distributed transactional
storage systems, and is useful for the architects of such systems because it identifies impossible designs. 

Our result also opens several interesting research questions, such as
investigating which  is the weakest consistency condition for which our impossibility result holds.
In addition, it is interesting to further investigate the design of systems that provide some of the combinations
of the studied properties, as discussed in Section~\ref{sec:limits}.

\section*{Acknowledgements}
This research has been supported by the European Union's H2020-FETHPC-2016 
programme under grant agreement No 754337 (project EuroExa), by the European ERC Grant 339539-AOC, and by an EcoCloud post-doctoral research fellowship.  Part of this work has been performed while P. Fatourou was working as a visiting professor at EPFL.


\clearpage

\appendix

\section{The General Impossibility Result}
\label{sec:gene-imp}
\begin{table*}[h!]
\scriptsize
\begin{tabular}{lll}
\hline
\multicolumn{1}{|l|}{\textbf{Symbol}} & \multicolumn{1}{l|}{\textbf{Meaning}}                                                                                                                                                                                                         & \multicolumn{1}{l|}{\textbf{Used in}}                                                                                           \\ \hline
\multicolumn{1}{|l|}{$X_i$}           & \multicolumn{1}{l|}{object $i$}                                                                                                                                                                                                               & \multicolumn{1}{l|}{System model and proofs}                                                                                    \\ \hline
\multicolumn{1}{|l|}{$x_i^{in}$}      & \multicolumn{1}{l|}{Initial, old value of object $x_i$}                                                                                                                                                                                       & \multicolumn{1}{l|}{System model and proofs}                                                                                    \\ \hline
\multicolumn{1}{|l|}{$p_i$}           & \multicolumn{1}{l|}{Server that stores $x_i$}                                                                                                                                                                                                 & \multicolumn{1}{l|}{System model and proofs}                                                                                    \\ \hline
\multicolumn{1}{|l|}{$T_i^{in}$}      & \multicolumn{1}{l|}{Transaction that writes $x_i^{in}$}                                                                                                                                                                                       & \multicolumn{1}{l|}{System model and proofs}                                                                                    \\ \hline
\multicolumn{1}{|l|}{$c_i^{in}$}      & \multicolumn{1}{l|}{Client that performs $T_i^{in}$}                                                                                                                                                                                          & \multicolumn{1}{l|}{System model and proofs}                                                                                    \\ \hline
\multicolumn{1}{|l|}{$c_w$}           & \multicolumn{1}{l|}{\begin{tabular}[c]{@{}l@{}}Client that reads $x_0^{in}$ and $x_1^{in}$ and then performs\\ a write transaction that writes new values $x_0$, $x_1$\end{tabular}}                                                          & \multicolumn{1}{l|}{System model and proofs}                                                                                    \\ \hline
\multicolumn{1}{|l|}{$T_r^{in}$}      & \multicolumn{1}{l|}{Read-only transaction issued by $c_w$ to read $X_0$, $X_1$}                                                                                                                                                               & \multicolumn{1}{l|}{System model and proofs}                                                                                    \\ \hline
\multicolumn{1}{|l|}{$T_w$}           & \multicolumn{1}{l|}{Write-only transaction issued by $c_w$ that writes new values for $X_0$, $X_1$}                                                                                                                                           & \multicolumn{1}{l|}{All proofs}                                                                                                 \\ \hline
\multicolumn{1}{|l|}{$x_i$}           & \multicolumn{1}{l|}{New value of $X_i$, written by $T_w$}                                                                                                                                                                                     & \multicolumn{1}{l|}{All proofs}                                                                                                 \\ \hline
\multicolumn{1}{|l|}{$c_r$}           & \multicolumn{1}{l|}{Client that invokes $T_r$ to read $X_0$, $Y_0$}                                                                                                                                                                           & \multicolumn{1}{l|}{System model and proofs}                                                                                    \\ \hline
\multicolumn{1}{|l|}{$m_{ji}$}        & \multicolumn{1}{l|}{j-th message sent by $p_i$}                                                                                                                                                                                               & \multicolumn{1}{l|}{System model and proofs}                                                                                    \\ \hline
\multicolumn{1}{|l|}{$Q_{in}$}        & \multicolumn{1}{l|}{Initial configuration}                                                                                                                                                                                                    & \multicolumn{1}{l|}{Proof of Lemma 1 (See Figure 1)}                                                                            \\ \hline
\multicolumn{1}{|l|}{$Q_{0}$}         & \multicolumn{1}{l|}{Configuration in which $x_0^{in}$, $x_1^{in}$ become visible}                                                                                                                                                             & \multicolumn{1}{l|}{Proof of Lemma 1(See Figure 1)}                                                                             \\ \hline
\multicolumn{1}{|l|}{$C_{0}$}         & \multicolumn{1}{l|}{Configuration in which $T_r^{in}$ has returned $x_0^{in}$, $x_1^{in}$ to $c_w$.}                                                                                                                                          & \multicolumn{1}{l|}{Proof of Lemma 1(See Figure 1)}                                                                             \\ \hline
\multicolumn{1}{|l|}{$T_r$}           & \multicolumn{1}{l|}{Read-only transaction issued by $c_r$}                                                                                                                                                                                    & \multicolumn{1}{l|}{All proofs}                                                                                                 \\ \hline
\multicolumn{1}{|l|}{$\gamma_{old}$}  & \multicolumn{1}{l|}{Execution in which $c_r$ reads $x_0^{in}$, $x_1^{in}$}                                                                                                                                                                    & \multicolumn{1}{l|}{\begin{tabular}[c]{@{}l@{}}Construction 1 (See Figure 2),\\ Proof of Lemma 3\end{tabular}}                  \\ \hline
\multicolumn{1}{|l|}{$\gamma_{new}$}  & \multicolumn{1}{l|}{Execution in which $c_r$ reads $x_0$, $x_1$}                                                                                                                                                                              & \multicolumn{1}{l|}{\begin{tabular}[c]{@{}l@{}}Construction 2  (See Figure 2) \\ Proof of Lemma 3\end{tabular}}                 \\ \hline
\multicolumn{1}{|l|}{$\gamma$}        & \multicolumn{1}{l|}{\begin{tabular}[c]{@{}l@{}}Contradictory execution in which $c_r$ reads a mix of old and new values for $X_0$, $X_1$.\\  $\gamma$ results from the mix of sub-executions of $\gamma_{old}$ and $\gamma_{new}$\end{tabular}} & \multicolumn{1}{l|}{Proof of claim 1 of Lemma 3 (see Figure 3)}                                                                 \\ \hline
\multicolumn{1}{|l|}{$\beta$}         & \multicolumn{1}{l|}{Sub-sequence of $\gamma$}                                                                                                                                                                                                 & \multicolumn{1}{l|}{Proof of claim 1 of Lemma 3 (see Figure 3)}                                                                 \\ \hline
\multicolumn{1}{|l|}{$\sigma_{old}$}  & \multicolumn{1}{l|}{Sub-execution of $\gamma_{old}$ that leads $c_r$ to read $x_0^{in}$}                                                                                                                                                      & \multicolumn{1}{l|}{\begin{tabular}[c]{@{}l@{}}Construction 1 (see Figure 2),\\ Proof of claim 1 and 2 of Lemma 3\end{tabular}} \\ \hline
\multicolumn{1}{|l|}{$\sigma_{new}$}  & \multicolumn{1}{l|}{Sub-execution of $\gamma_{new}$ that leads $T_r$ of $c_r$ to read new values}                                                                                                                                             & \multicolumn{1}{l|}{\begin{tabular}[c]{@{}l@{}}Construction 1 (see Figure 2),\\ Proof of claim 1 and 2 of Lemma 3\end{tabular}}  \\ \hline
\multicolumn{1}{|l|}{$\beta_{new}$}   & \multicolumn{1}{l|}{Sequence of steps that brings the system to a configuration in which $x_0$, $x_1$ are visible}                                                                                                                            & \multicolumn{1}{l|}{Proof of claim 1 and 2 of Lemma 3 (see Figure 3)}                                                           \\ \hline
\multicolumn{1}{|l|}{$\delta$}        & \multicolumn{1}{l|}{\begin{tabular}[c]{@{}l@{}}Contradictory execution in which $c_r$ reads a mix of old and new values for $X_0$, $X_1$. \\ $\delta$ results from the mix of sub-executions of $\gamma_{old}$ and $\gamma_{new}$\end{tabular}} & \multicolumn{1}{l|}{Proof of claim 2 of Lemma 3}                                                                                \\ \hline
\multicolumn{1}{|l|}{$\rho$}          & \multicolumn{1}{l|}{Subsequence of $\delta$ (similar to $\beta$)}                                                                                                                                                                             & \multicolumn{1}{l|}{Proof of claim 2 of Lemma 3}                                                                                \\ \hline
\multicolumn{1}{|l|}{$\alpha$}        & \multicolumn{1}{l|}{\begin{tabular}[c]{@{}l@{}}Infinite execution that contains just one write-only transaction $T_w$\\  and such that the values written by $T_w$ never become visible\end{tabular}}                                         & \multicolumn{1}{l|}{Proof of Lemma 3}                                                                                           \\ \hline
\multicolumn{1}{|l|}{$\alpha_k$}      & \multicolumn{1}{l|}{Prefix of $\alpha$}                                                                                                                                                                                                       & \multicolumn{1}{l|}{Proof of Lemma 3}                                                                                           \\ \hline
                                      &  
\end{tabular}
\caption{Table of symbols.}
\label{table:symbols}
\end{table*}

We prove our impossibility result i.e., 
if a causally consistent implementation
of a transactional storage system supports write-only transactions that write
to more than one object, then it cannot also provide 
fast read-only transactions for the general case, where the system
has any number of servers and the system is partially replicated. 
By partial replication, we mean that each server stores a different set of objects 
but these sets are not disjoint.
In addition, no server stores all objects; i.e., for any server, there is an object such that the server does not store it.
To access an object, the algorithm is allowed to access all servers that store the object.
In this case, we need to revisit our definition of a fast read-only transaction.

\begin{definition}[General fast read-only transaction]
\label{def:gene-fast-op}
We say that an implementation of a distributed storage system supports 
fast read-only transactions, if in each execution $\alpha$ it produces, 
the following hold for every read-only transaction executed in $\alpha$: 
\begin{enumerate}
\item
{\bf Non-blocking and One-Roundtrip Property.} Same as in Definition \ref{def:fast-op};
\item
{\bf General one-value messages.}
(a) Each message sent from a server to a client does not contain any value 
that has been written by some write transaction in $\alpha$ to an object $X$ if $X$ is not stored on the server;
(b)
Let $\Sigma_X$ be the set of servers which store an object $X$. 
For each object $X$ read by a client, only one server in $\Sigma_X$ sends a message to the client and the message contains only one value that has been written by some write transaction in $\alpha$ to $X$.
\end{enumerate}
\end{definition}

\begin{theorem}
\label{thm:gene-imp}
No causally consistent implementation of a transactional storage system 
that supports transactions which can concurrently read and write multiple objects, 
provides fast read-only transactions.
\end{theorem}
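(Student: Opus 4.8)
The plan is to reuse the architecture of the proof of Theorem~\ref{thm:imp}, isolating the two places at which the number of servers and partial replication genuinely intervene. First I would let the adversary fix the write-only transaction $T_w=(w(X_0)x_0,w(X_1)x_1)$ on a pair of objects $X_0,X_1$ whose storage sets are \emph{incomparable}, i.e. $\Sigma_{X_0}\not\subseteq\Sigma_{X_1}$ and $\Sigma_{X_1}\not\subseteq\Sigma_{X_0}$. Such a pair must exist: if every two storage sets were comparable they would form a chain under inclusion, and (the object set being finite) a server in the least set of that chain would lie in every $\Sigma_X$ and hence store all objects, contradicting the hypothesis that no server stores all objects. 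This lets me pick two \emph{separated} responders $s_0\in\Sigma_{X_0}\setminus\Sigma_{X_1}$ and $s_1\in\Sigma_{X_1}\setminus\Sigma_{X_0}$, which play the roles of $p_0$ and $p_1$. With this choice Lemma~\ref{lem:granularity} and Lemma~\ref{lem:not-visible} transfer verbatim, since both speak only about the values a read-only transaction returns and about causal consistency and are insensitive to how many servers exist or to which server answers a read; I would restate them unchanged, recording only that, by the general one-value property (Definition~\ref{def:gene-fast-op}, item (b)), the value a client obtains for $X_i$ is supplied by a single adversarially designated server in $\Sigma_{X_i}$.

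Second, I would generalize Constructions~\ref{gold_construction} and~\ref{gnew_construction}. A read-only transaction now emits, in one step, a request to every server in $\Sigma_{X_0}\cup\Sigma_{X_1}$, but exactly one server per object answers with a value. I would let the adversary designate $s_0$ and $s_1$ as these responders, defer the delivery of the remaining requests so that the other servers take no step during the construction, and otherwise schedule deliveries exactly as before: in the ``old'' execution the object whose new value is not yet visible is answered first (before $T_w$ makes progress), and in the ``new'' execution the already-visible object is answered first. The analogues of Observations~\ref{obs:gold} and~\ref{obs:gnew} --- legality, indistinguishability of the pre- and post-$\sigma$ configurations to $c_w$ and to the non-responding designated server, and the returned values --- go through with the same arguments, the only bookkeeping addition being that the pending requests at the non-designated servers are carried along inertly.

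The substantive work lies in the induction that replaces Lemma~\ref{lma:induction}. In place of the fixed parity-driven alternation between $p_0$ and $p_1$ used in the two-server proof, I would build, in lockstep with the prefixes $\alpha_k$, an infinite sequence of server ids $\pi_1,\pi_2,\dots$ recording at each step the server forced to send the ``necessary message.'' At step $k$ I would again assume for contradiction that, when $c_w$ runs solo from $C_{k-1}$, no such message is sent, and reassemble the contradictory execution $\gamma$ (and, for the visibility claim, $\delta$) from $\sigma_{old}$, a severed progress interval $\beta_{new}$, and $\sigma_{new}$, so that the designated new-value responder $s'$ remains unaware of the old-value responder's decision and hence answers as in $\gamma_{new}$, yielding a mix that contradicts Lemma~\ref{lem:granularity}.

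The main obstacle, and the only real departure from the two-server case, is the definition and legality of $\beta_{new}$. Information about $T_w$ and about the old-value responder's report can now reach $s'$ through a \emph{multi-hop chain} of inter-server messages rather than over a single link, so deleting the steps of one server no longer keeps $s'$'s view intact. I would therefore fix a cut separating $s'$ from the old-value responder, define $\beta_{new}$ by removing from the progress interval the steps of all servers on the far side of the cut together with every message crossing it, and prove, by an indistinguishability argument propagated one hop at a time along the chain, that $s'$ reaches the same state as in the undisturbed interval; the contradiction hypothesis is exactly what guarantees that no deleted message is the one that would have informed $s'$, so that $x_0$ and $x_1$ still become visible after $\beta_{new}$. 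Tracking these chains is what forces the server-id sequence to be constructed on the fly and to be infinite: each step advances the frontier of the required information by one hop without ever reaching $s'$, while minimal progress (Definition~\ref{def:prog}) demands that it eventually must. The outcome is, as before, an infinite execution in which $T_w$ never becomes visible, contradicting minimal progress and establishing the theorem.
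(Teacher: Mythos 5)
Your overall architecture (the two constructions, the severed progress interval, the induction producing an infinite execution that violates minimal progress) matches the paper's, and your observation that an incomparable pair of storage sets $\Sigma_{X_0},\Sigma_{X_1}$ must exist is correct. But there is a genuine gap at the very first step: you assume the adversary can \emph{designate} $s_0\in\Sigma_{X_0}\setminus\Sigma_{X_1}$ and $s_1\in\Sigma_{X_1}\setminus\Sigma_{X_0}$ as the servers that answer the reads of $X_0$ and $X_1$. Under the general one-value property (Definition~\ref{def:gene-fast-op}, item (b)), exactly one server in $\Sigma_X$ sends the value of $X$ to the client, but \emph{which} server that is is decided by the algorithm, not by the adversary. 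The adversary only controls scheduling; if it defers the requests to all servers except $s_0$ and $s_1$, and the implementation's designated responder for $X_0$ is some server in $\Sigma_{X_0}\cap\Sigma_{X_1}$, then $s_0$ simply takes its step and sends nothing, and the client must still wait for the real responder. In the worst case a single server in $\Sigma_{X_0}\cap\Sigma_{X_1}$ answers both reads, and then the entire mix argument collapses: there is no pair of distinct processes one of which can be kept "unaware" of the other's decision, so neither $\gamma$ nor $\delta$ can be assembled. Your incomparability condition does not exclude this, since incomparable sets can still intersect.

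The paper closes exactly this hole by making $T_w$ write, and $T_r$ read, \emph{all} $N+1$ objects. Then property (a) (a server cannot report a value for an object it does not store) together with partial replication forces the set $M$ of value-bearing responses to have size at least two, and the designated server $p$ is chosen \emph{a posteriori} among the actual responders in $M$ rather than fixed in advance; the contradiction is that $p$'s response cannot contain all of $x_0,\dots,x_N$, so the returned tuple is neither all-old nor all-new. A second, smaller divergence: your "cut" machinery for multi-hop chains is unnecessary given the form of the inductive claim. The negation of claim~\ref{claim:gene-msg} is that \emph{no} server sends \emph{any} inter-server message (explicit or relayed through $c_w$) in the interval, so within each induction step there are no chains at all, and $\beta_{new}$ is obtained simply by deleting the steps of every server other than $p$ from the appropriate prefix and suffix. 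The chains are tamed by how the claim is quantified, not by a graph cut.
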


(The claim of Theorem \ref{thm:gene-imp} remains the same as that of Theorem \ref{thm:imp}. The only difference is that we now assume a partially replicated system.)

The proof is by contradiction.
Assume that there exists a causally consistent implementation $\Pi$ 
which supports concurrent read-write transactions that access multiple objects and provides fast read-only transactions.
We construct an infinite execution $\alpha$ which contains just a write-only transaction $T_w$, invoked by some client $c_w$,
and we show that the values written by $T_w$ never become visible. 

Let $N+1$ be the number of objects stored in the system. 
For convenience, every execution we consider in this paper, starts with 
the execution of $N+1$ initial transactions, starting from $Q_{in}$. 
Each transaction is denoted by $T_i^{in}$ (invoked by $c_i^{in}$) that writes some initial value $x_i^{in}$ in $X_i$.
We denote by $QE_0$ a reachable configuration in which all values
 $x_0^{in}, x_1^{in}, \ldots, x_N^{in}$ are visible and all buffers are empty.\footnote{If such reachable configuration does not exist, it is already shown that for $x_0^{in}, x_1^{in}, \ldots, x_N^{in}$ to be visible, there are always messages in buffers, which contradicts the progress condition.}
We also revisit the definition of value visibility to exclude from later executions, these clients $c_0^{in}, c_1^{in}, $ $\ldots, $ $c_N^{in}$ whose sole purpose is to initialize values for each object.

\begin{definition}[General value visibility]
\label{def:gene-vis-val}
Consider any object $X$ and let $C$ be any reachable configuration
which is either quiescent or just a write-only transaction (by a client $c_w$)
writing a value $x$ into $X$ is active in $C$. 
Value $x$ is {\em visible} in $C$, if and only if the following holds: in every legal execution 
starting from $C$ which contains just a read-only transaction $T_r$ 
\y{(invoked by any client  $c \not\in \{c_w, c_0^{in}, c_1^{in}, \ldots, c_N^{in} \}$)}
that reads $X$, $x$ is returned as the value of $X$ for $T_r$.
\end{definition}

For convenience, every execution we construct hereafter, starts with 
the execution of a read-only transaction $T_r^{in} = (r(X_0)*, r(X_1)*,$ $ \ldots, r(X_N)*)$ by a client $c_w, c_w\notin \{c_0^{in}, c_1^{in}, \ldots, c_N^{in}\}$ (starting from $QE_0$). Since $T_r^{in}$ is a fast read-only transaction, $T_r^{in}$ completes; since $x_0^{in}, x_1^{in}, \ldots, x_N^{in}$ are visible in $QE_0$, $c_w$ returns $(x_0^{in}, x_1^{in}, \ldots, x_N^{in})$ for $T_r^{in}$.
We denote by $CE_0$ a reachable configuration starting from the configuration in which $T_r^{in}$ completes, in which all buffers are empty (i.e., no message is in transit).

For every reachable configuration $C$ (starting from $CE_0$) and every legal execution $\gamma$ from $C$,
we denote by $RC(C,\gamma)$ the configuration that results from the execution
of $\gamma$ starting from $C$.
Given two executions $\gamma_1$ and $\gamma_2$, we denote by $\gamma_1 \cdot \gamma_2$
the concatenation of $\gamma_1$ with $\gamma_2$, i.e., $\gamma_1 \cdot \gamma_2$ is an execution
consisting of all events of $\gamma_1$ followed by all events of $\gamma_2$ (in order).

\subsection{Preliminary Observations and Lemmas}

We start with some useful observations. The first is an immediate consequence of the fact
that $\Pi$ ensures causal consistency. 
\begin{observation}
\label{ob:gene-granularity}
Let $\gamma$ be any legal execution of $\Pi$ starting from $CE_0$ which contains two transactions: 
client $c_w$ invokes a write-only transaction
$T_w = (w(X_0)x_0, w(X_1)x_1, \ldots, w(X_N)x_N)$, and a different client $c_r$ invokes a read-only transaction $T_r = (r(X_0)*, r(X_1)*, \ldots, $ $r(X_N)*)$ which completes in $\gamma$.
Let $v_i$ be the value which $c_r$ returns for $T_r$, i.e., $T_r = (r(X_0)v_0, r(X_1)v_1, \ldots, r(X_N)v_N)$.
Then, either $v_i = x_i, \forall i\in\{0,1,\ldots,N\}$, or $v_i = x_i^{in} \forall i\in\{0,1,\ldots,N\}$.
\end{observation}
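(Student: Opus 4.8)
The plan is to mirror the proof of Lemma~\ref{lem:granularity}, generalizing the two-object cyclic argument to $N+1$ objects. I would argue by contradiction: suppose $c_r$ returns a mixture of old and new values, so there exist an index $i$ with $v_i = x_i$ (a new value) and an index $j$ with $v_j = x_j^{in}$ (an initial value). The goal is to show that no causal order can simultaneously justify reading the new value for $X_i$ and the old value for $X_j$, precisely because $T_w$ writes \emph{all} objects, including $X_j$.

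First I would record the causal facts coming from the preamble of the general construction. Because every execution we consider begins (from $QE_0$) with $c_w$ executing $T_r^{in} = (r(X_0)*, \ldots, r(X_N)*)$ and returning the initial values, and only afterwards invoking $T_w$, the program order of $c_w$ gives $T_r^{in} <_{H|c_w} T_w$, hence $T_r^{in} <^c T_w$ for any causal relation $<^c$. Since all written values are distinct, $x_\ell^{in}$ is written only by $T_\ell^{in}$, so the reads-from relation induced by any equivalent sequential history yields $T_\ell^{in} <^c T_r^{in}$ for each $\ell$. Taking the transitive closure, $T_\ell^{in} <^c T_w$ for every $\ell \in \{0,\ldots,N\}$; in particular $T_j^{in} <^c T_w$, and this holds for every causal relation in ${\bf C}_{complete(H')}$.

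Next I would invoke causal consistency to obtain, for client $c_r$, a sequential witness $\sigma_{c_r}$ that is equivalent to $complete(H')$, respects the chosen $<^c$, and in which $T_r$ is legal. From the legality of $T_r$ I would extract two ordering constraints. Reading $v_i = x_i$ forces $T_w$ (the unique writer of $x_i$) to be the last writer of $X_i$ preceding $T_r$ in $\sigma_{c_r}$, so $T_w$ precedes $T_r$. Reading $v_j = x_j^{in}$ forces $T_j^{in}$ to be the last writer of $X_j$ preceding $T_r$; but $T_w$ also writes $X_j$ and already precedes $T_r$, so the only way $T_j^{in}$ can be the last such writer is for $T_w$ to precede $T_j^{in}$ in $\sigma_{c_r}$.

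Finally I would close the loop: since $\sigma_{c_r}$ respects $<^c$ and $T_j^{in} <^c T_w$, the transaction $T_j^{in}$ must precede $T_w$ in $\sigma_{c_r}$, contradicting the constraint that $T_w$ precedes $T_j^{in}$ obtained above. Hence no mixed return is possible, and $c_r$ must return either all new values or all initial values. The one thing to get right—as in the two-object case, but now isolated from among $N+1$ objects—is that a single object $X_j$ read with its old value, combined with the fact that $T_w$ overwrites every object, already yields the cycle through $T_j^{in}$ and $T_w$; partial replication and the multiplicity of servers play no role in this observation, which is purely a statement about returned values and the causal order.
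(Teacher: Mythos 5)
Your proposal is correct and follows essentially the same route as the paper: the paper states this observation without a separate proof (as an immediate consequence of causal consistency, generalizing Lemma~\ref{lem:granularity}), and your argument is exactly the $N{+}1$-object version of that lemma's proof --- the cycle $T_j^{in} <^c T_w$ (via $c_w$'s read of the initial values in $T_r^{in}$ before invoking $T_w$) against $T_w$ preceding $T_j^{in}$ (forced by $T_r$ returning $x_j^{in}$ while $T_w$ also writes $X_j$). Your closing remark that a single old index $j$ suffices and that replication plays no role is precisely why the paper treats the generalization as immediate.
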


\begin{observation}
\label{ob:gene-not-visible}
Let $\tau$ be any arbitrary legal execution starting from $CE_0$ which contains just one transaction: 
client $c_w$ executes a write-only transaction $T_w = (w(X_0)x_0, w(X_1)x_1, \ldots, w(X_N)x_N$ 
(i.e., $c_w$ and the servers take steps to execute $T_w$).
Let $C$ be any reachable configuration when $\tau$ is applied from $CE_0$.
If $\exists i, x_i$ is not visible in $C$, then
there exists at least one client $c_r \notin \{c_w, c_0^{in}, c_1^{in}, \ldots, c_N^{in} \}$
such that if, starting from $C$, $c_r$ executes a read-only transaction 
$T_r = (r(X_0)*, $ $r(X_1)*, \ldots, r(X_N)*)$, then $c_r$ returns 
$(x_0^{in}, x_1^{in}, \ldots, x_N^{in})$ for $T_r$.
\end{observation}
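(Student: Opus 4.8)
The plan is to prove Observation~\ref{ob:gene-not-visible} as the exact general analogue of the two-object Lemma~\ref{lem:not-visible}, replacing the granularity lemma by Observation~\ref{ob:gene-granularity} and the value-visibility notion by Definition~\ref{def:gene-vis-val}; the argument is by contradiction. First I would fix a configuration $C$ reachable when $\tau$ is applied from $CE_0$, suppose that some $x_i$ is not visible in $C$, and assume toward a contradiction that the conclusion fails: for every eligible client $c_r \not\in \{c_w, c_0^{in}, \ldots, c_N^{in}\}$, executing the full-read transaction $T_r = (r(X_0)*, \ldots, r(X_N)*)$ solo from $C$ returns a tuple different from $(x_0^{in}, \ldots, x_N^{in})$.

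The first step is to note that each such $T_r$ completes: since $\Pi$ provides fast read-only transactions, $T_r$ is non-blocking (Definition~\ref{def:gene-fast-op}), so $c_r$ returns a value for every object it reads. The second step invokes Observation~\ref{ob:gene-granularity}. The execution consisting of $\tau$ (which contains the single transaction $T_w$) followed by the steps of $T_r$ is a legal execution from $CE_0$ containing exactly the write-only transaction $T_w$ and a read-only transaction $T_r$ by a different client that completes, so $c_r$ must return either $(x_0, \ldots, x_N)$ or $(x_0^{in}, \ldots, x_N^{in})$. By the contradiction hypothesis the all-initial answer is excluded, hence $c_r$ returns $(x_0, \ldots, x_N)$; in particular it returns $x_j$ for every object $X_j$.

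The third step closes the argument. The previous step holds for \emph{every} eligible client $c_r$, i.e., in every legal execution from $C$ containing just such a full-read transaction, each value $x_j$ is returned for $X_j$. By Definition~\ref{def:gene-vis-val} this means every $x_j$ is visible in $C$; in particular $x_i$ is visible, contradicting the hypothesis. This also covers the degenerate case $\tau = \emptyset$ (so $C = CE_0$), where no $x_j$ has yet been written and hence none is visible.

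The one point requiring care --- and the only place the argument is not purely mechanical --- is matching the shape of $T_r$ across the two ingredients: Observation~\ref{ob:gene-granularity} is stated for a transaction reading all of $X_0, \ldots, X_N$, whereas Definition~\ref{def:gene-vis-val} phrases visibility through a read-only transaction that reads $X$. I would resolve this by taking the visibility-witnessing transaction to be exactly the full read $T_r$: granularity forces an all-or-nothing answer on it, and since the downstream constructions only ever employ full-read transactions, establishing visibility with respect to $T_r$ is precisely what is needed. I note that none of the genuine partial-replication subtleties (chains of inter-server messages, the infinite sequence of responding server ids, or multiple servers in $\Sigma_X$) enter here; those complications arise only later, in the general analogue of the inductive Lemma~\ref{lma:induction}, so this observation remains a short consequence of causal consistency exactly as in the two-server case.
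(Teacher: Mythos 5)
Your proposal is correct and follows essentially the same route as the paper's own proof: assume every eligible reader avoids the all-initial tuple, use Observation~\ref{ob:gene-granularity} to force the all-new tuple, and then invoke Definition~\ref{def:gene-vis-val} over all eligible clients to conclude every $x_j$ is visible, contradicting the hypothesis. The extra remark on reconciling the full-read transaction with the per-object phrasing of visibility is a reasonable clarification of a point the paper leaves implicit, but it does not change the argument.
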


\begin{proof}
To derive a contradiction, 
assume that for any client 
$c_r \not\in \{c_w, c_0^{in}, c_1^{in},\ldots, c_N^{in} \}$
$c_r$ invokes $T_r$ and $c_r$ does not return 
$(x_0^{in}, x_1^{in}, \ldots, x_N^{in})$. 
Notice that by convention, as $\tau$ starts from $CE_0$, $v_i = \bot$ for any $i\in\{0,1, \ldots, N\}$ is not an option.
(Notice also that since $\Pi$ ensures that read-only transactions are fast, 
$T_r$ completes.)
Then by Observation \ref{ob:gene-granularity}, $c_r$ returns $(x_0, x_1, \ldots, x_N)$. Since $c_r$ is chosen to be an arbitrary client 
not in $\{c_w, c_0^{in}, c_1^{in}, \ldots, c_N^{in}\}$, 
Definition \ref{def:gene-vis-val} 
implies that at $C$, $x_i, \forall i\in\{0,1,\ldots,N\}$ is visible.  
This contradicts the hypothesis that at least one of the values is not visible.
\end{proof}

Notice that if $C = CE_0$, then Observation~\ref{ob:gene-not-visible} holds for all clients
not in $\{c_w, c_0^{in}, c_1^{in}, \ldots, c_N^{in} \}$.

In the rest of the proof, we will repeatedly employ the two executions we present 
in Constructions~\ref{gene-gold_construction} and~\ref{gene-gnew_construction} (each time using different parameters for the construction).

\begin{construction}[Construction of execution $\gamma_{old}(C, p, c_r)$ and execution $\sigma_{old}(C, p, c_r)$]
\label{gene-gold_construction}
Let $\tau$ be any arbitrary legal execution starting from $CE_0$ which contains just one transaction: 
client $c_w$ executes a write-only transaction $T_w = (w(X_0)x_0, w(X_1)x_1, \ldots, w(X_N)x_N)$. 
take steps to execute $T_w$).
Fix any server $p$. 
For every reachable configuration $C$ when $\tau$ is applied from $CE_0$
in which the following holds: 
{\em no value other than $x_i^{in}$ is visible for $X_i$} for at least one $i\in\{0,1,\ldots,N\}$,
and for every client $c_r$ that satisfies Observation~\ref{ob:gene-not-visible},
we define execution $\gamma_{old}(C, p, c_r)$ as follows. 
In $\gamma_{old}(C, p, c_r)$, first $c_r$ invokes $T_r$ starting from $C$. 
So, $c_r$ takes steps and since it reads all objects,  
$c_r$ sends a message $mo_q(C, p, c_r)$ to every server $q$.
Next, $mo_q(C, p, c_r)$ is delivered for every server $q, q\neq p$ and then each server $q, q\neq p$ takes a step and receives $mo_q(C, p, c_r)$.
Since $T_r$ is a fast transaction, once $q$ receives $mo_q(C, p, c_r)$, 
either $q$ sends a response $mo_q'(C, p, c_r)$ to $c_r$ within the same step, or $q$ does not.
Denote by $\sigma_{old}(C, p, c_r)$ this sequence of steps starting from $C$ to the step in which every server $q, q\neq p$ has taken a step and sent the response (if $q$ sends it). 
Next,
we let the remaining message, i.e., $mo_p(C, p, c_r)$, be delivered, and let 
$p$ take a step to receive $mo_p(C, p, c_r)$.
By one-roundtrip property, once $p$ receives $mo_p(C, p, c_r)$, either $p$ sends a response to $c_r$ within the same step, or $p$ does not.
Finally, we let $c_r$ take steps, receive the responses from $p$ and every server $q, q\neq p$ (if they send a response), 
complete $T_r$, and return a response for it.
\end{construction}

By the way $\gamma_{old}(C, p, c_r)$ is constructed and by 
Observation \ref{ob:gene-not-visible}, we get the following.

\begin{observation}
\label{obs:gene-gold}
The following claims hold:
\begin{enumerate}
\item \label{gene-gold_legal} Execution $\gamma_{old}(C, p, c_r)$ is legal from $C$, and 
\item \label{gene-gold_return} The return value for $T_r$ in $\gamma_{old}(C, p, c_r)$ is  
$(x_0^{in}, x_1^{in}, \ldots, x_N^{in})$.
\end{enumerate}
\end{observation}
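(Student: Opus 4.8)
The plan is to prove the two claims of Observation~\ref{obs:gene-gold} separately. Claim~\ref{gene-gold_legal} (legality) will follow by checking that every event scheduled in $\gamma_{old}(C,p,c_r)$ is one that $\Pi$ is forced to produce once $c_r$ invokes the fast read-only transaction $T_r$ from $C$, together with the fact that all scheduled deliveries are valid. Claim~\ref{gene-gold_return} (the return value) will follow by combining the granularity property (Observation~\ref{ob:gene-granularity}) with the fact that $c_r$ was chosen, in the construction, to be a client satisfying Observation~\ref{ob:gene-not-visible}.

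For claim~\ref{gene-gold_legal}, I would argue as follows. Since $C$ is reachable when $\tau$ is applied from $CE_0$, the state of every process at $C$ is well defined, and $\gamma_{old}(C,p,c_r)$ begins by having $c_r$ invoke $T_r$ from $C$. By the Non-blocking and One-Roundtrip Property of fast read-only transactions (Definition~\ref{def:gene-fast-op}), $c_r$ sends its read requests $mo_q(C,p,c_r)$ to all servers in a single computation step, and each server that receives its request performs at most one computation step, within which it may or may not send back a single response (consistent with the general one-value property, under which only one server per $\Sigma_X$ answers with a value). Hence every computation step appearing in $\gamma_{old}(C,p,c_r)$---the send step of $c_r$, the at-most-one step of each server $q\neq p$ (constituting $\sigma_{old}$), the step of $p$, and the final collecting step of $c_r$---is compatible with the corresponding process's state machine given its state at $C$. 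Each delivery event in the construction removes a message that was genuinely placed in transit (either a request $mo_q$ sent by $c_r$ or a response sent by a server), so the deliveries are valid as well. Finally, since $T_r$ is nonblocking it completes once $c_r$ has collected the responses, and any messages still in transit at the end (inter-server messages, or messages pending from $\tau$) can be delivered in a legal continuation; therefore $\gamma_{old}(C,p,c_r)$ is legal from $C$.

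For claim~\ref{gene-gold_return}, recall that the construction applies only to a configuration $C$ in which $x_i$ is not visible for at least one $i$, and that it fixes $c_r$ to be a client satisfying Observation~\ref{ob:gene-not-visible}; that observation guarantees precisely that such a $c_r$, when executing $T_r$ from $C$, returns $(x_0^{in},\ldots,x_N^{in})$. The one point that needs care is that the return value of a fast read-only transaction could a priori depend on the delivery schedule. I would close this by noting that in $\gamma_{old}(C,p,c_r)$ the only delivered messages are the read requests and the servers' responses, so no server receives any inter-server message before it answers $c_r$; consequently each server's response is determined solely by its state at $C$, and the value $c_r$ returns is independent of the order in which the servers are scheduled. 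Combined with Observation~\ref{ob:gene-granularity}, which forces the return to be either $(x_0^{in},\ldots,x_N^{in})$ or $(x_0,\ldots,x_N)$, this rules out the all-new outcome and yields $(x_0^{in},\ldots,x_N^{in})$.

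The main obstacle I anticipate is exactly this scheduling-independence point: one must argue carefully that, because $\gamma_{old}$ delivers no inter-server message while $T_r$ is being serviced, every server answers from its $C$-state, so the clean scheduling prescribed by the construction realizes the same return value that Observation~\ref{ob:gene-not-visible} provides. The remaining ingredients---state-machine compatibility of each step, validity of each delivery event, and completion of the nonblocking $T_r$---are routine consequences of the fast read-only property.
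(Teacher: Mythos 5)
Your proof is correct and follows essentially the same route as the paper: the paper states this observation without a separate proof, asserting that it follows directly from the way $\gamma_{old}(C,p,c_r)$ is constructed (legality via the non-blocking/one-roundtrip property of Definition~\ref{def:gene-fast-op}) and from the choice of $c_r$ as a client satisfying Observation~\ref{ob:gene-not-visible} (the return value), which is precisely your two-part argument. Your added remark that each server answers from its state at $C$ because no inter-server or $c_w$-to-server message is delivered before it responds is exactly the detail the paper leaves implicit.
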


\begin{construction}[Construction of execution $\gamma_{new}(C, p, c_r)$ and execution $\sigma_{new}(C, p, c_r)$]
\label{gene-gnew_construction}
Let $\tau$ be any arbitrary legal execution starting from $CE_0$ which contains just one transaction: 
client $c_w$ executes a write-only transaction $T_w = (w(X_0)x_0, w(X_1)x_1, \ldots, w(X_N)x_N)$.
Fix any server $p$ and let $c_r$ be any client not in $\{c_w, c_0^{in}, c_1^{in}, \ldots, c_N^{in}\}$.
For every reachable configuration $C$ when $\tau$ is applied from $CE_0$,
in which 
{\em $x_i$ is visible} for $X_i$ for at least one $i\in\{0,1,\ldots, N\}$, 
we define execution $\gamma_{new}(C, p, c_r)$ as follows. 
In $\gamma_{new}$, starting from $C$, $c_r$ invokes $T_r$ and takes steps until  
it sends a message $mn_q(C, p, c_r)$ to every server $q$.
Let $C_{new}(C, p, c_r)$ be the resulting configuration.
Next, we let $mn_p(C, p, c_r)$ be delivered.
Then, $p$ takes a step and receives $mn_p(C, p, c_r)$.
Since $T_r$ is a fast transaction, once 
$p$ receives $mn_p(C, p, c_r)$, 
either $p$ sends a response $mn_p'(C, p, c_r)$ to $c_r$ within the same step, or $p$ does not.
This sequence of steps starting from $C_{new}(C, p, c_r)$ to the step which $p$ takes (inclusive) is denoted by $\sigma_{new}(C, p, c_r)$. 
Next, 
we let $mn_q(C, p, c_r)$ be delivered for any server $q, q\neq p$, and 
let each $q$ take a step to receive $mn_q(C, p, c_r)$; 
once $q$ receives $mn_q(C, p, c_r)$, either $q$ sends a response $mn_q'(C, p, c_r)$ to $c_r$ within the same step, or $q$ does not.
Finally, we let $c_r$ take steps. 
Since $T_r$ is a fast transaction, once $c_r$ receives the responses from $p$ and every server $q, q\neq p$ (if they send a response to $c_r$), $T_r$ completes
and $c_r$ returns a response for $T_r$.
\end{construction}

By the way $\gamma_{new}(C, p, c_r)$ is constructed and by 
Definition \ref{def:gene-vis-val}, we get the following.

\begin{observation}
\label{obs:gene-gnew}
The following claims hold for $\gamma_{new}(C, p, c_r)$:
\begin{enumerate}
\item \label{gene-gnew_legal} Execution $\gamma_{new}(C, p, c_r)$ is legal from $C$, and 
\item \label{gene-gnew_return} The return value for $T_r$ in $\gamma_{new}(C, p, c_r)$ is  $(x_0, x_1, \ldots, x_N)$. 
\end{enumerate}
\end{observation}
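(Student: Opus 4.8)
The plan is to prove the two claims separately, deriving the first (legality) directly from the defining properties of fast read-only transactions together with the asynchronous model, and the second (the return value) by combining the value-visibility hypothesis with the granularity observation. Both arguments are short once Construction~\ref{gene-gnew_construction} is read carefully, so the work lies mainly in checking that the hypotheses of the earlier results are met.

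For claim~\ref{gene-gnew_legal}, I would observe that every step prescribed by Construction~\ref{gene-gnew_construction} is exactly a step $\Pi$ performs when $c_r$ runs a fast read-only transaction, hence each is compatible with the relevant process's state machine given its state at $C$. Concretely, by the Non-blocking and One-Roundtrip property of Definition~\ref{def:gene-fast-op}, when $c_r$ invokes $T_r$ it sends one message $mn_q(C,p,c_r)$ to each server $q$ in a single computation step, reaching $C_{new}(C,p,c_r)$; each server $q$ then takes at most one computation step to serve the request and (possibly) reply; and $c_r$ completes upon collecting the replies that are sent. The construction merely fixes a particular delivery order (first $mn_p$, then the remaining $mn_q$), which the adversary is free to choose in an asynchronous system. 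Since every message generated inside $\gamma_{new}$ (the requests to all servers, and all responses servers emit) is scheduled for delivery within $\gamma_{new}$, the ``all messages sent are eventually received'' requirement is met, so $\gamma_{new}(C,p,c_r)$ is legal from $C$.

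For claim~\ref{gene-gnew_return}, I would first use the visibility hypothesis. By assumption there is some index $i$ for which $x_i$ is visible in $C$. Since, by claim~\ref{gene-gnew_legal}, $\gamma_{new}$ is a legal execution starting from $C$ that contains just the read-only transaction $T_r$ reading $X_i$ (among the other objects), invoked by a client $c_r \notin \{c_w, c_0^{in}, c_1^{in}, \ldots, c_N^{in}\}$, Definition~\ref{def:gene-vis-val} forces $T_r$ to return $x_i$ for $X_i$; that is, $v_i = x_i$. To propagate this to all objects, let $\tau_C$ be the prefix of $\tau$ with $RC(CE_0,\tau_C)=C$ and consider $\tau_C \cdot \gamma_{new}$ from $CE_0$. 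It is legal from $CE_0$ (a prefix of the legal $\tau$ followed by an execution legal from $C$), and it contains $T_w$ (invoked by $c_w$ within $\tau_C$; note $T_w$ must already have been invoked by $C$, since $x_i$ is visible there) and $T_r$ (invoked by $c_r$ and completing inside $\gamma_{new}$). Observation~\ref{ob:gene-granularity} therefore applies and yields that either $v_j = x_j$ for all $j$, or $v_j = x_j^{in}$ for all $j$. Because $v_i = x_i \neq x_i^{in}$ (written values are distinct from the initial ones), the second alternative is excluded, so $v_j = x_j$ for every $j$ and $c_r$ returns $(x_0, x_1, \ldots, x_N)$.

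I expect the only delicate point to be the last step of claim~\ref{gene-gnew_return}: Observation~\ref{ob:gene-granularity} is phrased for executions that start at $CE_0$, whereas $\gamma_{new}$ starts at $C$, so I must splice in the prefix $\tau_C$ and verify that the concatenation is a legal $CE_0$-execution containing precisely the two transactions $T_w$ and $T_r$, with $T_r$ completing. Everything else reduces to unwinding the definitions of legality and visibility, so I would keep those checks explicit but brief.
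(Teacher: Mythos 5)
Your proof is correct and follows essentially the same route the paper intends: the paper states this as an observation justified in one line by the construction, the visibility definition, and the granularity observation (cited explicitly in the two-server analogue, Observation~\ref{obs:gnew}), which is exactly your decomposition of legality-by-construction plus visibility forcing $v_i = x_i$ plus the all-or-nothing claim of Observation~\ref{ob:gene-granularity}. Your extra care in splicing the prefix $\tau_C$ so that Observation~\ref{ob:gene-granularity} (phrased for executions from $CE_0$) applies to $\tau_C \cdot \gamma_{new}$ is a detail the paper leaves implicit, and it is handled correctly.
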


We are now ready to construct an execution $\alpha_1$ (that will be a prefix of $\alpha$) 
in which for at least one server $q$, $q$ sends at least one message.

\begin{lemma}
\label{lma:gene-base}
In any arbitrary legal execution starting from $CE_0$ which contains just one transaction: 
client $c_w$ executes a write-only transaction $T_w = (w(X_0)x_0, w(X_1)x_1, \ldots, w(X_N)x_N)$, 
it holds that for at least one server $q$ in the system,
at least one of the following two happens:
\begin{itemize}
\item $q$ sends a message to a server other than $q$ (i.e., $q$ sends an {\em explicit} message to a server);
\item $q$ sends a message to $c_w$ so that after $c_w$ receives this message, $c_w$ sends a message to a server other than $q$
(i.e., $q$ sends an {\em implicit} message to a server).
\end{itemize}
\end{lemma}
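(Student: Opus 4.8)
The plan is to proceed by contradiction, generalizing the two-server argument behind claim~\ref{claim:msg} to the sets $\Sigma_X$ of servers that store each object. Suppose that, in the solo execution of $T_w$ from $CE_0$, \emph{no} server ever sends an explicit or implicit message. By minimal progress (Definition~\ref{def:prog}), letting $c_w$ run $T_w$ solo eventually makes all of $x_0,\dots,x_N$ visible; let $C_v$ be the first configuration at which this holds and let $\beta$ be the step sequence from $CE_0$ to $C_v$ (taken only by $c_w$ and the servers). Two anchor facts drive the contradiction: at $CE_0$ no value written by $T_w$ is visible, so Observation~\ref{ob:gene-not-visible} yields a client $c_r\notin\{c_w,c_0^{in},\dots,c_N^{in}\}$ whose transaction $T_r=(r(X_0)*,\dots,r(X_N)*)$ returns $(x_0^{in},\dots,x_N^{in})$ when run from $CE_0$; whereas at $C_v$ every value is visible, so by Construction~\ref{gene-gnew_construction} and Observation~\ref{obs:gene-gnew} a read issued from $C_v$ returns $(x_0,\dots,x_N)$.

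Next I would splice these two behaviours into a single legal execution $\gamma$ in which $c_r$ is forced to return \emph{some} old value together with \emph{some} new value, contradicting Observation~\ref{ob:gene-granularity}. Concretely, $\gamma$ has the shape $\sigma_{old}(CE_0,p,c_r)\cdot\beta_{new}\cdot\sigma_{new}(\cdot,p,c_r)$ for a suitably chosen server $p$: first $\sigma_{old}$ (Construction~\ref{gene-gold_construction}) lets the servers other than $p$ answer $c_r$'s read from $CE_0$, so each of them reports an initial value while $p$ takes no step; then $\beta_{new}$ is the subsequence of $\beta$ obtained by deleting every step of the servers that have already answered, retaining only the steps of $c_w$ and of $p$. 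The silence assumption is exactly what makes $\beta_{new}$ legal: since none of the deleted servers sent a message to $p$, and none of their responses to $c_w$ triggered a further message to $p$, erasing their steps leaves the view of $p$ and $c_w$ unchanged, so $RC(CE_0,\sigma_{old}\cdot\beta_{new})$ is indistinguishable to $p$ from the visible configuration $C_v$. Finally $\sigma_{new}$ lets $p$ answer, reporting a \emph{new} value by Observation~\ref{obs:gene-gnew}. Because no server stores all objects, at least one object answered in $\sigma_{old}$ is reported with its initial value while an object answered by $p$ in $\sigma_{new}$ is reported new, yielding the forbidden mixture.

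The main obstacle is partial replication, precisely the point flagged in the introduction that ``more than one server may now respond to the same read request.'' Two difficulties arise. First, the one-value property forbids two servers from answering for the \emph{same} object, so the splice must fix, for every object, a single responder together with a consistent delivery schedule; I would enforce this by delaying $c_r$'s requests so that exactly the intended responder for each object takes a step. Second, and more seriously, if every object stored by $p$ is also replicated on some server other than $p$, those servers already answer old for all of $p$'s objects in $\sigma_{old}$, leaving $p$ nothing new to report and collapsing the mixture. The remedy is to freeze a whole \emph{cut} of servers rather than a single $p$: choose a set $P$ and an object $X_i$ with $\Sigma_{X_i}\subseteq P$, advance only $P\cup\{c_w\}$ so that $x_i$ becomes visible, and keep the complement frozen so that it answers some $X_j\notin\bigcup_{q\in P}(\text{objects of }q)$ with its initial value, again invoking the silence assumption to justify deleting the complement's steps. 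Showing that such a cut exists (using that no server stores all objects and that $m>1$) and that the routing of $c_r$'s reads realizes the intended responders is the delicate part; once it is in place the argument ends, as above, in a contradiction with Observation~\ref{ob:gene-granularity}, establishing Lemma~\ref{lma:gene-base}.
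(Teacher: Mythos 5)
Your overall skeleton matches the paper's: argue by contradiction, let $T_w$ run solo to the first all-visible configuration $C_v$, and splice $\sigma_{old}(CE_0,p,c_r)\cdot\beta_{new}\cdot\sigma_{new}(C_v,p,c_r)$ so that the silence assumption makes deleting the other servers' steps legal and leaves $p$ unable to distinguish the spliced run from $C_v$. Where you diverge is in how the forced mixture of old and new values is secured under partial replication, and that is where your proposal has a genuine gap.

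The difficulty you flag --- that every object stored by $p$ may also be answered with an old value by some replica in $\sigma_{old}$, leaving $p$ ``nothing new to report'' --- is not resolved in the paper by freezing a cut; it is resolved by the \emph{general one-value property} (part~(b) of Definition~\ref{def:gene-fast-op}), which stipulates that for each object exactly one server in $\Sigma_X$ sends a value to the client. The paper observes that the set $M$ of value-carrying responses that $c_r$ receives in $\gamma_{new}(C_v,\cdot,c_r)$ is the same no matter which server is scheduled last, that $|M|\geq 2$ because the overall response is $(x_0,\dots,x_N)$ while no server stores all objects, and then fixes $p$ to be any server contributing a response in $M$. This guarantees that $p$, acting in $\sigma_{new}$ from a state indistinguishable from $C_v$, reports at least one new value but cannot report all of them, while every other server answers from $CE_0$ and reports no new value; only $p$ and $c_w$ need to be advanced in $\beta_{new}$. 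By contrast, your proposed remedy --- choose a set $P$ with $\Sigma_{X_i}\subseteq P$ for some $X_i$ and with some object $X_j$ stored on no server of $P$ --- need not exist: with three servers storing $\{X_0,X_1\}$, $\{X_0,X_2\}$ and $\{X_1,X_2\}$, every set of servers containing all replicas of some object already touches every object, so no such cut can be found and your argument cannot be completed. Likewise, selectively delaying $c_r$'s requests so that only ``intended responders'' take a step is unnecessary and departs from both constructions, which deliver $c_r$'s request to every server and let the protocol's own one-value property determine the unique responder per object.
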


\begin{proof}
To derive a contradiction, 
assume that for every server $q$ in the system, the following hold:
\begin{itemize}
\item $q$ sends no message to any other server;
\item $q$ sends no message to $c_w$ so that after $c_w$ receives the message, $c_w$ sends a message to a server other than $q$.
\end{itemize}

To derive a contradiction, we will construct an execution $\gamma$, where in addition to transaction $T_w$,
a read-only transaction $T_r=(r(X_0)*, r(X_1)*, \ldots, r(X_N)*)$ is invoked by a client $c_r$ 
not in $\{c_w, c_0^{in}, $ $c_1^{in}, \ldots, c_N^{in}\}$
and $T_r$ completes in $\gamma$. We will show that $\gamma$ contradicts Observation \ref{ob:gene-granularity}.

To construct $\gamma$, we first define an execution $\beta$ and some subsequences of it,
and we study their properties. 
Starting from $CE_0$, $c_w$ invokes $T_w$ and executes solo
(i.e., only $c_w$ and all servers take steps) until $x_i$ is visible for any $i\in\{0,1,\ldots,N\}$ 
(minimal progress for write-only transactions implies that there is a configuration 
in which $x_i$ is indeed visible for any $i\in\{0,1,\ldots,N\}$).
Let $C_v$ be the first configuration where $x_i$ is visible for any $i\in\{0,1,\ldots,N\}$.
Let $\beta$ be the sequence of steps which are taken from $CE_0$ to $C_v$. 
For any server $p$, consider execution $\gamma_{new}(C_v, p, c_r)$.
Because in $\gamma_{new}(C_v, p, c_r)$, starting from $C_v$, only the messages sent by $c_w$  are delivered before $c_r$ takes the steps to complete $T_r$, thus to any server $p$, $\gamma_{new}(C_v, p, c_r)$ and $\gamma_{new}(C_v, s, c_r)$ are indistinguishable for any server $s, s\neq p$.
Thus when $c_r$ takes the steps to complete $T_r$, $c_r$ receives the same set of responses in $\gamma_{new}(C_v, p, c_r)$ for any server $p$.
In order to exclude empty and garbage responses, let $M$ be the smallest set of non-empty responses where each response contains a value that has been written by some write transaction. As the system is partially replicated, the size of $M$ is at least two.
W.l.o.g., we fix any server $p$ which sends a response included in $M$ and thus fix execution $\gamma_{new}(C_v, p, c_r)$.
Then, Observation~\ref{obs:gene-gnew} implies that the response for $T_r$ in $\gamma_{new}(C_v, p, c_r)$
is $(x_0, x_1, \ldots, x_N)$.
For simplicity, we omit 
$(C_v, p, c_r)$
from the notation in the rest of the proof of this lemma.

Recall that $\beta$ is the sequence of steps which are taken from $CE_0$ to $C_v$. 
Let $\beta'_p$ be the shortest prefix of $\beta$ which contains all messages sent by $c_w$ to $p$.
Let $\beta_p$ be the subsequence of $\beta'_p$ in which all steps taken by any server $q, q\neq p$ have been removed.
Let $\beta'_s$ be the suffix of $\beta$ after $\beta_p'$.
Let $\beta_s$ be the subsequence of $\beta'_s$ containing only steps by $p$.
Let $\beta_{new}$ be $\beta_p\cdot \beta_s$.
We now argue that $\beta_{new}$ is legal from $CE_{0}$.
Because $\beta'_p$ is a prefix of $\beta$, $\beta'_p$ is legal from $CE_{0}$.
By assumption, $p$ does not receive any message 
from any server $q, q\neq p$ in $\beta'_p$. Moreover,  
any server $q, q\neq p$ sends no message to $c_w$ so that after $c_w$ receives the message, $q$ sends a message to $p$.
By definition, $\beta'_p$ ends with a message sent by $c_w$ to $p$ (if it is not empty). It follows that 
$c_w$ does not receive any message from $q$ in $\beta'_p$.
Thus, $\beta_p$ is legal from $CE_0$. Moreover,
$RC(CE_0, \beta'_p)$ and $RC(CE_0, \beta_p)$ are indistinguishable to $p$ and $c_w$.

We next argue that $\beta_s$ is legal from $RC(CE_0, \beta_p')$. 
We notice that in $\beta_s$, only $p$ takes steps.
Then because $RC(CE_0, \beta'_p)$ and $RC(CE_0, \beta_p)$ are indistinguishable to $p$,  
by arguing that $\beta_s$ is legal from $RC(CE_0, \beta_p')$,
we also argue that $\beta_s$ is legal from $RC(CE_0, \beta_p)$.
Below we argue that $\beta_s$ is indeed legal from $RC(CE_0, \beta_p')$. 
By assumption, any server $q, q\neq p$ does not send any message to $p$ in $\beta$,
so $p$ does not receive any message from any server $q, q\neq p$ in $\beta_s$.
By definition of $\beta_s'$, $c_w$ does not send any message to $p$ in $\beta'_s$, 
(because all messages from $c_w$ to $p$ are sent in $\beta_p'$),
so any message $p$ receives from $c_w$ in $\beta_s$ has been sent in $\beta_p'$. 
Thus $\beta_s$ is legal from $RC(CE_0, \beta_p')$,
 and $\beta_s$ is legal from $RC(CE_0, \beta_p)$. 
Moreover, 
by arguing that $\beta_s$ is legal from $RC(CE_0, \beta_p')$,
we also show that $RC(CE_0, \beta_p'\cdot\beta_s)$ and $C_v = RC(CE_0, \beta_p'\cdot\beta_s')$ are indistinguishable to $p$. 
Because $RC(CE_0, \beta'_p)$ and $RC(CE_0, \beta_p)$ are indistinguishable to $p$,
it follows that 
$RC(CE_0, \beta_{new})$ and $C_v$ are indistinguishable to $p$.

To construct $\gamma$, we have to utilize executions $\sigma_{old}(CE_0, p, c_r)$
and $\sigma_{new}(C_v, p, c_r)$ and refer to configuration $C_{new}(C_v, p, c_r)$.
(For simplicity, we omit $(CE_0, p, c_r)$
and $(C_v, p, c_r)$ from the notations below.)
Specifically, starting from $CE_0$, 
$\sigma_{old}$ is first applied.
Recall that for any server $q, q\neq p$, in its last step of $\sigma_{old}$, $q$ sends at most one message $mo_q'$ to $c_r$.
Since $CE_0$ and $RC(CE_0, \sigma_{old})$ are indistinguishable to $c_w$ and $p$, and since only $c_w$ and $p$ 
take steps in $\beta_{new}$, $\beta_{new}$ is legal from $RC(CE_0, \sigma_{old})$. So, to construct $\gamma$,
$\beta_{new}$ is applied from $RC(CE_0, \sigma_{old})$.
The resulting configuration is $RC(CE_0, \sigma_{old} \cdot \beta_{new})$.
Since the processes which take steps in $\sigma_{old}$ and $\beta_{new}$ are disjoint,
then $RC(CE_0, \sigma_{old} \cdot \beta_{new})$ and $RC(CE_0,  \beta_{new}\cdot \sigma_{old} )$ are indistinguishable.
Recall that to $p$, $RC(CE_0, \beta_{new})$ and $C_v$ are indistinguishable.
Therefore, to $p$, 
$RC(CE_0,  \beta_{new}\cdot \sigma_{old} )$ and $C_{new}$
are indistinguishable.
Because only $p$ takes steps in $\sigma_{new}$,
$\sigma_{new}$ is therefore legal from 
$RC(CE_0,  \beta_{new}\cdot \sigma_{old} )$
and also legal from $RC(CE_0,  \sigma_{old}\cdot\beta_{new})$.
We apply it from $RC(CE_0,  \sigma_{old}\cdot\beta_{new})$.
Recall that in the last step of $\sigma_{new}$, $p$ sends a message $mn_p'$ to $c_r$.
Finally, $mo_q'$ for any server $q, q\neq p$ (if $q$ sends a response) and $mn_p'$ are delivered in $\gamma$, and next, 
$c_r$ takes steps. Since $T_r$ is a fast transaction, $T_r$ completes. 
This concludes the construction of $\gamma$. 

In $\gamma$, $c_r$ executes only transaction $T_r$.
Thus, $c_r$ decides the response for $T_r$ based solely on  the values included in $mo_q'$ for any server $q, q\neq p$ (if $q$ sends a response) and $mn_p'$.
Since $mo_q'$ for any server $q, q\neq p$ is sent before $c_w$ takes steps, $mo_q'$ contains no value in $\{x_0, x_1, \ldots, x_N\}$.
Recall that $mn_p'$ is sent in $\gamma_{new}$.
By the one-value messages property, $mn_p'$ contains only values in $\{x_0, x_1, \ldots,x_N\}$.
As the system is partially replicated, $mn_p'$ cannot contain all values in $\{x_0, x_1, \ldots, x_N\}$.
It follows that in $\gamma$, the return value of $c_r$ for $T_r$ 
can be neither $(x_0, x_1, \ldots, x_N)$ nor $(x_0^{in}, x_1^{in}, \ldots, x_N^{in})$.
This contradicts Observation \ref{ob:gene-granularity}.
\end{proof}

By Lemma \ref{lma:gene-base}, we have that
for at least one server $q$ in the system, at least one of the following two happens 
in any arbitrary legal execution starting from $CE_0$ which contains 
just one transaction $T_w = (w(X_0)x_0, w(X_1)x_1, \ldots, w(X_N)x_N)$ executed by $c_w$:
(1) $q$ sends a message to a server other than $q$;
(2) $q$ sends a message to $c_w$ so that: after $c_w$ receives this message, $c_w$ sends a message to a server other than $q$.
Let $m_1$ be the first message that satisfies any of the two claims above.

We are now ready to construct execution $\alpha_1$. 
In $\alpha_1$, $T_w$ executes solo starting from $CE_0$ until $m_1$ is sent.
We denote by $C_1$ the configuration that results when $\alpha_1$ is applied from $CE_0$.

\begin{lemma}
\label{lma:gene-e0}
At $C_1$, $x_0, x_1, \ldots, x_N$ are not visible.
\end{lemma}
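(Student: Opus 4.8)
The plan is to prove Lemma~\ref{lma:gene-e0} by contradiction, re-using almost verbatim the machinery of Lemma~\ref{lma:gene-base} with $\rho=\alpha_1$ in the role of $\beta$ and the configuration $C_1$ in the role of $C_v$. So assume that at $C_1$ some value $x_i$ is visible. Fix a read-only transaction $T_r=(r(X_0)*,\ldots,r(X_N)*)$ issued by a fresh client $c_r\notin\{c_w,c_0^{in},\ldots,c_N^{in}\}$ and look at the executions $\gamma_{new}(C_1,\cdot,c_r)$. By Observation~\ref{obs:gene-gnew}, the visibility of $x_i$ at $C_1$ forces $c_r$ to return $(x_0,\ldots,x_N)$, and the same indistinguishability argument used in Lemma~\ref{lma:gene-base} shows that the set of responses delivered to $c_r$ is independent of which server is singled out as the last responder. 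Let $M$ be the minimal set of servers whose responses carry a written value; since the system is partially replicated, no single server stores all objects and hence $|M|\geq 2$. Fixing any $p\in M$ guarantees that $p$'s response $mn_p'$ in $\gamma_{new}(C_1,p,c_r)$ carries at least one new value while, again by partial replication, it cannot carry all of $x_0,\ldots,x_N$.

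The next step is to carve $\rho=\alpha_1$ into a reordered execution $\rho_{new}$ in exact analogy with $\beta_{new}$: let $\rho'_p$ be the shortest prefix of $\rho$ containing every message that $c_w$ sends to $p$, let $\rho_p$ be $\rho'_p$ with the steps of all servers $q\neq p$ deleted, let $\rho'_s$ be the remaining suffix, let $\rho_s$ be the restriction of $\rho'_s$ to the steps of $p$, and set $\rho_{new}=\rho_p\cdot\rho_s$. I would then show that $\rho_{new}$ is legal from $CE_0$ and that $RC(CE_0,\rho_{new})$ is indistinguishable from $C_1$ to $p$. The one place where the definition of $C_1$ is genuinely needed is here: $m_1$ is the \emph{first} explicit or implicit inter-server message occurring in $\alpha_1$ and it is produced only in the very last step, so no such message is ever received during $\rho$. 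Because all buffers are empty at $CE_0$, this means that within $\rho'_p$ the server $p$ receives no message from any $q\neq p$, and $c_w$ receives no message from some $q\neq p$ that it subsequently relays to $p$; deleting the other servers' steps therefore preserves the views of both $p$ and $c_w$, and the standard suffix argument gives that $\rho_s$ is legal from $RC(CE_0,\rho_p)$ and that $p$ ends in its $C_1$ state.

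Finally I would assemble $\delta=\sigma_{old}(CE_0,p,c_r)\cdot\rho_{new}\cdot\sigma_{new}(C_1,p,c_r)$ and establish its legality exactly as $\gamma$ was handled in Lemma~\ref{lma:gene-base}: $\sigma_{old}$ touches only $c_r$ and the servers $q\neq p$, whereas $\rho_{new}$ touches only $c_w$ and $p$, so the two segments commute, and since $RC(CE_0,\rho_{new})$ is indistinguishable from $C_1$ to $p$, the configuration reached just before $\sigma_{new}$ is indistinguishable from $C_{new}(C_1,p,c_r)$ to $p$, which makes $\sigma_{new}$ legal and forces $p$ to emit the same message $mn_p'$ as in $\gamma_{new}(C_1,p,c_r)$. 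In $\delta$, client $c_r$ collects the responses $mo_q'$ from the servers $q\neq p$, all emitted in $\sigma_{old}$ before $c_w$ takes a single step and hence containing no value from $\{x_0,\ldots,x_N\}$, together with $p$'s response $mn_p'$, which contains a new value but not all of them. By the one-value property, the tuple that $c_r$ returns for $T_r$ therefore carries a new value for at least one object and an initial value for at least one other, so it is neither $(x_0,\ldots,x_N)$ nor $(x_0^{in},\ldots,x_N^{in})$, contradicting Observation~\ref{ob:gene-granularity}. I expect the delicate point to be exactly the legality and indistinguishability bookkeeping for $\rho_{new}$---specifically, arguing that the lone message $m_1$ emitted at the last step of $\alpha_1$ (possibly by $p$ itself, possibly by some $q\neq p$) does not invalidate the deletion of the other servers' steps---just as this was the subtle part of the proof of Lemma~\ref{lma:gene-base}.
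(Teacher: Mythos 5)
Your proposal is correct and follows essentially the same route as the paper's own proof: pick $p$ via the minimal response set $M$ (using partial replication to get $|M|\geq 2$ and that $mn_p'$ cannot carry all new values), reorder $\alpha_1$ into $\rho_p\cdot\rho_s$ by deleting the other servers' steps (legal because $m_1$ is the first explicit or implicit inter-server message and is only sent at the very last step), and splice $\sigma_{old}\cdot\rho_{new}\cdot\sigma_{new}$ into an execution whose mixed response contradicts Observation~\ref{ob:gene-granularity}. The only differences from the paper are notational ($\rho$, $\delta$ in place of the paper's $\beta$, $\gamma$).
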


\begin{proof}
Assume that 
at $C_1$,
$x_i$ is visible for some $i\in\{0,1, 2,\ldots, N\}$.
To derive a contradiction, we will construct an execution $\gamma$ where in addition to $T_w$, 
a read-only transaction $T_r=(r(X_0)*, r(X_1)*, \ldots, $ $r(X_N)*)$ is invoked 
by a client $c_r \not\in \{ c_w, c_0^{in}, c_1^{in}, \ldots, c_N^{in}\}$,
and $c_r$ returns from $T_r$. 
We will show that $\gamma$ contradicts Observation \ref{ob:gene-granularity}.

We fix the server $p$ in a similar way as in the proof of Lemma \ref{lma:gene-base} given execution $\gamma_{new}(C_1, s, c_r)$ for every server $s$.
Then we construct $\gamma$ by utilizing $\gamma_{old}(CE_0, p, c_r)$
and $\gamma_{new}(C_1, p, c_r)$. 
By Observation~\ref{obs:gene-gold}, $c_r$ returns
$(x_0^{in}, x_1^{in}, \ldots, x_N^{in})$
for $T_r$ in $\gamma_{old}(CE_0, p, c_r)$.
Because by assumption, $x_i$ is visible at $C_1$, Observation~\ref{obs:gene-gnew} implies that 
$c_r$ returns $(x_0, x_1, \ldots, x_N)$ for $T_r$ in $\gamma_{new}(C_1, p, c_r)$.

To construct $\gamma$, we first define an execution $\beta$ and some subsequences of it,
and we study their properties. 
Let $\beta$ be the sequence of steps which are taken from $CE_0$ to $C_1$.
Let $\beta'_p$ be the shortest prefix of $\beta$ which contains all messages sent by $c_w$ to $p$.
Let $\beta_p$ be the subsequence of $\beta'_p$ in which all steps taken by any server $q, q\neq p$ have been removed.
Let $\beta'_s$ be the suffix of $\beta$ after $\beta_p'$.
Let $\beta_s$ be the subsequence of $\beta'_s$ containing only steps by $p$.
Let $\beta_{new}$ be $\beta_p\cdot \beta_s$.
We now argue that $\beta_{new}$ is legal from $CE_{0}$.
Because $\beta'_p$ is a prefix of $\beta$, $\beta'_p$ is legal from $CE_{0}$.
By the definition of $\alpha_1$, $p$ does not receive any message 
from any server $q, q\neq p$ in $\beta'_p$. Moreover,  
because $m_1$ is the first message sent by some server (implicitly or explicitly),
$c_w$ receives no message from any server $q, q\neq p$ such that after the receipt of this message, $c_w$ sends a message to $p$.
By definition, $\beta'_p$ ends with a message sent by $c_w$ to $p$ (if it is not empty). It follows that 
$c_w$ does not receive any message from any server $q, q\neq p$ in $\beta'_p$.
Thus, $\beta_p$ is legal from $CE_0$. Moreover,
$RC(CE_0, \beta'_p)$ and $RC(CE_0, \beta_p)$ are indistinguishable to $p$ and $c_w$.

We next argue that $\beta_s$ is legal from $RC(CE_0, \beta_p')$. 
We notice that in $\beta_s$, only $p$ takes steps.
Then because $RC(CE_0, \beta'_p)$ and $RC(CE_0, \beta_p)$ are indistinguishable to $p$,  
by arguing that $\beta_s$ is legal from $RC(CE_0, \beta_p')$,
we also argue that $\beta_s$ is legal from $RC(CE_0, \beta_p)$.
Below we argue that $\beta_s$ is indeed legal from $RC(CE_0, \beta_p')$. 
By the definition of $\alpha_1$, 
$p$ does not receive any message from any server $q, q\neq p$ in $\beta$.
By definition of $\beta_s'$, $c_w$ does not send any message to $p$ in $\beta'_s$, 
(because all messages from $c_w$ to $p$ are sent in $\beta_p'$),
so any message $p$ receives from $c_w$ in $\beta_s$ has been sent in $\beta_p'$. 
Thus $\beta_s$ is legal from $RC(CE_0, \beta_p')$,
 and $\beta_s$ is legal from $RC(CE_0, \beta_p)$. 
Moreover, 
by arguing that $\beta_s$ is legal from $RC(CE_0, \beta_p')$,
we also show that $RC(CE_0, \beta_p'\cdot\beta_s)$ and $C_1 = RC(CE_0, \beta_p'\cdot\beta_s')$ are indistinguishable to $p$. 
Because $RC(CE_0, \beta'_p)$ and $RC(CE_0, \beta_p)$ are indistinguishable to $p$,
it follows that 
$RC(CE_0, \beta_{new})$ and $C_1$ are indistinguishable to $p$.

Now we are ready to construct $\gamma$ using 
executions 
$\sigma_{old}(CE_0, p, c_r)$,
$\sigma_{new}(C_1, p, c_r)$ 
and $\beta_{new}$ 
and referring to configuration $C_{new}(C_1, p,$ $ c_r)$.
(For simplicity, we omit 
$(CE_0, p, c_r)$
and 
$(C_1, p, c_r)$
from the notations below.)
Starting from $CE_0$, 
$\sigma_{old}$ is first applied.
Recall that for any server $q, q\neq p$, in its last step of $\sigma_{old}$, $q$ sends at most one message $mo_q'$ to $c_r$.
Since $CE_0$ and $RC(CE_0, \sigma_{old})$ are indistinguishable to $c_w$ and $p$, 
and since only $c_w$ and $p$ take steps in $\beta_{new}$, 
$\beta_{new}$ is legal from $RC(CE_0, \sigma_{old})$.
The resulting configuration is $RC(CE_0, \sigma_{old}\cdot\beta_{new})$.
Since the processes which take steps in $\sigma_{old}$ and $\beta_{new}$ are disjoint,
then $RC(CE_0, \sigma_{old} \cdot \beta_{new})$ and $RC(CE_0,  \beta_{new}\cdot \sigma_{old} )$ are indistinguishable.
Recall that to $p$, $RC(CE_0, \beta_{new})$ and $C_1$ are indistinguishable.
Therefore, to $p$, 
$RC(C_0,  \beta_{new}\cdot \sigma_{old} )$ and $C_{new}$
are indistinguishable.
Because only $p$ takes steps in $\sigma_{new}$,
$\sigma_{new}$ is therefore legal from 
$RC(CE_0,  \beta_{new}\cdot \sigma_{old} )$
and also legal from $RC(CE_0,  \sigma_{old}\cdot\beta_{new})$.
We apply it from 
$RC(CE_0,  \sigma_{old}\cdot\beta_{new})$.
Recall that in the last step of $\sigma_{new}$, $p$ sends message $mn_p'$ to $c_r$.
Finally, $mo_q'$ for any server $q, q\neq p$ (if $q$ sends a response) and $mn_p'$ are delivered in $\gamma$, and next, 
$c_r$ takes steps. Since $T_r$ is a fast transaction, $T_r$ completes. 
This concludes the construction of $\gamma$.

In $\gamma$, $c_r$ executes only transaction $T_r$.
Thus, $c_r$ decides the response for $T_r$ based solely on  the values included in $mo_q'$ for any server $q, q\neq p$ and $mn_p'$.
Since $mo_q'$ for any server $q, q\neq p$ is sent before $c_w$ takes steps, $mo_q'$ contains no value in $\{x_0, x_1, \ldots, x_N\}$.
Recall that $mn_p'$ is sent in $\gamma_{new}$.
By the one-value messages property, $mn_p'$ contains only values in $\{x_0, x_1, \ldots, x_N\}$. 
As the system is partially replicated, $mn_p'$ cannot contain all values in $\{x_0, x_1, \ldots, x_N\}$. 
It follows that in $\gamma$, the return value of $c_r$ for $T_r$ 
can be neither $(x_0, x_1, \ldots, x_N)$ nor $(x_0^{in}, x_1^{in}, \ldots, x_N^{in})$.
This contradicts Observation \ref{ob:gene-granularity}.
\end{proof}

\subsection{The Infinite Execution}

We now continue to prove that $\alpha$ is infinite. We do so by proving that there is an infinite sequence of executions 
$\alpha_1,\alpha_2, \ldots$ such that, all of them are distinct prefixes of $\alpha$. 
Let $\alpha_0$ be an empty execution.

\begin{lemma}
\label{lma:gene-induction}
For any integer $k\geq 1$, there exists an execution $\alpha_k$, legal from $CE_0$, 
in which only one transaction $T_w=(w(X_0)x_0, w(X_1)x_1, \ldots,$ $ w(X_N)x_N)$ is executed by client $c_w$.
Let $C_k$ be the configuration that results when $\alpha_k$ is applied from $CE_0$.
Then, the following hold:
\begin{enumerate}
\item\label{claim:gene-msg} $\alpha_k = \alpha_{k-1}\cdot\alpha_k'$, where in $\alpha_k'$, for at least one server $q$ in the system, at least one of the following occurs: 
\begin{itemize}
\item a message is sent by $q$ to a server other than $q$, or
\item a message is sent by $q$ to $c_w$ so that after $c_w$ receives this message, $c_w$ sends a message to a server other than $q$.
\end{itemize}
\item\label{claim:gene-vis} 
In $C_k$, $x_0, x_1, \ldots, x_N$ are not visible.
\end{enumerate}
\end{lemma}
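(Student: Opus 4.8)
The plan is to prove Lemma \ref{lma:gene-induction} by induction on $k$, handling the two claims in sequence exactly as in the two-server argument, but with every construction now started from $C_{k-1}$ rather than $CE_0$ and with the distinguished-server machinery of the general constructions. The base case $k=1$ is already in hand: claim \ref{claim:gene-msg} for $k=1$ is supplied by Lemma \ref{lma:gene-base} (which fixes the first witnessing message $m_1$, hence $\alpha_1$ and $C_1$), and claim \ref{claim:gene-vis} for $k=1$ is precisely Lemma \ref{lma:gene-e0}. So I would fix $k>1$, assume both claims for all $j<k$, and reproduce the base-case reasoning from $C_{k-1}$, using the induction hypothesis (claim \ref{claim:gene-vis} at $C_{k-1}$) in place of the buffer-empty, nothing-written-yet situation at $CE_0$.

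For claim \ref{claim:gene-msg} I would argue by contradiction: suppose that in the solo continuation of $T_w$ from $C_{k-1}$ no server sends an explicit message to another server and no server sends an implicit message (to $c_w$ that leads $c_w$ to message another server). Minimal progress for write-only transactions yields a first configuration $C_v$ after $C_{k-1}$ in which all of $x_0,\dots,x_N$ are visible; let $\beta$ be the steps from $C_{k-1}$ to $C_v$. As in Lemma \ref{lma:gene-base}, I would fix a distinguished server $p$ from the common response set $M$ that $c_r$ collects in $\gamma_{new}(C_v,s,c_r)$ (indistinguishable across all $s$), taking $M$ minimal and non-empty so that partial replication forces $|M|\geq 2$ and $p$ contributes a new value. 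I would then carve out $\beta_{new}=\beta_p\cdot\beta_s$, keeping $c_w$'s steps up to its last message to $p$ together with only $p$'s steps, prove it legal from $C_{k-1}$, and show $RC(C_{k-1},\beta_{new})$ is indistinguishable from $C_v$ to $p$. Assembling $\gamma=\sigma_{old}(C_{k-1},p,c_r)\cdot\beta_{new}\cdot\sigma_{new}(C_v,p,c_r)$, the servers $q\neq p$ answer in $\sigma_{old}$ before $T_w$ makes progress, so by Observation \ref{obs:gene-gold} and the one-value property they report initial values, whereas $p$ answers in $\sigma_{new}$ after progress, so by Observation \ref{obs:gene-gnew} it reports a new value; since no server stores all objects, $p$'s message cannot carry every new value, so $c_r$ returns a mix of initial and new values, contradicting Observation \ref{ob:gene-granularity}. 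The first message witnessing the failure of the hypothesis defines $m_k$ (and thereby the round-$k$ server of the intended infinite sequence of server ids); I then set $\alpha_k'$ to run $T_w$ solo from $C_{k-1}$ until $m_k$ is sent, $\alpha_k=\alpha_{k-1}\cdot\alpha_k'$, and $C_k=RC(CE_0,\alpha_k)$.

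For claim \ref{claim:gene-vis} I would again argue by contradiction, assuming some $x_i$ is visible at $C_k$, and build an execution $\delta$ of the same shape. Here $\beta$ is replaced by $\rho=\alpha_k'$ (the steps from $C_{k-1}$ to $C_k$) and I define $\rho_{new}=\rho_p\cdot\rho_s$ analogously, this time using the minimality of $m_k$ (rather than the contradiction hypothesis) to show that $p$ receives no server-to-server message sent inside $\rho$. Then $\delta=\sigma_{old}(C_{k-1},p,c_r)\cdot\rho_{new}\cdot\sigma_{new}(C_k,p,c_r)$; the servers $q\neq p$ report initial values by Observation \ref{obs:gene-gold} and $p$ reports a new value by Observation \ref{obs:gene-gnew} (using visibility of $x_i$ at $C_k$), again producing a mix that contradicts Observation \ref{ob:gene-granularity}.

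The main obstacle will be the legality and indistinguishability bookkeeping now that the window does not begin at a buffer-empty configuration: at $C_{k-1}$ there may be messages in transit, so after deleting the steps of every server other than $p$ I must argue that each server-to-server message $p$ receives during $\beta$ (respectively $\rho$) was already in transit at $C_{k-1}$ and is therefore unaffected by the deletion. The contradiction hypothesis in claim \ref{claim:gene-msg}, and the minimality of $m_k$ in claim \ref{claim:gene-vis}, are exactly what forbid new server-to-server messages inside the window, and hence rule out the chains of dissemination that could otherwise let $p$ learn the new values indirectly from another server. A secondary subtlety absent from the two-server setting is that under partial replication several servers may answer for a single object; the minimal-response-set selection of $p$ together with the fact that no server stores all objects is what guarantees that the reconstructed read genuinely sees both an initial value (from some $q\neq p$) and a new value (from $p$), so that the contradiction with Observation \ref{ob:gene-granularity} goes through.
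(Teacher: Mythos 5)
Your proposal is correct and follows essentially the same route as the paper's proof: induction with the base case delegated to Lemmas \ref{lma:gene-base} and \ref{lma:gene-e0}, claim \ref{claim:gene-msg} by contradiction via $\gamma=\sigma_{old}\cdot\beta_{new}\cdot\sigma_{new}$ with the distinguished server $p$ drawn from the minimal response set $M$ (with $|M|\ge 2$ by partial replication), and claim \ref{claim:gene-vis} via the analogous $\delta$ built from $\rho_{new}=\rho_p\cdot\rho_s$ using the minimality of $m_k$. You also correctly identify the two points the paper is careful about: messages in transit at $C_{k-1}$ must be attributed to $\alpha_{k-1}$ so the deletion of other servers' steps is harmless, and the choice of $p$ from $M$ plus the no-server-stores-everything assumption is what forces a genuine mix of old and new values contradicting Observation \ref{ob:gene-granularity}.
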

\begin{proof}[Proof of Lemma \ref{lma:gene-induction}]
By induction on $k$.  For the base case ($k=1$), 
let $\alpha_0$ be the empty execution and $\alpha_1' = \alpha_1$. 
Then, Lemma \ref{lma:gene-base} implies that claim \ref{claim:gene-msg} holds. Moreover, Lemma \ref{lma:gene-e0} implies that claim \ref{claim:gene-vis} holds.

For the induction hypothesis, fix any integer $k\geq 2$ and assume 
that the claim holds for any integer $j$, $1\leq j\leq k-1$. 
We prove that the claim holds for $k$.\\

The proof of claim \ref{claim:gene-msg} follows similar arguments as those in the proof of Lemma \ref{lma:gene-base}.
We assume that the claim does not hold. We construct an execution $\gamma$ 
and show that $\gamma$ contradicts Observation \ref{ob:gene-granularity}.
By the induction hypothesis (claim \ref{claim:gene-vis}),
$x_0, x_1, \ldots, x_N$ are not visible in $C_{k-1}$.
By Observation \ref{ob:gene-not-visible}, 
there exists some client $c_r^{k}$ not in $\{c_w, c_0^{in}, c_1^{in}, \ldots, c_N^{in}\}$
such that if, starting from $C_{k-1}$, 
$c_r^{k}$ invokes $T_r = (r(X_0)*, r(X_1)*, \ldots, r(X_N)*)$, then $c_r^{k}$ returns 
$(x_0^{in}, x_1^{in}, \ldots, x_N^{in})$.
The construction of $\gamma$ utilizes execution $\sigma_{old}($ $C_{k-1}, p,c_r^{k})$
in a way similar that $\sigma_{old}(C_0,p,c_r)$ has been utilized in the construction of $\gamma$ in Lemma~\ref{lma:gene-base}.
Starting from $C_{k-1}$, $c_w$ and all servers take steps until $x_0, x_1, \ldots, x_N$ are visible.
We then define configuration $C_v$, and executions $\beta$ and $\beta_{new}$ 
in a way similar to that defined in the proof of Lemma~\ref{lma:gene-base} (but with $C_{k-1}$ playing the role
of $CE_0$). 
We finally define $\gamma$ based on $\sigma_{old}(C_{k-1}, p,c_r^{k})$, $\sigma_{new}(C_v,p,c_r^{k})$
and $\beta_{new}$, and we argue that in $\gamma$, $c_r^{k}$ returns a response for $T_r$
that contradicts Observation~\ref{ob:gene-granularity}.    

We continue with the details of the proof. 
To derive a contradiction, assume that we let $c_w$ and all servers  take steps starting from $C_{k-1}$ and the following holds:
for every server $q$,
\begin{itemize}
\item $q$ sends no message to a server other than $q$;
\item $q$ sends no message to $c_w$ so that after $c_w$ receives the message, $c_w$ sends a message to a server other than $q$.
\end{itemize}
In other words, $\alpha_k'$ cannot be defined.

To derive a contradiction, we will construct an execution $\gamma$ where in addition to $T_w$, a read-only transaction $T_r=(r(X_0)*, r(X_1)*, \ldots, $ $ r(X_N)*)$ is invoked by a client $c_r^{k}$ 
not in $\{c_w, c_0^{in}, c_1^{in}, \ldots, c_N^{in}\}$
and $c_r^{k}$ returns from $T_r$. We will show that $\gamma$ contradicts Observation \ref{ob:gene-granularity}.

To construct $\gamma$, we first define an execution $\beta$ and some subsequences of it,
and we study their properties. 
Starting from $C_{k-1}$, $c_w$ invokes $T_w$ and executes solo
(i.e., only $c_w$ and servers take steps) until $x_0, x_1, \ldots, x_N$ are visible 
(minimal progress for write-only transactions implies that there is a configuration 
in which $x_0, x_1,\ldots, x_N$ are indeed visible).
Let $C_v$ be the first configuration where $x_0, x_1, \ldots, x_N$ are visible.
Let $\beta$ be the sequence of steps which are taken from $C_{k-1}$ to $C_v$. 
For any server $p$, consider execution $\gamma_{new}(C_v, p, c_r^{k})$.
Because in $\gamma_{new}(C_v, p, c_r^{k})$, starting from $C_v$, only the messages sent by $c_w$  are delivered before $c_r^{k}$ takes the steps to complete $T_r$, thus to any server $p$, $\gamma_{new}(C_v, p, c_r^{k})$ and $\gamma_{new}(C_v, s, c_r^{k})$ for any server $s, s\neq p$ are indistinguishable.
Thus when $c_r^{k}$ takes the steps to complete $T_r$, $c_r^{k}$ receives the same set of responses in $\gamma_{new}(C_v, p, c_r^{k})$ for any server $p$.
In order to exclude empty and garbage responses, let $M$ be the smallest set of non-empty responses where each response contains a value that has been written by some write transaction. As the system is partially replicated, the size of $M$ is at least two.
W.l.o.g., we fix any server $p$ which sends a response included in $M$ and thus fix execution $\gamma_{new}(C_v, p, c_r^{k})$.
Then, Observation~\ref{obs:gene-gnew} implies that the response for $T_r$ in $\gamma_{new}(C_v, p, c_r^{k})$
is $(x_0, x_1, \ldots, x_N)$. 
For simplicity, we omit 
$(C_v, p, c_r^{k})$
from the notation in the rest of the proof of this lemma.

Let $\beta'_p$ be the shortest prefix of $\beta$ which contains all messages sent by $c_w$ to $p$.
Let $\beta_p$ be the subsequence of $\beta'_p$ in which all steps taken by any server $q, q\neq p$ have been removed.
Let $\beta'_s$ be the suffix of $\beta$ after
$\beta_p'$.
Let $\beta_s$ be the subsequence of $\beta'_s$ containing only steps by $p$.
Let $\beta_{new}$ be $\beta_p\cdot \beta_s$.
We now argue that $\beta_{new}$ is legal from $C_{k-1}$.
Because $\beta'_p$ is a prefix of $\beta$, $\beta'_p$ is legal from $C_{k-1}$.
By assumption, if $p$ receives any message 
from any server $q, q\neq p$ in $\beta'_p$, the message has been sent in $\alpha_{k-1}$, i.e., before the prefix $\beta'_p$.
Moreover,  
by assumption, after $\alpha_{k-1}$,
any server $q, q\neq p$ sends no message to $c_w$ so that after $c_w$ receives the message, $c_w$ sends a message to $p$.
By definition, $\beta'_p$ ends with a message sent by $c_w$ to $p$ (if $\beta'_p$ is not empty). It follows that 
if $c_w$ receives any message from any server $q, q\neq p$ in $\beta'_p$, then the message has been sent in $\alpha_{k-1}$, i.e., before the prefix $\beta'_p$.
Thus, $\beta_p$ is legal from $C_{k-1}$. Moreover,
$RC(C_{k-1}, \beta'_p)$ and $RC(C_{k-1}, \beta_p)$ are indistinguishable to $p$ and $c_w$.

We next argue that $\beta_s$ is legal from $RC(C_{k-1}, \beta_p')$. 
We notice that in $\beta_s$, only $p$ takes steps.
Then because $RC(C_{k-1}, \beta'_p)$ and $RC(C_{k-1}, \beta_p)$ are indistinguishable to $p$,  
by arguing that $\beta_s$ is legal from $RC(C_{k-1}, \beta_p')$,
we also argue that $\beta_s$ is legal from $RC(C_{k-1}, \beta_p)$.
Below we argue that $\beta_s$ is indeed legal from $RC(C_{k-1}, \beta_p')$. 
By assumption, 
if $p$ receives any message from any server $q, q\neq p$ in $\beta_s$,
then the message has been sent in $\alpha_{k-1}$, i.e., before $\beta$.
By definition of $\beta_s'$, $c_w$ does not send any message to $p$ in $\beta'_s$, 
(because all messages from $c_w$ to $p$ are sent in or before $\beta_p'$),
so any message $p$ receives from $c_w$ in $\beta_s$ has been sent in or before $\beta_p'$. 
Thus $\beta_s$ is legal from $RC(C_{k-1}, \beta_p')$,
 and $\beta_s$ is legal from $RC(C_{k-1}, \beta_p)$. 
Moreover, 
by arguing that $\beta_s$ is legal from $RC(C_{k-1}, \beta_p')$,
we also show that $RC(C_{k-1}, \beta_p'\cdot\beta_s)$ and $C_v = RC(C_{k-1}, \beta_p'\cdot\beta_s')$ are indistinguishable to $p$. 
Because $RC(C_{k-1}, \beta'_p)$ and $RC(C_{k-1}, \beta_p)$ are indistinguishable to $p$,
it follows that 
$RC(C_{k-1}, \beta_{new})$ and $C_v$ are indistinguishable to $p$.

To construct $\gamma$, we have to utilize executions $\sigma_{old}(C_{k-1}, p, c_r^{k})$
and $\sigma_{new}(C_v, p, c_r^{k})$
 and refer to configuration $C_{new}(C_v, p, c_r^{k})$.
(For simplicity, we omit $(C_{k-1}, p, c_r^{k})$
and $(C_v, p, c_r^{k})$ from the notations below.)
Specifically, starting from $C_{k-1}$, 
$\sigma_{old}$ is first applied.
Recall that for any server $q, q\neq p$, in its last step of $\sigma_{old}$, $q$ sends at most one message $mo_q'$ to $c_r^{k}$.
Since $C_{k-1}$ and $RC(C_{k-1}, \sigma_{old})$ are indistinguishable to $c_w$ and $p$, and since only $c_w$ and $p$ 
take steps in $\beta_{new}$, $\beta_{new}$ is legal from $RC(C_{k-1}, \sigma_{old})$. So, to construct $\gamma$,
$\beta_{new}$ is applied from $RC(C_{k-1}, \sigma_{old})$.
The resulting configuration is $RC(C_{k-1}, \sigma_{old} \cdot \beta_{new})$.
Since the processes which take steps in $\sigma_{old}$ and $\beta_{new}$ are disjoint,
then $RC(C_{k-1}, \sigma_{old} \cdot \beta_{new})$ and $RC(C_{k-1},  \beta_{new}\cdot \sigma_{old} )$ are indistinguishable.
Recall that to $p$, $RC(C_{k-1}, \beta_{new})$ and $C_v$ are indistinguishable.
Therefore, to $p$,
$RC(C_{k-1},  \beta_{new}\cdot \sigma_{old} )$ and $C_{new}$
are indistinguishable.
Because only $p$ takes steps in $\sigma_{new}$,
$\sigma_{new}$ is therefore legal from 
$RC(C_{k-1},  \beta_{new}\cdot \sigma_{old} )$
and also legal from $RC(C_{k-1},  \sigma_{old}\cdot\beta_{new})$.
We apply it from 
$RC(C_{k-1},  \sigma_{old}\cdot\beta_{new})$.
Recall that in the last step of $\sigma_{new}$, $p$ sends a message $mn_p'$ to $c_r^{k}$.
Finally, $mo_q'$ for any server $q, q\neq p$ (if $q$ sends a response) and $mn_p'$ are delivered in $\gamma$, and next,  
$c_r^{k}$ takes steps. Since $T_r$ is a fast transaction, $T_r$ completes. 
This concludes the construction of $\gamma$. 

In $\gamma$, $c_r^{k}$ executes only transaction $T_r$.
Thus, $c_r^{k}$ decides the response for $T_r$ based solely on  the values included in $mo_q'$ for any server $q, q\neq p$ and $mn_p'$.
Since $mo_q'$ for any server $q, q\neq p$ is sent in $\gamma_{old}(C_{k-1}, p,c_r^{k})$, 
by Observation \ref{obs:gene-gold} and the one-value messages property, 
$mo_q'$ for any server $q, q\neq p$ contains no value in $\{x_0, x_1, \ldots, x_N\}$. 
Recall that $mn_p'$ is sent in $\gamma_{new}$.
By Observation \ref{obs:gene-gnew} and the one-value messages property, $mn_p'$ contains values in $\{x_0, x_1, \ldots, x_N\}$. 
It follows that in $\gamma$, the return value of $c_r^{k}$ for $T_r$ 
can be neither $(x_0^{in}, x_1^{in}, \ldots, x_N^{in})$ nor
$(x_0, x_1, \ldots, x_N)$.
This contradicts Observation \ref{ob:gene-granularity}.\\

To prove claim~\ref{claim:gene-vis}, we use similar arguments as those in the proof of Lemma~\ref{lma:gene-e0}
with configuration $C_{k-1}$ playing the role of $CE_0$ and configuration $C_k$ playing the role of $C_1$.
We assume that the claim does not hold. We construct an execution $\delta$ 
and show that $\delta$ contradicts Observation \ref{ob:gene-granularity}.
The construction of $\delta$ utilizes the same execution $\sigma_{old}(C_{k-1}, p,c_r^{k})$ as in the proof of claim \ref{claim:gene-msg}.
Starting from $C_{k-1}$, $c_w$ and all servers take steps as defined in $\alpha_k'$.
We then define executions $\rho$ and $\rho_{new}$ 
in a way similar to $\beta$ and $\beta_{new}$  defined in the proof of Lemma~\ref{lma:gene-e0} (but with $C_{k-1}$ playing the role
of $CE_0$ and $C_{k}$ playing the role
of $C_1$). 
We finally define $\delta$ based on $\sigma_{old}(C_{k-1}, p, c_r^{k})$, $\sigma_{new}(C_k,p,c_r^{k})$
and $\rho_{new}$, and we argue that in $\delta$, $c_r^{k}$ returns a response for $T_r$
that contradicts Observation~\ref{ob:gene-granularity}.    

We now present the details of the proof. 
Assume that in $C_k$, $x_i$ is visible for some $i\in\{0,1, \ldots, N\}$.
We will derive a contradiction by constructing an execution $\delta$ where in addition to $T_w$, a read-only transaction $T_r=(r(X_0)*, r(X_1)*, \ldots, r(X_N)*$ 
is invoked by a client $c_r^{k}$ 
not in $\{c_w, c_0^{in}, c_1^{in}, \ldots, c_N^{in}\}$
and $c_r^{k}$ returns from $T_r$. We will show that $\delta$ contradicts Observation \ref{ob:gene-granularity}.

We construct $\delta$ by utilizing $\gamma_{old}(C_{k-1}, p, c_r^{k})$
and $\gamma_{new}(C_k, p, c_r^{k})$. 
By Observation~\ref{obs:gene-gold}, $c_r^{k}$ returns $(x_0^{in}, x_1^{in}, \ldots, x_N^{in})$
for $T_r$ in $\gamma_{old}(C_{k-1}, $ $p, c_r^{k})$.
Because by assumption, $x_i$ is visible at $C_k$, Observation~\ref{obs:gene-gnew} implies that 
$c_r^{k}$ returns $(x_0, x_1, \ldots, x_N)$ for $T_r$ in $\gamma_{new}(C_k, p_{k\%2}, c_r^{k})$.

To construct $\delta$, we first define an execution $\rho$ and some subsequences of it,
and we study their properties.
Let $\rho$ be the sequence of steps which are taken from $C_{k-1}$ to $C_k$.
Let $\rho'_p$ be the shortest prefix of $\rho$ which contains all messages sent by $c_w$ to $p$.
Let $\rho_p$ be the subsequence of $\rho'_p$ in which all steps taken by any server $q, q\neq p$ have been removed.
Let $\rho'_s$ be the suffix of $\rho$ after $\rho_p'$.
Let $\rho_s$ be the subsequence of $\rho'_s$ containing only steps by $p$.
Let $\rho_{new}$ be $\rho_p\cdot \rho_s$.
We now argue that $\rho_{new}$ is legal from $C_{k-1}$.
Because $\rho'_p$ is a prefix of $\rho$, $\rho'_p$ is legal from $C_{k-1}$.
By the definition of $\alpha_k$, if $p$ receives any message 
from any server $q, q \neq p$ in $\beta'_p$, the message has been sent in $\alpha_{k-1}$, i.e., before the prefix $\beta'_p$.
Moreover,  
because $\alpha_k'$ ends with the first message sent by some server to another server (implicitly or explicitly),
$c_w$ receives no message from any server $q, q\neq p$ such that after the receipt of this message, $c_w$ sends a message to $p$.
By definition, $\rho'_p$ ends with a message sent by $c_w$ to $p$ (if $\rho'_p$ is not empty). It follows that 
if $c_w$ receives any message from any server $q, q\neq p$ in $\rho'_p$, then the message has been sent in $\alpha_{k-1}$, i.e., before the prefix $\rho'_p$.
Thus, $\rho_p$ is legal from $C_{k-1}$. Moreover,
$RC(C_{k-1}, \rho'_p)$ and $RC(C_{k-1}, \rho_p)$ are indistinguishable to $p$ and $c_w$.

We next argue that $\rho_s$ is legal from $RC(C_{k-1}, \rho_p')$. 
We notice that in $\rho_s$, only $p$ takes steps.
Then because $RC(C_{k-1}, \rho'_p)$ and $RC(C_{k-1}, \rho_p)$ are indistinguishable to $p$,  
by arguing that $\rho_s$ is legal from $RC(C_{k-1}, \rho_p')$,
we also argue that $\rho_s$ is legal from $RC(C_{k-1}, \rho_p)$.
Below we argue that $\rho_s$ is indeed legal from $RC(C_{k-1}, \rho_p')$. 
By the definition of $\alpha_k$, 
if $p$ receives any message from any server $q, q\neq p$ in $\rho_s$,
then the message has been sent in $\alpha_{k-1}$, i.e., before $\rho$.
By definition of $\rho_s'$, 
$c_w$ does not send any message to $p$ in $\rho'_s$, 
(because all messages from $c_w$ to $p$ are sent in or before $\rho_p'$),
so any message $p$ receives from $c_w$ in $\rho_s$ has been sent in or before $\rho_p'$. 
Thus $\rho_s$ is legal from $RC(C_{k-1}, \rho_p')$,
 and $\rho_s$ is legal from $RC(C_{k-1}, \rho_p)$. 
Moreover, 
by arguing that $\rho_s$ is legal from $RC(C_{k-1}, \rho_p')$,
we also show that $RC(C_{k-1}, \rho_p'\cdot\rho_s)$ and $C_k = RC(C_{k-1}, \rho_p'\cdot\rho_s')$ are indistinguishable to $p$. 
Because $RC(C_{k-1}, \rho'_p)$ and $RC(C_{k-1}, \rho_p)$ are indistinguishable to $p$,
it follows that 
$RC(C_{k-1}, \rho_{new})$ and $C_k$ are indistinguishable to $p$.

Now we are ready to construct $\delta$ using 
executions 
$\sigma_{old}(C_{k-1}, p, c_r^{k})$,
$\sigma_{new}(C_k, p, c_r^{k})$ 
and $\rho_{new}$,
referring to configuration $C_{new}(C_k, p, c_r^{k})$. 
(For simplicity, we omit $(C_{k-1}, p, c_r^{k})$
and $(C_k, p, c_r^{k})$
from the notations below, which abuses $\sigma_{new}$ and $C_{new}$.)
Starting from $C_{k-1}$, 
$\sigma_{old}$ is first applied.
Recall that in its last step of $\sigma_{old}$, any server $q, q\neq p$ sends at most one message $mo_q'$ to $c_r^{k}$.
Since $C_{k-1}$ and $RC(C_{k-1}, \sigma_{old})$ are indistinguishable to $c_w$ and $p$, 
and since only $c_w$ and $p$ take steps in $\rho_{new}$, 
$\rho_{new}$ is legal from $RC(C_{k-1}, \sigma_{old})$.
The resulting configuration is $RC(C_{k-1}, \sigma_{old}\cdot\rho_{new})$.
Since the processes which take steps in $\sigma_{old}$ and $\rho_{new}$ are disjoint,
then $RC(C_{k-1}, \sigma_{old} \cdot \rho_{new})$ and $RC(C_{k-1},  \rho_{new}\cdot \sigma_{old} )$ are indistinguishable.
Recall that to $p$, $RC(C_{k-1}, \rho_{new})$ and $C_k$ are indistinguishable.
Therefore, to $p$, 
$RC(C_{k-1},  \rho_{new}\cdot \sigma_{old} )$ and $C_{new}$
are indistinguishable.
Because only $p$ takes steps in $\sigma_{new}$,
$\sigma_{new}$ is therefore legal from 
$RC(C_{k-1},  \rho_{new}\cdot \sigma_{old} )$
and also legal from $RC(C_{k-1},  \sigma_{old}\cdot\rho_{new})$.
We apply it from 
$RC(C_{k-1},  \sigma_{old}\cdot\rho_{new})$.
Recall that in the last step of $\sigma_{new}$, $p$ sends message $mn_p'$ to $c_r^{k}$.
Finally, $mo_q'$ for any server $q, q\neq p$ (if $q$ sends a response) and $mn_p'$ are delivered in $\delta$, and next, 
$c_r^{k}$ takes steps. Since $T_r$ is a fast transaction, $T_r$ completes. 
This concludes the construction of $\delta$.

In $\delta$, $c_r^{k}$ executes only transaction $T_r$.
Thus, $c_r^{k}$ decides the response for $T_r$ based solely on  the values included in $mo_q'$ for any server $q, q\neq p$ and $mn_p'$.
Since $mo_q'$ for any server $q, q\neq p$ is sent in $\gamma_{old}(C_{k-1}, p,c_r^{k})$, 
by Observation \ref{obs:gene-gold} and the one-value messages property, 
$mo_q'$ for any server $q, q\neq p$ contains no value in $\{x_0, x_1, \ldots, x_N\}$.
Recall that $mn_p'$ is sent in $\delta_{new}$.
By Observation \ref{obs:gene-gnew} and the one-value messages property, $mn_p'$ contains only values in $\{x_0, x_1, \ldots, x_N\}$.
It follows that in $\delta$, the return value of $c_r^{k}$ for $T_r$ 
can be neither $(x_0^{in}, x_1^{in}, \ldots, x_N^{in})$ 
nor $(x_0, x_1, \ldots, x_N)$.
This contradicts Observation \ref{ob:gene-granularity}.
It follows that claim \ref{claim:gene-vis} is true for $k$.
\end{proof}

\end{document}